\documentclass[11pt]{article}

\usepackage[T1]{fontenc}
\usepackage{lmodern}
\usepackage[margin=1in]{geometry}
\usepackage{amsmath,amssymb,amsthm,mathtools}
\usepackage{bm}
\usepackage{microtype}
\usepackage{comment}
\usepackage[numbers]{natbib} 
\usepackage[hidelinks]{hyperref}
\usepackage[nameinlink,capitalize,noabbrev]{cleveref}
\usepackage{enumitem}
\usepackage[normalem]{ulem}
\usepackage{cancel}

\renewcommand{\xout}[1]{\ifmmode\cancel{#1}\else\sout{#1}\fi}

\usepackage{dsfont}
\usepackage{bbm} 
\usepackage[normalem]{ulem} 
\usepackage{xcolor} 
\usepackage{braket} 
\usepackage{float}  
\usepackage{lipsum} 

\allowdisplaybreaks 

\newcommand{\ketbra}[2]{\ket{#1}\!\!\bra{#2}}

\newcommand{\idd}[1]{1}%
\newcommand{\id}{{\mathbbm{1}}}

\newcommand\mpwS[1]{[MW:{\let\helpcmd\sout\parhelp#1\par\relax\relax}] }
\long\def\parhelp#1\par#2\relax{%
	\helpcmd{#1}\ifx\relax#2\else\par\parhelp#2\relax\fi%
}

\newcommand{\nocontentsline}[3]{}
\newcommand{\tocless}[2]{\bgroup\let\addcontentsline=\nocontentsline#1{#2}\egroup}

\newtheorem{theorem}{Theorem}
\newtheorem{proposition}[theorem]{Proposition}
\newtheorem{lemma}[theorem]{Lemma}

\theoremstyle{definition}
\newtheorem{definition}[theorem]{Definition}
\newtheorem{remark}[theorem]{Remark}

\newcommand{\C}{\mathbb C}
\newcommand{\R}{\mathbb R}
\newcommand{\Herm}{\operatorname{Herm}}

\newcommand{\cB}{\mathcal B}

\newcommand{\Gmap}{\Gamma}

\newcommand{\Aalg}{\mathcal A}

\newcommand{\Ct}{\mathcal{C}_{\mathrm{tensor}}^{\mathrm{fin}}}
\newcommand{\Cc}{\mathcal{C}_{\mathrm{commuting}}^{\mathrm{fin}}}
\newcommand{\CtR}{\mathcal{C}_{\mathrm{tensor},\mathbb R}^{\mathrm{fin}}}
\newcommand{\CcR}{\mathcal{C}_{\mathrm{commuting},\mathbb R}^{\mathrm{fin}}}
\newcommand{\CRtR}{\mathcal{C}_{\mathrm{R\text{-}tensor},\mathbb R}^{\mathrm{fin}}}
\newcommand{\TrX}[2]{\mathrm{Tr}_{#1}\left[ #2 \right]}

\title{Fermions and the Tsirelson Problem in Real Quantum Theory}
\author{%
Timoth{\'e}e Hoffreumon\textsuperscript{1}
\qquad
Mischa P. Woods\textsuperscript{2}\\[0.75em]
\small\textsuperscript{1}Mathematical Institute, Slovak Academy of Sciences,
Bratislava, Slovakia\\
\small\textsuperscript{2}ENS Lyon, Inria, Lyon, France}
\date{}

\begin{document}
\maketitle

\begin{abstract}
In a previous work, we showed that 
 real quantum theory (RQT) cannot be experimentally distinguished from ordinary 
quantum theory (QT). In this paper, we investigate whether this indistinguishability survives two natural extensions of the setting: allowing separated measurements to be represented by commuting rather than tensor-product observables, and imposing superselection rules, as in fermionic information theory. We show that it does in both cases.

Our first result is a positive answer to the Tsirelson problem in RQT: commuting and tensor-product measurements generate the same finite-dimensional correlations. This is surprisingly not a trivial consequence of the standard complex proof, as it does not carry over to the reals. The proof obstruction becomes particularly relevant in the presence of superselection rules, which is why we first establish the equivalence in RQT before extending the framework to superselected RQT.

Our second result is a local, real quantum formulation of superselected theories. We extend the RQT framework to accommodate a representation of superselection over the reals and show, in particular, that it admits a fermionic version whose predictions are experimentally indistinguishable from those of fermionic information theory. More generally, the construction extends to all superselected QTs.
\end{abstract}

\tableofcontents

\section{Introduction}

In quantum theory, a superselection rule is most naturally understood as a restriction on which preparations, measurements, and transformations are physically available, rather than as a replacement of the underlying system.  Fix a group $G$, a finite-dimensional complex Hilbert space $\mathcal H_A$ for each system $A$, and a unitary representation
\begin{equation}
	U_A:G\longrightarrow\mathrm U(\mathcal H_A).
\end{equation}
Writing
\begin{equation}
	\mathcal U_A(g)(X):=U_A(g)XU_A(g)^\dag\:,
\end{equation}
the associated superselected, or invariant, quantum theory permits precisely those states and effects that are fixed by this action and those transformations that are equivariant under it:
\begin{align}
	\mathcal U_A(g)(\rho)&=\rho,\qquad
	\mathcal U_A(g)(E)=E,\nonumber\\
	\Phi\circ\mathcal U_A(g)
	&=\mathcal U_B(g)\circ\Phi
	\qquad\text{for every }g\in G.
\end{align}
The last condition is imposed separately on every outcome subchannel of an instrument, and every effect of a permitted POVM is invariant. This is the operational ``invariant world'' formulation of a superselection rule used by Kitaev, Mayers, and Preskill \cite[Secs.~II.B--C]{KMP}.  For a state preparation, regarded as a channel from the one-dimensional trivial representation, the channel condition reduces exactly to the state condition above.

These definitions make clear that superselected theories are, in particular, sub-theories of quantum theory. If $\mathsf{QT}^{\mathrm{inv}}_G$ denotes the invariant theory and $\mathsf{QT}$ the unrestricted one, then
\begin{equation}
	\mathsf{QT}^{\mathrm{inv}}_G\subseteq\mathsf{QT}\:.
\end{equation}
Passing from the invariant theory to unrestricted quantum theory does not change any Hilbert space, matrix, channel, outcome, or classical control label: it simply forgets the rule that prohibited the non-invariant alternatives.  In this literal sense, it is ``obvious'' that a superselected quantum theory is embedded in ordinary quantum theory.  Every Born probability, every normalised or unnormalised conditional state, and every joint or conditional
probability in a finite adaptive protocol is therefore identical before and after the restriction is forgotten.  No reference frame or further simulation is needed; indeed, the reference-frame construction of Kitaev, Mayers, and Preskill sends an already invariant operator $M$
simply to $\id_R\otimes M$ \cite[Eq.~(30)]{KMP}. Throughout this document, $\otimes$ denotes the Kronecker product unless stated otherwise.

Recently, we~\cite{HoffreumonWoods2026} showed that quantum theory can be embedded within real quantum theory in a way that preserves the operational meaning of all operators, including their locality structure (i.e., over which subsystems they act non-trivially). If $\mathsf{QT}$ denotes (finite-dimensional) quantum theory and $\mathsf{RQT}$ the finite-dimensional theory in which states and effects are represented by real symmetric matrices, then
\begin{equation}
	\mathsf{QT} \subset \mathsf{RQT}.
\end{equation}
The construction is based on the standard embedding from the complexes to the reals. Write $A = \mathrm{Re}(A)+i\mathrm{Im}(A)$ and define
\begin{equation}
	\Gmap_d(A)	:=
	\begin{pmatrix}
		\mathrm{Re}(A)&-\mathrm{Im}(A)\\
		\mathrm{Im}(A)& \mathrm{Re}(A)
	\end{pmatrix}.
\end{equation}
This map sends Hermitian positive matrices to real symmetric positive matrices and doubles the trace. Consequently, the corresponding RQT states and effects are
\begin{equation}
	\widetilde\rho=\tfrac12\Gmap_d(\rho), \qquad	\widetilde E=\Gmap_d(E),
\end{equation}
so that their Born pairing is unchanged:
\begin{equation}
	\TrX{\R}{\widetilde\rho\,\widetilde E} =\TrX{\C}{\rho E} \:.
\end{equation}
For a physical symmetry, the same map sends each unitary $U_A(g)$ to an orthogonal matrix $O_A(g)=\Gmap(U_A(g))$. The quantum invariance and equivariance equations above are then carried into the analogous orthogonal-conjugation equations on the encoded real operators.

With the unitary group mapped to the orthogonal group, it seems that, at face value, the standard embedding not only provides a real representation of quantum theory within real quantum theory but also any superselected subtheory thereof. However, this is only true if locality is mathematically represented using the commuting-subsystems paradigm. If instead, the tensor-product-subsystems paradigm is used, a naive application of the standard mapping is insufficient as the tensor factors are not preserved through it,
\begin{equation}
	\Gmap_{d_Ad_B}(A \otimes B) \neq \Gmap_{d_A}(A) \otimes \Gmap_{d_B}(B) \:.
\end{equation}
The operational meaning of the composition is lost. Moreover, some interpretations~\cite{McKague,aleksandrova2013} of the embedding see it as entangling any complex qudit system with an auxiliary 2-dimensional RQT subsystem (a `rebit')
\begin{equation}
	\begin{aligned}
		\frac{1}{2}\Gmap_d(\rho)	= \frac{I}{2} \otimes \operatorname{Re}\rho + \frac{J}{2} \otimes \operatorname{Im}\rho \:,
	\end{aligned}
\end{equation}
where
\begin{equation}\label{eq:IandJ}
	I = \begin{pmatrix}
		1 & 0 \\ 0 & 1
	\end{pmatrix}\:, \qquad 
	J=\begin{pmatrix}0&-1\\1&0\end{pmatrix}\:.
\end{equation}
The auxiliary system is used to encode the complex phase in the reals; $\{I,J\}$ provides a representation of the complex unit group, with $I$ the identity representing multiplication by $1$ and $J$ the real quarter-turn representing multiplication by $i$.

Therefore, the nontrivial step is to pass from complex to real quantum theory while retaining the operational relations among the systems: one must find a way to extend the standard mapping such that these complex-phase-carrying subsystems are hidden in the representation and the complex Kronecker products $\otimes$ are properly mapped to their real counterparts. This is what we did in Ref.~\cite{HoffreumonWoods2026}. Yet, nothing guarantees that any superselection rule applied in the complex theory would be properly mapped in the real theory. It is natural to ask  whether bosonic or fermionic information theories can be mapped to the reals in a similar `locality-preserving' fashion, or whether their corresponding information theories would differ when based on the reals instead of the complex numbers. 

This is what we do in this note. 
Before developing the representation, we first needed to close a potential issue with our complex-to-real construction~\cite{HoffreumonWoods2026} by proving the validity of the analogue of Tsirelson conjecture over the reals. That is, we show that the commuting-subsystem paradigm is equivalent to the tensor-product-subsystems paradigm in finite dimensions. Indeed, while not relevant to the questions studied in our previous article, the validity of the conjecture could not reasonably be assumed when considering the tensor-product representation of group actions in the theory. 

In this respect, our approach is different from Ref.~\cite{YingEtAl2025Foil}. In that article, fermionic information theory and real quantum theory were presented as superselected quantum theories. In this article, we present a method for embedding superselected sub-theories of QT, such as FIT, within RQT, thereby lifting it to a superselected real quantum theory. The link between representations of complex-superselected theories in RQT and genuinely superselected subtheories of RQT is left for future research.

\section{Overview of results}
\subsection{A Tsirelson-style result for RQT}
One of the basic choices in formulating a bipartite QT is how to express the operational separation of Alice and Bob. In the usual tensor-product formulation, Alice and Bob's respective systems are assigned to distinct Hilbert-space factors, so that measurements have the form
\begin{equation}
	\Tilde{E}_a^x\otimes \id_B\:, \qquad \id_A \otimes \Tilde{F}_b^y\:,
\end{equation}
where $a$ and $x$ (respectively, $b$ and $y$) are Alice's (resp., Bob) outcome and setting.
For any joint state $\rho$, the observed behaviour, i.e. the joint distribution of the outcomes conditioned on the settings, is
\begin{equation}\label{eq:Tsi_tensor}
	p(a,b | x,y) = \TrX{}{\rho\left(\Tilde{E}_a^x\otimes \Tilde{F}_b^y\right)} \:.
\end{equation}
There is, however, a second natural way to encode separation: Alice and Bob's measurement operators may act on one common Hilbert space, but are required to commute,
\begin{equation}
	[{E}_a^x,{F}_b^y]=0 \qquad \text{for all }a,b,x,y \:.
\end{equation}
In that case, the probabilities are written as
\begin{equation}\label{eq:Tsi_commuting}
	p(a,b | x,y)=\TrX{}{\rho \, {E}_a^x{F}_b^y}.
\end{equation}
Tsirelson's problem asks whether these two meanings of separation give the same bipartite correlations (see e.g. \cite{ScholzWerner2008} for a review). 
In finite-dimensional (complex) QT, they do: every finite-dimensional commuting-operator strategy can be rewritten exactly as a finite-dimensional tensor-product strategy. This finite-dimensional result is standard (e.g. Theorem~1 in Ref.~\cite{ScholzWerner2008}). 

By requiring the states and effects to be represented by real symmetric matrices, one can ask the same question for RQT. Perhaps surprisingly, the same result does not carry directly over in RQT. That is, using the usual proof methods, there are pairs of commuting real POVMs $\{{E}_{a}^x\}_a$ and $\{{F}_b^y\}_b$ in \cref{eq:Tsi_commuting} that require to be represented by complex POVMs $\{\Tilde{E}_{a}^x\otimes \id\}_a$ and $\{\id \otimes \Tilde{F}_b^y\}_b$ in \cref{eq:Tsi_tensor}. Worse, this possibility is reinforced when the set of valid effects is linearly restricted, for example, when a superselection rule is imposed.
Therefore, it is not clear whether it is always possible to find a real, tensor-factor representation of commuting real POVMs, especially when they obey superselection rules. If the commuting definition produced a larger set of behaviours, it would amount to a different and potentially stronger variant of RQT, one in which ``local'' measurements were not exhausted by tensor-product local measurements.

The point of \cref{sec:Tsirelson} is to prove that the equivalence between commuting and tensor product local measurements still holds for RQT, independently of whether superselection rules are imposed or not. A minor caveat is that, compared to the complex case, the total matrix dimension of the real tensor-product representation is now larger than the commuting representation.\footnote{It is indeed minor because, in the usual setting where arbitrary local measurements are required, the optimization is performed over arbitrarily large finite dimensions, and this caveat evaporates.}

We start by giving a complete derivation of the complex case because the proof is short, useful for comparison with RQT, and a concise self-contained presentation is surprisingly hard to find in the literature.
Then, we show how to adapt the results of Ref.~\cite{HoffreumonWoods2026} in order to obtain a finite-dimensional equivalence between commuting and tensor product behaviours. It says that, for bipartite behaviours with finite setting and outcome sets, the tensor-product and commuting definitions of local measurements are equivalent in RQT, just as they are in finite-dimensional (complex) QT.  More precisely, \cref{prop:real-tsirelson} proves, in particular, that 
\begin{equation}
	\mathcal C_{\mathrm{tensor},\mathbb R}^{\mathrm{fin}} = \mathcal C_{\mathrm{commuting},\mathbb R}^{\mathrm{fin}} \:.
\end{equation}
Here, the left-hand side is the set of behaviours obtained from real symmetric states and tensor-product local real POVMs, while the right-hand side is the set obtained from real symmetric states and commuting real POVMs on one finite-dimensional real Hilbert space. 
By using the results of Ref.~\cite{HoffreumonWoods2026}, we also identify these real correlation sets with the ordinary finite-dimensional complex quantum ones\footnote{The detailed proof also records an extension to an $R$-tensor product representation introduced in Ref.~\cite{HoffreumonWoods2026}; this extra representation is not needed for the main overview statement.}, we collect all these results into \cref{thm:full_equivalence} which states that:
\begin{equation}
	\mathcal C_{\mathrm{tensor},\mathbb R}^{\mathrm{fin}}
	= \mathcal C_{\mathrm{commuting},\mathbb R}^{\mathrm{fin}} = \mathcal C_{\mathrm{tensor},\mathbb C}^{\mathrm{fin}} = \mathcal C_{\mathrm{commuting},\mathbb C}^{\mathrm{fin}}.
\end{equation}

The proof has two ingredients. First, the finite-dimensional $C^*$-algebra structure theorem known as Artin-Wedderburn (see e.g. Ref.~\cite{Mestoudjian_2025}) says that, after decomposing the Hilbert space into a finite direct sum of blocks, any algebra and its commutant act on opposite tensor factors:
\begin{equation}
	\mathcal H
	\cong
	\bigoplus_k
	\left(\mathcal H_{A,k}\otimes\mathcal H_{B,k}\right).
\end{equation}
Hence, commuting finite-dimensional measurements are already tensor-product measurements, up to a direct-sum label $k$ interpreted as classical correlations. 
Second, the locality-preserving realification of Ref.~\cite{HoffreumonWoods2026} reviewed above converts the blockwise complex tensor-product realisations into real symmetric ones without changing
any probabilities. The block label $k$ is then absorbed as shared classical randomness in a larger real tensor-product model. 

This statement is deliberately finite-dimensional. In (complex) QT, the corresponding infinite-dimensional assertion is false: the result $\mathrm{MIP}^*=\mathrm{RE}$ of Ji, Natarajan, Vidick, Wright, and Yuen implies a strict separation between tensor-product and commuting-operator correlation models \cite{JiNatarajanVidickWrightYuen2021}. It is therefore natural to expect that an infinite-dimensional commuting-measurement version of  RQT would also be more powerful than the corresponding tensor-product version, although we do not prove such an infinite-dimensional real separation here.

Finally, there is no equally direct ``dual'' theorem obtained by interchanging the roles of states and measurements. If both states and measurements were forced to be literal tensor products, then
\begin{equation}
	\TrX{}{(\rho_x^A\otimes\sigma_y^B) (E_a^x\otimes F_b^y)} = \TrX{}{\rho_x^A\,E_a^x}\: \TrX{}{\sigma_y^B\,F_b^y}\:,
\end{equation}
so all correlations would factorise. More importantly, RQT is not locally tomographic~\cite{wootters1990}: correlations present in a global real state need not be detectable by local real measurements~\cite{HoffreumonWoods2026}. Thus, a state-side analogue would require a different operational formulation, rather than a naive replacement of commuting measurements by commuting states.

\subsection{RQTs for superselected QTs}

The second question is whether imposing a superselection rule can distinguish QT from RQT. The inclusion of a superselected QT in unrestricted QT immediately gives a real representation of its probabilities, but does not by itself explain how the symmetry restrictions and the locality of operations appear in that representation. The point of \cref{sec:real-superselection-theories} is to make these features explicit. We first consider a single system of complex Hilbert-space dimension $d$: the realification map sends the unitary symmetry representation $U(g)$ to the orthogonal representation $O(g)=\Gamma(U(g))$, where $g\in G$ and $G$ is the superselection symmetry group introduced above, and carries invariant states and effects, together with equivariant channels, into corresponding real objects. More precisely, \cref{thm:algebra} identifies an embedded RQT subtheory with exactly the same Born probabilities. Writing $X$ for an encoded real state or effect, and $J_d:=J\otimes\id_d=\Gamma(i\id_d)$ for the real operator representing multiplication by $i$ (with $J$ defined in \cref{eq:IandJ} and $\id_d$ the $d$-dimensional identity), these states and effects satisfy both the original symmetry restriction and compatibility with the encoded complex structure,
\begin{equation}
    [X,O(g)]=0 \quad\text{for every }g\in G,
    \qquad [X,J_d]=0\:.
\end{equation}
The latter condition can itself be expressed as invariance under the cyclic group $C_4$ generated by $J_d$. Thus the embedded states and effects are selected by a combined $C_4\times G$ symmetry; invariance under $O(G)$ alone would generally allow a larger RQT theory.

We then turn to finite multipartite protocols, assuming that the symmetry acts on-site relative to the chosen tensor factors. Using the locality-preserving realification of Theorem~2 in Ref.~\cite{HoffreumonWoods2026}, \cref{thm:adaptive} shows that every such superselected QT protocol admits an RQT realisation with the same joint and conditional probabilities and the same prescribed locality of operations. The real representative of each operation depends only on that operation and its declared input and output blocks. Consequently, the construction applies consistently across protocols, rather than assigning a separate simulation to each experiment. As in the unrestricted case, preserving operational locality does not require product initial states to remain product states under the mapping.

Fermionic information theory provides the concrete example in \cref{sec:real-fermionic-quantum-theory}. Here $G=\mathbb Z_2$ implements parity superselection. We explicitly realify the Fock-space description, the parity operator, and the allowed states, effects, and channels, obtaining the same probabilities while retaining the graded commutation relations that specify fermionic locality. This yields an experimentally indistinguishable RQT representation of fermionic information theory (FIT). 

\section{A Tsirelson-style result for RQT}\label{sec:Tsirelson}

This section gives the self-contained derivation behind the overview statement. We first recall the two finite-dimensional models for ordinary (complex) QT, define the algebra generated by Alice's measurement operators and its commutant, and state the finite-dimensional $C^*$-algebra structure theorem.  We then prove the standard finite-dimensional equivalence between tensor-product and commuting-operator correlations.  Finally, we adapt the argument to RQT using the realification result of Ref.~\cite{HoffreumonWoods2026}. The finite-dimensional complex result is known \cite[Theorem~1]{ScholzWerner2008}; it is included here for completeness and because a short, explicit derivation clarifies exactly where the real-symmetric argument enters.

\subsection{The two notions of separation in QT}
Let $X,Y$ be finite sets of measurement choices (otherwise known as settings), and let $A,B$ be finite sets of outcomes. A bipartite behaviour is a table of conditional probabilities
\begin{equation}
    p(a,b| x,y), \qquad x\in X,\ y\in Y,\ a\in A,\ b\in B.
\end{equation}
A finite-dimensional quantum state is a positive operator $\rho$ of trace one. A positive operator-valued measure (POVM) for Alice, with measurement choice $x$, is an outcome-indexed family
\begin{equation}
    E_a^x\geq 0\:, \qquad \sum_{a\in A}E_a^x=\id.
\end{equation}
Likewise, Bob's POVM under setting $y$ is a family
\begin{equation}
    F_b^y\geq 0\:, \qquad \sum_{b\in B}F_b^y=\id.
\end{equation}

The finite-dimensional version of Tsirelson's problem compares two ways of expressing the separation of Alice and Bob; the exact finite-dimensional equivalence can be found in e.g., Theorem 1 of Ref.~\cite{ScholzWerner2008}.

\paragraph*{Tensor-product separation.} Alice and Bob have finite-dimensional Hilbert spaces $\mathcal H_A$ and $\mathcal H_B$.  Their joint space is $\mathcal H_A\otimes\mathcal H_B$, and their measurement operators act locally:
\begin{equation}
    \Tilde{E}_a^x\otimes \id_B\:, \qquad \id_A\otimes \Tilde{F}_b^y \:.
\end{equation}
For a state $\rho$ acting on $\mathcal H_A\otimes\mathcal H_B$,
\begin{equation}\label{eq:tensor_model}
    p(a,b| x,y)=\TrX{}{\rho\left(\Tilde{E}_a^x\otimes \Tilde{F}_b^y\right)}\:.
\end{equation}
Let 
\begin{equation}
    \Ct:= \{  \{ p(a,b| x,y) \}_{x\in X,\ y\in Y,\ a\in A,\ b\in B}  \}_{X,Y,A,B} 
\end{equation}
denote the set of all conditional probabilities which can be obtained in this way, allowing arbitrary finite dimensions for $\mathcal H_A$ and $\mathcal H_B$. Since we are in finite dimensions, all operators are matrices and $\otimes$ is the Kronecker product. 

\paragraph*{Commuting-operator separation.}
Instead, suppose that there is a single finite-dimensional Hilbert space $\mathcal H$.  Alice's and Bob's POVM elements act on that same space, but every effect of Alice commutes with every one of Bob:
\begin{equation}
    \forall a,b,x,y, \quad [E_a^x,F_b^y] := E_a^xF_b^y-F_b^yE_a^x = 0 \:.
\end{equation}
The set of conditional probabilities is
\begin{equation}\label{eq:commuting_model}
    p(a,b| x,y) = \TrX{}{\rho E_a^xF_b^y}\:,
\end{equation}
and let 
\begin{equation}
    \Cc := \{ \{   p(a,b\mid x,y) \}_{x\in X,\ y\in Y,\ a\in A,\ b\in B} \}_{X,Y,A,B} 
\end{equation}
denote the set of all conditional probabilities which can be obtained in this way, assuming an arbitrary but finite-dimensional underlying space $\mathcal{H}$.

Of course, every tensor-product strategy is immediately a commuting strategy, as
\begin{equation}
    [\Tilde{E}_a^x\otimes \id_B,\ \id_A\otimes \Tilde{F}_b^y]=0 \:.
\end{equation}
Thus
\begin{equation}
    \Ct\subseteq\Cc \:.
\end{equation}
But not much more can be immediately deduced.

\paragraph{Tsirelson problem.}
In finite-dimensional QT, however, the equivalence holds essentially all the way up to \cref{eq:Tsirelson_superstrong}. Before turning to the real quantum-theoretic case, we first prove \cref{eq:Tsirelson_strong} for (complex) QT. To this end, it suffices to prove the reverse inclusion, $\Ct \supseteq \Cc$. Presenting the standard quantum proof first will also make clear where the argument differs in RQT.

\begin{subequations}
A first question one may ask is whether these two situations are equivalent. More precisely, given finite sets $X,Y,A,B$, a finite-dimensional Hilbert space $\mathcal{H}$, and a state $\rho$, does one always have
\begin{equation}\label{eq:Tsirelson_superstrong}
    \TrX{}{\rho\left(\Tilde{E}_a^x\otimes \Tilde{F}_b^y\right)}= \TrX{}{\rho E_a^xF_b^y} \:?
\end{equation}
Of course, while this is a mathematically interesting question, it is too strong to have a meaningful physical interpretation. It requires trusting the dimension of the Hilbert space in both descriptions, as well as assuming that the preparations and measurements implemented by the experimentalists are faithfully represented by the corresponding operators.

A more physically meaningful approach is to assume that the only quantities the experimentalists can trust at the end of the experiment are $p(a,b|x,y)$, the observed distributions of the outcomes given the settings. This leads to the refined question of whether, for any collection of finite sets $X,Y,A,B$,
\begin{equation}\label{eq:Tsirelson_strong}
\Ct = \Cc \:?
\end{equation}
Here, the only fixed quantities are the sets of settings and outcomes. In particular, this question makes no reference to the dimensionality of the underlying Hilbert space. It may be that the commuting paradigm allows for more measurement strategies than the tensor-product paradigm, yet yields no distributions beyond those achievable with tensor-product measurements.

Finally, one can consider a further weakening of this notion of equivalence. Since it may not be possible to exhaustively characterise all Bell scenarios for a given source, one may instead focus only on the extremal values of Bell-like functionals. Consider a Bell-like functional on the set of distributions of the form, $B:\mathcal{C}\rightarrow\R^+$:
\begin{equation*}\label{eq:Tsirelson_normal}
B\{p(a,b|x,y)\} = \sum_{a,b,x,y} \mu_{a,b,x,y} p(a,b|x,y) \:,
\end{equation*}
and ask whether its supremum is the same when $\mathcal{C}$ is either $\Ct$ or $\Cc$. That is, given any collection of finite sets $X,Y,A,B$ and any collection of coefficients $\mu_{a,b,x,y}$, does one have
\begin{equation}
    \mathrm{sup}_{\substack{\mathcal{H}, \rho, \\\Tilde{E}_a^x, \Tilde{F}_B^y}} \left( \sum_{a,b,x,y}  \mu_{a,b,x,y} \TrX{}{\rho\left(\Tilde{E}_a^x\otimes \Tilde{F}_b^y\right)} \right) = \mathrm{sup}_{\substack{\mathcal{H}, \rho,\\ E_a^x, F_B^y}} \left(\sum_{a,b,x,y}  \mu_{a,b,x,y}\TrX{}{\rho\: E_a^x F_b^y} \right) \:?
\end{equation}
\end{subequations}
What we refer to as the Tsirelson problem, at least in this note, is only this last equality. This formulation gives the two notions of separated measurements even more room to diverge: the commuting paradigm may, in principle, produce statistics that cannot be reproduced by tensor-product measurements. The question is then whether this difference can be detected by Bell-like functionals, i.e., whether the two convex sets of correlations have the same boundary as seen through all such functionals.

In finite-dimensional QT, however, the equivalence holds almost up to \cref{eq:Tsirelson_superstrong} (the minor subtleties preventing strict unitary equivalence, arising from the sector decomposition, will be explained below)
Before turning to the real quantum-theoretic case, we first prove \cref{eq:Tsirelson_strong} for (complex) QT. To this end, it suffices to prove the reverse inclusion, $\Ct \supseteq \Cc$. Presenting the standard quantum proof first will also allow us to make clear where the proof differs in RQT.

\subsection{Reviewing the solution for finite-dimensional QT}
The proof of the Tsirelson conjecture in finite-dimensional QT relies on structural theorems from $C^*$-algebraic methods. In a nutshell, the set of all observables that Alice and Bob have access to is always embedded in the algebras of all complex operators, which is a $C^*$-algebra. Alice and Bob's operations are then embedded as two commuting sub-algebras, and the representation theory of such sub-algebras can be leveraged.

\subsubsection{The algebra generated by Alice and its commutant}
We use $(\cdot)^*$ to represent the adjoint of linear operators. In terms of matrices, this corresponds to the usual conjugate transpose.
\begin{definition}[$*$-algebra]
    A {\emph{(unital, finite-dimensional, complex) $*$-algebra $\mathcal{A}$}} is a finite-dimensional algebra over the complex numbers that contains the unit, and that is closed under an antilinear involution $^*$, that is a map $\mathcal{A} \rightarrow \mathcal{A}$ verifying the following: $\forall A,B \in \mathcal{A}$, $\forall c\in \C$:
    \begin{itemize}
        \item $(cA)^* = \overline{c} A$ ;
        \item $(A+B)^* = A^* + B^*$ ;
        \item $(AB)^* = B^* A^*$ .
    \end{itemize}
    Accordingly, a $*$-subalgebra is any $*$-closed unital subalgebra of a $*$-algebra.
\end{definition}

All algebras under consideration in this note will be assumed unital. 
In finite dimensions, a $C^*$-algebra is any algebra isomorphic to the $*$-algebra of complex operators on a Hilbert space or any $*$-subalgebra thereof.

\begin{definition}[$C^*$-algebra]
 A \emph{(unital, finite-dimensional, complex) $C^*$-algebra $\mathcal{A}$} is a $*$-algebra isomorphic to either the $*$-algebra $\mathcal{B}(\mathcal{H})$ of linear operators over a complex Hilbert space $\mathcal{H}$ under the usual operator product and the dagger as its involution, or any $*$-subalgebra of it.
\end{definition}
This definition glosses over all the usual analytic complications that come with the general definition of $C^*$-algebras, as we are in the finite-dimensional case; the usual issues with convergence or norm-closure of such algebras are simply not relevant when $d<\infty$.

\begin{definition}[The algebra generated by Alice]
Given all of Alice's POVM elements
\begin{equation}
    \mathcal E=\{E_a^x:x\in X,\ a\in A\},
\end{equation}
the notation
\begin{equation}
    \Aalg=C^*(\mathcal E)=C^*(\{E_a^x\}_{a,x})
\end{equation}
means the \emph{smallest unital $C^*$-subalgebra of $\mathcal B(\mathcal H)$ containing every $E_a^x$}.
Equivalently,
\begin{equation}
    C^*(\mathcal E)
    =
    \bigcap\left\{
       \mathcal D:
       \mathcal D\subseteq\mathcal B(\mathcal H)
       \text{ is a unital $C^*$-subalgebra and }
       \mathcal E\subseteq\mathcal D
    \right\}.
\end{equation}
\end{definition}
Alternatively, $\mathcal{A}$ can be seen as the algebra generated by all noncommutative $*$-polynomials of the operators $E_a^x$. That is, by any finite linear combinations of finite products made from $I$, the $E_a^x$, and their adjoints. Since they are POVM effects, each $E_a^x$ is positive, hence self-adjoint, so adjoining $(E_a^x)^*$ adds nothing. Furthermore, the identity operator is included by definition of a POVM. 

Thus $C^*(\mathcal E)$ is not generated by one ``seed'' POVM element: it is generated jointly by the entire indexed family $\{E_a^x\}_{a,x}$.

\begin{definition}[Commutant]
For a set $\mathcal S\subseteq\mathcal B(\mathcal H)$, its
\emph{commutant} is
\begin{equation}
    \mathcal S'
    =
    \{T\in\mathcal B(\mathcal H):TS=ST
      \text{ for every }S\in\mathcal S\}.
\end{equation}
In particular,
\begin{equation}
    \Aalg'
    =
    \{T\in\mathcal B(\mathcal H):TA=AT
      \text{ for every }A\in\Aalg\}.
\end{equation}
\end{definition}

The commutant will be used to characterise the set of all operators that commute with Alice's effects.  By linearity, it should be obvious that if $F\in\mathcal B(\mathcal H)$ commutes with every effect of Alice $E_a^x$, it also commutes with every polynomial thereof, and thus $F\in\Aalg'$. 

In particular, the commuting-operators assumption implies that $F_b^y\in\Aalg'$ for every $ b,y$. What might be less clear but follow by doing the same reasoning but starting from Bob's POVM effects, is that $\Aalg'$ is also a $C^*$-algebra. 
Alice's operators, therefore, determine the algebra in which Bob's operators are allowed to lie, namely $\Aalg'$, although they do not determine Bob's actual choice of POVMs.

The important bit for our purposes is that Alice's operations (her effects) lie in a $C^*$ algebra $\Aalg$, Bob's lie in the commutant of $\Aalg$, $\Aalg'$, and therefore their joint operations belong to the set $\Aalg \cup \Aalg'$, which is the smallest $C^*$-algebra in $\mathcal{B}(\mathcal{H})$ spanned by $\Aalg$ and $\Aalg'$.

\subsubsection{The finite-dimensional \texorpdfstring{$C^*$}-algebra structure theorems}

The relevant structure theorem has an abstract form and a concrete form. The standard reference for it is \cite[Theorem 11.2]{TakesakiI}, but see also the more recent and self-contained presentation of Ref.~\cite{Mestoudjian_2025}. A useful operator-algebra presentation of this decomposition and its commutant aimed specifically at quantum information theorists can be found in \cite[Section~IV]{HolbrookKribsLaflamme2004}.

The abstract form of the structure theorem is the reason behind our very pedestrian definition of a $C^*$ algebra in finite dimensions. 
\begin{theorem}[Finite-dimensional $C^*$-algebra structure theorem]
\label{thm:structure}
Let $\Aalg\subseteq\mathcal B(\mathcal H)$ be a unital
$C^*$-subalgebra and suppose that $\mathcal H$ is finite
dimensional.  Then, there are positive integers $n_1,\ldots,n_r$ such that
\begin{equation}
    \Aalg\cong\bigoplus_{k=1}^r M_{n_k}(\C) \:.
\end{equation}
\end{theorem}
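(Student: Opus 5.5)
The plan is the classical route through the center of $\Aalg$ and minimal projections, using only finite-dimensional linear algebra and the fact that $\Aalg$ is closed under $^*$. Let $\mathcal Z=\Aalg\cap\Aalg'$ be the center of $\Aalg$. It is a commutative $*$-algebra of operators on the finite-dimensional space $\mathcal H$. A family of commuting normal matrices can be simultaneously diagonalised, so $\mathcal Z$ is spanned by finitely many mutually orthogonal projections $P_1,\dots,P_r$ with $\sum_k P_k=\id$. These are its minimal projections, obtained from the joint spectral projections of a self-adjoint basis of $\mathcal Z$. Each $P_k$ is central, so the map $A\mapsto \bigoplus_k P_kAP_k$ is a $*$-isomorphism
\begin{equation}
    \Aalg\cong\bigoplus_{k=1}^r \Aalg_k, \qquad \Aalg_k:=P_k\Aalg P_k .
\end{equation}
Each $\Aalg_k$ is a unital $*$-algebra on $P_k\mathcal H$ with trivial center $\C P_k$. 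It therefore suffices to show that a finite-dimensional $*$-algebra of operators with trivial center is isomorphic to some $M_n(\C)$.

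\textbf{Factor case.} Fix one such $\Aalg_k$, and write $\mathcal B$ for it and $P$ for its unit. Pick a minimal nonzero projection $Q\in\mathcal B$; it exists by finite dimensionality. I would then prove that $Q\mathcal BQ=\C Q$. Take a self-adjoint $QBQ$. Its spectral projections are polynomials in $QBQ$, hence lie in $\mathcal B$ and lie below $Q$. By minimality, each is either $0$ or $Q$, so $QBQ$ is a scalar multiple of $Q$; the general case follows by splitting into self-adjoint parts.

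Next, choose a maximal family of mutually orthogonal minimal projections $Q_1,\dots,Q_n\in\mathcal B$. Two steps remain.
\begin{itemize}
    \item \textbf{Completeness.} Show $\sum_i Q_i=P$. Otherwise $P-\sum_i Q_i$ is a nonzero projection in $\mathcal B$, and it dominates a minimal one, contradicting maximality.
    \item \textbf{Equivalence.} Show that any two $Q_i,Q_j$ are Murray–von Neumann equivalent inside $\mathcal B$. Suppose instead that $Q_i\mathcal BQ_j=0$. Then the projection onto the span of $\mathcal B Q_i\mathcal H$ would be a nontrivial central projection. It lies in $\mathcal B$ because it is the support of the positive element $\sum_m B_mQ_iB_m^*$ over a basis $\{B_m\}$. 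This contradicts the trivial center. So $Q_i\mathcal B Q_j\neq0$.
\end{itemize}
For nonzero $V\in Q_i\mathcal BQ_j$ we have $V^*V\in Q_j\mathcal BQ_j=\C Q_j$. After rescaling, $V^*V=Q_j$ and likewise $VV^*=Q_i$, so $V$ is a partial isometry.

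Fix such partial isometries $V_{i1}$ from $Q_1$ to $Q_i$ and set $e_{ij}:=V_{i1}V_{j1}^*$. These form a system of matrix units: $e_{ij}e_{kl}=\delta_{jk}e_{il}$, $e_{ij}^*=e_{ji}$, and $\sum_i e_{ii}=P$. Every $B\in\mathcal B$ decomposes as $B=\sum_{ij}Q_iBQ_j$. Each piece satisfies $Q_iBQ_j=e_{i1}(e_{1i}Be_{j1})e_{1j}$, and the middle factor lies in $Q_1\mathcal BQ_1=\C Q_1$. Hence $B=\sum_{ij} b_{ij}e_{ij}$ for scalars $b_{ij}$, and $B\mapsto(b_{ij})$ is a $*$-isomorphism onto $M_n(\C)$. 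Setting $n_k:=n$ for the block $\Aalg_k$ and combining with the direct sum above gives the theorem.

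\textbf{Main obstacle.} The delicate step is the equivalence step: showing that trivial center forces $Q_i\mathcal BQ_j\neq0$. This is the one place where closure under $^*$ (semisimplicity) is essential. The cleanest route I would try first is the one sketched above. The closed span of $\mathcal B Q_i\mathcal H$ is invariant under $\mathcal B$, because $\mathcal B$ is an algebra. Since $\mathcal B$ is $*$-closed, its orthogonal complement is also invariant, so the projection onto it commutes with $\mathcal B$. That projection is also in $\mathcal B$, as the support of an element of $\mathcal B$, and hence it is central. Everything else is routine bookkeeping with projections.
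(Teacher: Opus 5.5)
Your proof is correct and follows the same route as the paper's sketch. You split $\Aalg$ along its minimal central projections $P_k$ into factors and then identify each factor with a full matrix algebra; the paper only asserts this last step and cites Takesaki for it, whereas you carry it out explicitly with minimal projections and matrix units. Your construction also gets the block size right: $n_k$ is the number of mutually orthogonal minimal projections summing to $P_k$, not the rank of $P_k$ as the paper's sketch says, since that rank is $n_k m_k$.
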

Of course, with respect to the definition we provided, it is a cyclic statement. Keep in mind that we chose the definition exactly because this powerful theorem exists. In order not to derive this theorem out of the blue, there exist more abstract definitions of a $C^*$ algebra that do not rely on the identification with matrix algebras, which we invite the curious reader to consult.

Here $M_n(\C)$ denotes the $C^*$-algebra of all $n\times n$ complex matrices. The $\cong$ means an isomorphism of $C^*$-algebras. Meaning that there exists a mapping
\begin{equation}
    \Phi:\Aalg\longrightarrow \bigoplus_{k=1}^r M_{n_k}(\C) \:,
\end{equation}
that preserves the $C^*$-algebra structure (it is a $C^*$-homomorphism):
\begin{equation}
\begin{aligned}
    \Phi(A+B)&=\Phi(A)+\Phi(B),
    &
    \Phi(cA)&=c\Phi(A),\\
    \Phi(AB)&=\Phi(A)\Phi(B),
    &
    \Phi(A^*)&=\Phi(A)^*,\\
    \Phi(\id_{\mathcal H})&=
    \bigoplus_{k=1}^r \id_{n_k},
\end{aligned}
\end{equation}
and that is bijective. 

This abstract isomorphism is not, by itself, a change of basis on the given Hilbert space $\mathcal H$. The matrices on the right-hand side can have different sizes from the matrices by which $\Aalg$ originally acts; the space $\bigoplus_{k=1}^r M_{n_k}(\C)$ might not have the same dimension as the embedding space $\mathcal{B}(\mathcal{H}) \cong M_d{\C}$.  

The concrete form below uses the additional classification of finite-dimensional representations of the matrix algebras to explicitly build the isomorphism from a given $C^*$-algebra $\Aalg$ to $M_d(\C)$ by finding the dimensions $(m_1,...m_r)$ needed to `amplify' $\bigoplus_{k=1}^r M_{n_k}(\C)$ into $\bigoplus_{k=1}^r M_{n_k}(\C) \otimes M_{m_k}(\C)$ such that $d = \sum_{k=1}^r n_km_k$, and therefore that the resulting concrete identification \emph{is} implemented by a unitary change of basis when $\mathcal{H} = \C^d$. 

The integers $m_k$ are the representation multiplicities: the simple algebra $M_{n_k}(\C)$ occurs $m_k$ times in the given representation on $\mathcal H$. By doing so, the representation also finds a way to accommodate the representation of the commutant $\Aalg'$ of $\Aalg$ within the multiplicity space. 
In this form, it is usually referred to as (a special case of) the \textit{Artin-Wedderburn theorem}. 

\begin{theorem}[Artin-Wedderburn]\label{thm:AW}
For a $C^*$-algebra $\Aalg$ in $\mathcal{B}(\mathcal{H})$, there exists positive integers $m_1,\ldots,m_r$ and a unitary
\begin{equation}
    U:\mathcal H \longrightarrow \bigoplus_{k=1}^r \left(\C^{n_k}\otimes\C^{m_k}\right)
\end{equation}
such that
\begin{equation}
    U\Aalg U^* = \left\{ \bigoplus_{k=1}^r(X_k\otimes \id_{m_k}) :X_k\in M_{n_k}(\C)\right\}\:.
\end{equation}
Furthermore, its commutant $\Aalg'$ in $\mathcal{B}(\mathcal{H})$ verifies:
\begin{equation}
    U\Aalg'U^* =\left\{\bigoplus_{k=1}^r(\id_{n_k}\otimes Y_k) :Y_k\in M_{m_k}(\C)\right\}\:.
\end{equation}
\end{theorem}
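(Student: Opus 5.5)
The plan is to build the unitary $U$ from the central decomposition of $\Aalg$ and then identify the commutant blockwise. Throughout, \cref{thm:structure} gives an abstract $*$-isomorphism $\Phi:\Aalg\to\bigoplus_{k=1}^r M_{n_k}(\C)$.

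\textbf{Step 1: split $\mathcal H$ into sectors.} Pull back the block units $\Phi^{-1}(0\oplus\cdots\oplus\id_{n_k}\oplus\cdots\oplus 0)=:P_k$. These are mutually orthogonal self-adjoint projections in the centre of $\Aalg$, and they sum to $\id_{\mathcal H}$. Hence $\mathcal H=\bigoplus_k P_k\mathcal H$, and $\Aalg$ acts on $P_k\mathcal H$ through a unital $*$-representation $\pi_k$ of $M_{n_k}(\C)$. Each $\pi_k$ is injective, because the kernel of a representation is an ideal and $M_{n_k}(\C)$ is simple. This injectivity is what makes the $k$-th summand of $U\Aalg U^*$ range over \emph{all} $X_k\in M_{n_k}(\C)$.

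\textbf{Step 2: classify the representations of $M_n(\C)$.} Let $e_{ij}$ be the matrix units of $M_{n_k}(\C)$ and set $Q_{ij}=\pi_k(e_{ij})$. Then $Q_{11}$ is a projection; let $\mathcal K_k=Q_{11}P_k\mathcal H$, with $m_k=\dim\mathcal K_k\ge1$. Choose an orthonormal basis $\{f_\ell\}$ of $\mathcal K_k$ and define
\[
V_k\big(\textstyle\sum_{i,\ell} c_{i\ell}\,e_i\otimes f_\ell\big)=\textstyle\sum_{i,\ell}c_{i\ell}\,Q_{i1}f_\ell .
\]
The relations $Q_{1i}Q_{j1}=\delta_{ij}Q_{11}$ show that $V_k$ is an isometry. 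Its range is all of $P_k\mathcal H$, since $\sum_i Q_{ii}=P_k$ and $Q_{ii}=Q_{i1}Q_{11}Q_{1i}$. A direct computation gives $V_k^*\pi_k(X)V_k=X\otimes\id_{m_k}$. Setting $U:=\bigoplus_k V_k^*$ then yields the first claim, and the dimension count $d=\sum_k n_km_k$ comes for free.

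\textbf{Step 3: compute the commutant.} In the new picture, take $T\in\mathcal B(\bigoplus_k\C^{n_k}\otimes\C^{m_k})$ that commutes with all $\bigoplus_k X_k\otimes\id$. Commuting with the central projections $P_k$ forces $T$ to be block diagonal, $T=\bigoplus_k T_k$. Within a block, write $T_k=\sum_{ij}e_{ij}\otimes T^{(k)}_{ij}$. Commuting with $e_{pq}\otimes\id$ for every $p,q$ gives $T^{(k)}_{ij}=\delta_{ij}Y_k$, so $T_k=\id_{n_k}\otimes Y_k$. The reverse inclusion is immediate, which establishes the second claim.

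\textbf{Expected obstacle.} The delicate step is Step 2: checking that $V_k$ is surjective and intertwines the representation. Everything there rests on the matrix-unit relations being preserved by $\pi_k$, so I would verify $V_k^*\pi_k(e_{pq})V_k=e_{pq}\otimes\id$ on basis vectors first and extend by linearity. It is also in Step 2 that complex scalars enter, through the simplicity of $M_n(\C)$ and the full matrix structure. This is the point to flag for the real case later in the paper, where real simple algebras can also be $M_n(\C)$ or $M_n(\mathbb H)$ viewed over $\R$.
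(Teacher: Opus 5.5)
Your proof is correct and follows the same route as the paper's sketch: central projections split $\mathcal H$ into sectors, each sector carries the multiplicity form $X_k\otimes\id_{m_k}$ of a representation of $M_{n_k}(\C)$, and the commutant follows from the same matrix-unit block computation. The only difference is that you explicitly construct the intertwining unitary $V_k$ from the matrix units $\pi_k(e_{ij})$, whereas the paper simply asserts that every representation of $M_{n_k}(\C)$ is a direct sum of copies of the defining one.
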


Here is a quick summary of the proof (see one of the aforementioned sources for a full breakdown). The abstract statement is obtained by finding the $k$ minimal central projections of $\Aalg$, which is the largest set of orthogonal projectors $\{P_k\}_k$ that commute with all of $\Aalg$. Once these are found, each of them can be identified with the identity of a space $M_{n_k}(\C)$, $P_k \rightarrow \id_{n_k}$, where $n_k$ is the rank of $P_k$. Each resulting summand is a finite-dimensional simple $C^*$-algebra (also called a factor), which is why every such algebra is isomorphic to a full matrix algebra $M_{n_k}(\C)$: $\forall k, \forall A \in \Aalg: P_k A \neq 0$ then $A \rightarrow X_k \in M_{n_k}(\C)$. The representation as $M_{n_j}(\C)$ of the $j$-th factor in the decomposition is then usually referred to as the $j$-th \textit{sector}. With each element of $\Aalg$ belonging to at least one sector, the minimal projections resolve the identity as $\id_\mathcal{H} = \sum_k P_k$. This explains the form taken by the unit under $\Phi$, $\Phi(\id_{\mathcal{H}}) = \bigoplus_{k=1}^r \id_{n_k}$: the representation of the unit must preserve each individual sector.

A representation of $M_{n_k}(\C)$ on a finite-dimensional Hilbert space of dimension $m_kn_k$ is a direct sum of $m_k$ copies of its defining representation on $\C^{n_k}$. Thus, by grouping the $m_k$ equivalent copies, this produces the sector-specific action $X_k\otimes \id_{m_k}$ acting on $\C^{n_k}\otimes\C^{m_k}$. And by grouping the $k$ sectors, we recover the full space $\mathcal{H} = \bigoplus_{k=1}^r\C^{n_k}\otimes\C^{m_k}$. This yields the concrete (Artin-Wedderburn) form of the structure theorem.

To see why the commutant lives in the opposite tensor factor (in the multiplicity space), here is the matrix calculation underlying the last assertion of the structure theorem. In one sector,
\begin{equation}
    \left(M_n(\C)\otimes \id_m\right)' = \id_n\otimes M_m(\C) \:.
\end{equation}
To see why it is the case, write an arbitrary operator on $\C^n\otimes\C^m$ as $n^2$ $m\times m$ block matrices
\begin{equation}
    T=\sum_{p,q=1}^n e_{pq}\otimes T_{pq}\:, \qquad T_{pq}\in M_m(\C)\:,
\end{equation}
where $e_{pq}$ are the standard matrix units in $M_n(\C)$, and ask $T$ to commute with every element $X\otimes \id_{m}$ of $M_n(\C)\otimes \id_m$. Commutation with every diagonal matrix unit $e_{ii}\otimes \id_m$ forces $T_{pq}=0$ when $p\neq q$. Commutation with every $e_{ij}\otimes \id_m$ then forces all diagonal blocks to be equal:
\begin{equation}
    T_{11}=T_{22}=\cdots=T_{nn}=Y.
\end{equation}
Hence $T=\id_n\otimes Y$.

For the full direct sum, the central projection $P_k$ onto sector $k$ belongs to $\Aalg$.  Any $T\in\Aalg'$ commutes with every $P_k$, so it preserves each sector and has no off-diagonal blocks between distinct sectors. Applying the one-sector calculation inside each sector gives
\begin{equation}
    \Aalg' = \bigoplus_k \left(\id_{n_k}\otimes M_{m_k}(\C)\right) 
\end{equation}
in the basis supplied by $U$.

\subsubsection{Exact equivalence in finite dimensions}
The proof that the commuting and tensor sets of behaviours are equal then follow by applying these powerful structural theorems for $C^*$-algebras.
\begin{theorem}
\label{thm:equivalence}
For any collection of settings and outcomes sets $X,Y,A,B$, the sets of all possible distributions achievable under commuting and tensor-product complex measurements are equivalent, 
\begin{equation}
    \Ct=\Cc \:.
\end{equation}
\end{theorem}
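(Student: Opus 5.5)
The inclusion $\Ct\subseteq\Cc$ has already been observed, so the plan is to prove $\Cc\subseteq\Ct$ and to do it constructively. Fix a commuting strategy $(\mathcal H,\rho,\{E_a^x\},\{F_b^y\})$ with $\dim\mathcal H<\infty$, and let $\Aalg=C^*(\{E_a^x\}_{a,x})$. Each $F_b^y$ commutes with every generator of $\Aalg$, so $F_b^y\in\Aalg'$. Next, apply \cref{thm:AW} to $\Aalg$. This gives a unitary $U$ with $U\mathcal H=\bigoplus_{k=1}^r(\C^{n_k}\otimes\C^{m_k})$ such that
\begin{equation}
UE_a^xU^*=\bigoplus_k\left(E^{x}_{a,k}\otimes\id_{m_k}\right),\qquad UF_b^yU^*=\bigoplus_k\left(\id_{n_k}\otimes F^{y}_{b,k}\right).
\end{equation}
Since $U(\cdot)U^*$ is a unital $*$-isomorphism, positivity and normalisation pass to each block. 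So for every $k$, the families $\{E^x_{a,k}\}_a$ and $\{F^y_{b,k}\}_b$ are genuine POVMs on $\C^{n_k}$ and $\C^{m_k}$ respectively.

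Next, compute the behaviour. Set $\rho_k:=P_kU\rho U^*P_k$, where $P_k$ is the projection onto sector $k$. Because $E_a^xF_b^y$ is block diagonal, the off-diagonal blocks of $U\rho U^*$ do not contribute. Hence
\begin{equation}
p(a,b|x,y)=\sum_k \TrX{}{\rho_k\left(E^x_{a,k}\otimes F^y_{b,k}\right)}=\sum_k q_k\,\TrX{}{\hat\rho_k\left(E^x_{a,k}\otimes F^y_{b,k}\right)},
\end{equation}
where $q_k=\mathrm{Tr}\,\rho_k\ge0$, $\sum_kq_k=1$, and $\hat\rho_k=\rho_k/q_k$ whenever $q_k>0$ (sectors with $q_k=0$ are simply dropped). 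Thus $p$ is a convex mixture of tensor-product behaviours, and the label $k$ acts as shared classical randomness.

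The remaining step, and the only one needing care, is to turn this convex combination into a single tensor-product strategy. Pad every sector to common dimensions $n=\max_k n_k$ and $m=\max_k m_k$. To do this, extend each $E^x_{a,k}$ to $\C^n$ by adding the zero-padding complement to one fixed outcome $a_0$, so that normalisation is kept, and treat Bob's operators the same way. The padded state $\hat\rho_k$ is supported in the original subspace, so no probability changes. Then set $\mathcal H_A=\C^r\otimes\C^n$ and $\mathcal H_B=\C^r\otimes\C^m$, and define
\begin{equation}
\tilde E_a^x=\sum_k\ketbra{k}{k}\otimes E^x_{a,k},\qquad \tilde F_b^y=\sum_k\ketbra{k}{k}\otimes F^y_{b,k},\qquad \tilde\rho=\sum_kq_k\,\ketbra{kk}{kk}\otimes\hat\rho_k,
\end{equation}
with $\tilde\rho$ placed on $\mathcal H_A\otimes\mathcal H_B$ after reordering the factors. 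These operators are POVMs and $\tilde\rho$ is a state, and we have $\TrX{}{\tilde\rho(\tilde E_a^x\otimes\tilde F_b^y)}=p(a,b|x,y)$. Hence $p\in\Ct$. The main thing to check is the bookkeeping in this padding and mixing step. The structural input is entirely supplied by \cref{thm:AW}; I would only remark that the dimension grows from $\sum_k n_km_k$ to $r^2nm$, which does not matter for equality of the sets.
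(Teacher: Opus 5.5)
Your proposal is correct and follows essentially the same route as the paper's proof. That route is the Artin--Wedderburn decomposition of $C^*(\{E_a^x\})$ and its commutant, blockwise POVMs, sector weights $q_k=\mathrm{Tr}[\rho P_k]$, and a single state supported only on the matching sectors $k\otimes k$. The only difference is the final bookkeeping: you pad every sector to common dimensions $n,m$ and add a classical register $\C^r$, whereas the paper takes $\bigoplus_k\mathcal H_{A,k}$ and $\bigoplus_k\mathcal H_{B,k}$ directly, which needs no padding and uses the smaller dimension $(\sum_k n_k)(\sum_k m_k)$.
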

\begin{proof}
We already proved $\Ct\subseteq\Cc$.  It remains to prove
$\Cc\subseteq\Ct$.

Take any finite-dimensional commuting-operator strategy on $\mathcal H$ (here identified with $\C^d$), under state $\rho$, Alice POVMs $\{E_a^x\}_{a,x}$, and Bob POVMs $\{F_b^y\}_{b,y}$. Define
\begin{equation}
    \Aalg=C^*(\{E_a^x\}_{a,x}).
\end{equation}
As explained above, every Bob operator belongs to $\Aalg'$.

Apply Theorem~\ref{thm:structure}. After the unitary change of basis
\begin{equation}
    \rho\mapsto U\rho U^*,
    \qquad
    E_a^x\mapsto UE_a^xU^*,
    \qquad
    F_b^y\mapsto UF_b^yU^*,
\end{equation}
and then suppressing the $U$'s from the notation, write
\begin{equation}
    \mathcal H
    =
    \bigoplus_{k=1}^r
       \left(\mathcal H_{A,k}\otimes\mathcal H_{B,k}\right),
    \qquad
    \mathcal H_{A,k}=\C^{n_k},
    \quad
    \mathcal H_{B,k}=\C^{m_k}.
\end{equation}
Because $E_a^x\in\Aalg$ and $F_b^y\in\Aalg'$, there are operators $\Tilde{E}_{a,k}^x\in\mathcal B(\mathcal H_{A,k})$ and $\Tilde{F}_{b,k}^y\in\mathcal B(\mathcal H_{B,k})$ such that
\begin{equation}
    E_a^x = \bigoplus_{k=1}^r \left( \Tilde{E}_{a,k}^x \otimes \id_{B,k}\right) \:,
    \qquad
    F_b^y = \bigoplus_{k=1}^r \left( \id_{A,k}\otimes \Tilde{F}_{b,k}^y \right) \:.
\end{equation}
Positivity of a block-diagonal operator is equivalent to positivity of every block. The POVM normalisation equations also hold block by block. Therefore, for every $k$,
\begin{equation}
    \Tilde{E}_{a,k}^x\geq0\: , \qquad \sum_a \Tilde{E}_{a,k}^x=\id_{A,k} \:, \qquad 
    \Tilde{F}_{b,k}^y\geq0, \qquad \sum_b \Tilde{F}_{b,k}^y=\id_{B,k}.
\end{equation}
Therefore, these define valid local POVMs for Alice and Bob in every sector $k$.

Let $P_k$ be the orthogonal projection onto the $k$-th sector, $\mathcal{H}_{A,k} \otimes \mathcal{H}_{B,k}$, and set
\begin{equation}
    q_k:=\TrX{}{\rho P_k} \:.
\end{equation}
Then $q_k\geq 0 $ and $\sum_k q_k=1$. For $q_k>0$, define the normalised state
\begin{equation}
    \rho_k=\frac{P_k\rho P_k}{q_k}
\end{equation}
on $\mathcal H_{A,k}\otimes\mathcal H_{B,k}$. (Sectors in which $q_k=0$ can be omitted.)

The product $E_a^xF_b^y$ is now represented by block-diagonal tensor products:
\begin{equation}
    E_a^xF_b^y = \bigoplus_k \left(\Tilde{E}_{a,k}^x\otimes \Tilde{F}_{b,k}^y\right) \:.
\end{equation}
Consequently, the off-diagonal blocks $P_k\rho P_\ell$ for $k\neq \ell$ make no contribution to the trace. Explicitly,
\begin{equation}
\begin{aligned}
    p(a,b\mid x,y) &= \TrX{}{\rho E_a^xF_b^y}\\
    &= \sum_{ijk\ell}\TrX{}{ (P_i\rho P_\ell) (P_k E_a^xF_b^y P_j)} \\
    & = \sum_{ik\ell} \TrX{}{ (P_i\rho P_\ell) (P_k \Tilde{E}_{a,k}^x\otimes \Tilde{F}_{b,k}^y P_k)}\\
    &= \sum_{k:q_k>0}q_k \TrX{}{\rho_k \left(\Tilde{E}_{a,k}^x\otimes \Tilde{F}_{b,k}^y\right)}\:.
\end{aligned}
\end{equation}
This already shows that the commuting strategy is a finite classical mixture of tensor-product strategies acting on a matching mixture of state preparations.

For completeness, we now show that these matching mixtures can be interpreted as the action of a single tensor-product strategy on a single state. Define the Alice- and Bob-specific Hilbert spaces,
\begin{equation}
    \widetilde{\mathcal H}_A =\bigoplus_k\mathcal H_{A,k} \:,
    \qquad
    \widetilde{\mathcal H}_B =\bigoplus_k\mathcal H_{B,k}\:,
\end{equation}
and the local POVMs
\begin{equation}
    \widetilde{E}_a^x=\bigoplus_k \Tilde{E}_{a,k}^x \:,
    \qquad
    \widetilde{F}_b^y=\bigoplus_k \Tilde{F}_{b,k}^y\: .
\end{equation}

We can describe the placement of the block states using projectors. Let $Q_{A,k}$ be the orthogonal projector from $\widetilde{\mathcal H}_A$ onto its summand $\mathcal H_{A,k}$, and let $Q_{B,k}$ be defined similarly. Then
\begin{equation}
    Q_k^{\mathrm{match}} = Q_{A,k}\otimes Q_{B,k}
\end{equation}
projects onto the matching sector $\mathcal H_{A,k}\otimes\mathcal H_{B,k}$ inside $\widetilde{\mathcal H}_A\otimes\widetilde{\mathcal H}_B$.  Under the natural identification of this sector with the original
$k$-th block, let $\overline{\rho}_k$ denote the operator that equals $\rho_k$ on that sector and that equals zero on its orthogonal complement. Equivalently, let
\begin{equation}
    Q_k^{\mathrm{match}}\overline{\rho}_k Q_k^{\mathrm{match}} =
    \overline{\rho}_k \:, \qquad\left.\overline{\rho}_k\right|_{\mathcal H_{A,k}\otimes\mathcal H_{B,k}} = \rho_k \:.
\end{equation}
The matching projectors are mutually orthogonal. We may therefore define
\begin{equation}
    \widetilde{\rho} = \sum_{k:q_k>0}q_k\overline{\rho}_k \:.
\end{equation}
This is a positive and trace-one matrix, hence it is a state. By construction, we then have that
\begin{equation}
    \TrX{}{\widetilde{\rho} \left(\widetilde{E}_a^x\otimes \widetilde{F}_b^y\right)} = \sum_{k:q_k>0}q_k \TrX{}{\rho_k\left(E_{a,k}^x\otimes F_{b,k}^y\right)} \:.
\end{equation}
Finally, by inserting into the previous equation, we get
\begin{equation}
    \begin{aligned}
        \TrX{}{\widetilde{\rho} \left(\widetilde{E}_a^x\otimes \widetilde{F}_b^y\right)} =p(a,b| x,y).
    \end{aligned}
\end{equation}

We see that the previously introduced projector $P_k$ could not be used directly in this last construction because $P_k$ acts on the original common space $\mathcal{H}$, keeping the interpretation of sectors as `a given preparation is measured by a corresponding measurement' apparent. On the other hand, $Q_k^{\mathrm{match}}$ acts on the new space $\widetilde{\mathcal H}_A\otimes\widetilde{\mathcal H}_B$ which hides away this sector-matching property: all preparations are randomised into a common mixture $\widetilde{\rho}$ and so are the measurements into the joint POVM $\{ \widetilde{E}_a^x\otimes \widetilde{F}_b^y\}$. In other words, the fact that each POVM effect has a block-decomposition such that each block acts non-trivially on exactly one of the components of the mixture $\widetilde{\rho}$ has been hidden. Nevertheless, the two projectors play the same role: each selects the sector $\mathcal H_{A,k}\otimes\mathcal H_{B,k}$ in its respective ambient Hilbert space.

Thus, the original commuting behaviour has an exact finite-dimensional tensor-product realisation. Therefore $\Cc\subseteq\Ct$, completing the proof.
\end{proof}

\subsubsection{Interpretation of the result}
The theorem identifies the two \emph{sets of finite-dimensional behaviors}; it does not assert that the original common Hilbert space $\mathcal H$ must itself be one tensor product on which Alice and Bob acts on opposite factors. In general, it decomposes as a direct sum of tensor products, i.e. a \textit{sectorised} tensor product,
\begin{equation}
    \mathcal{H} \cong \bigoplus_k \left(\mathcal H_{A,k}\otimes\mathcal H_{B,k}\right).
\end{equation}
The sector label $k$ behaves as shared classical randomness.  By placing the sector-$k$ states in the matching $k$-by-$k$ sectors of one larger tensor product, the same behaviour is reproduced exactly. No argument based on a limit, closure, or approximation has been involved.
The presence of these sectorial constraints is one of the reasons why Tsirelson's problem has been hard to solve in infinite dimensions: there, the existence of a set of minimal projections `indexing' each sector is a much more subtle problem \cite{ScholzWerner2008}. Often, this set is infinite and continuous, leading to many topological and functional-analysis issues when trying to reproduce the finite-dimensional proof. 

An unnecessary but interesting consequence of the theorem is that the dimensions of the commuting and tensor-product representations are the same. As we have seen in the proof, in the absence of sectorial constraints, the two pictures are unitarily equivalent, so \eqref{eq:Tsirelson_superstrong} is verified. When these are present, some bookkeeping is needed to interpret the sector-specific tensor-product strategies as commuting strategies. Hence, only \eqref{eq:Tsirelson_strong} is truly verified, albeit the required Hilbert-space dimensions are the same on both sides. 

This is where the RQT variant differs: while the sets of predicted correlations are the same, the Hilbert-space dimension needed to obtain one specific distribution may be greater in the commuting case.

\subsection{The finite-dimensional RQT variant}
We now impose the additional requirement that all states and effects in both models be represented by real symmetric matrices.  Throughout most of this section, `tensor product' means the tensor product inherited from the underlying (real) Hilbert space, which is the ordinary Kronecker product in the matrix representation. 
For our purposes, we will need another notion of `tensor product', the $R$/flag-product introduced in Refs.~\cite{Hoffreumon2025} and used as the key ingredient to derive the results of Ref.~\cite{HoffreumonWoods2026}. 

\subsubsection{Real models of separate measurements} 
Since RQT simply consists of substituting the complex Hilbert space with its real counterpart, all notions defined for QTs are directly transposed. Assuming a real Hilbert space $\mathcal{H}_\R$ of dimension $d$ finite, the main change is that a positive operator is represented by a real $d \times d$ matrix with non-negative spectrum.
A real quantum state remains a trace-one positive operator, while a real POVM remains a collection of positive operators summing up to the identity and indexed by an outcome sampled from a finite set. It is also assumed that the POVMs may actually be chosen from a collection of POVMs depending on a setting sampled from a finite set. 

For representational simplicity, we will assume that all dimensions are even in the following. In practice, every unknown dimension will be multiplied by a factor of 2, so $d$ becomes $2d$. This will simplify the comparison with the complex model, and it makes sense since, as inner product spaces, the dimension of a complex Hilbert space is twice that of the isomorphic real Hilbert space. 

Like in the complex case, there is a real tensor-product measurements model (we use different letters for the operators in order not to confuse them with their complex counterparts)
\begin{equation}
    p(a,b| x,y)=\TrX{}{\sigma\left(\Tilde{A}_a^x\otimes \Tilde{B}_b^y\right)}\:,
\end{equation}
where $\sigma $ is a state over the space $\mathcal{H}_A \otimes \mathcal{H}_B$ of dimension $2d_A2d_B$ and where $\{\Tilde{E}_a^x\}$ (respectively, $\{\Tilde{F}_b^y\}$) is a real POVM over the space $\mathcal{H}_A$ (resp., $\mathcal{H}_B$). So, for a fixed collection of finite sets $A,B,X,Y$, we define $\CtR$ 
to be the set of all conditional distributions which can be obtained from a pair of real quantum tensor-product measurements, for all possible dimensions, states, and pairs of tensor-product POVMS.

Similarly, there is a commuting measurements model 
\begin{equation}
    p(a,b| x,y) = \TrX{}{\sigma\: A_a^xB_b^y}, \qquad [A_a^x,B_b^y]=0,
\end{equation}
where $\{A_a^x\}_A, \{B_b^y\}_b \subset \mathcal{B}(\mathcal{H}_A \otimes \mathcal{H}_B)$ are two real quantum POVMs. We define the corresponding set of all conditional distributions as $\CcR$, 
considering again all possible dimensions, states, and pairs of commuting POVMs. 

Like with QT, it immediatly follows that
\begin{equation}\label{eq:CtR_in_CcR}
    \CtR \subseteq \CcR \:.
\end{equation}
Notice how RQT can always be seen as a restriction of QT in which one only uses the operators that are expressed as real matrices in a given basis. Because of that, it follows that
\begin{equation}\label{eq:CtR_in_Ct}
    \CtR \subseteq \Ct\:,
\end{equation} 
and 
\begin{equation}\label{eq:CcR_in_Cc}
     \CcR \subseteq \Cc\:.
\end{equation}

\paragraph{The R-product measurements.} Finally, we need to introduce a third model that will help us in our purposes. This model happens to be inferred from the image of the complex tensor-product model within the real tensor-product model. We only introduce the properties we need from it and refer the interested reader to Ref.~\cite{HoffreumonWoods2026}.
Here, the local spaces also have dimensions $2d_A$ and $2d_B$ but are composed under a different kind of `tensor product', which is the Segre embedding obtained by viewing the underlying space not as a real Hilbert space but as the representation of a complex projective space over a real Hilbert space. This bipartite composition rule is the $R$-product~\cite[Definition~A.9]{HoffreumonWoods2026} defined by
\begin{equation}
    A\otimes_R B = P_{AB}(A\otimes B)P_{AB} \:.
\end{equation}
with $P_{AB}$ an orthogonal projector which decomposes as
\begin{equation}
    P_{AB} := \frac{1}{2}\left(I\otimes \id_{d_A} \otimes I\otimes \id_{d_B} - J \otimes \id_{d_A} \otimes J \otimes \id_{d_B} \right) \:,
\end{equation}
where $I = \id_2$ and $J = \left(\begin{smallmatrix}
    0 & -1 \\ 1 & 0
\end{smallmatrix} \right)$. 
Besides this modification of the tensor product, the distributions are obtained in the same way as the tensor-product measurements model:
\begin{equation}
    p(a,b| x,y) = \TrX{}{\sigma\left(\Tilde{A}_a^x\otimes_R \Tilde{B}_b^y\right)}\:.
\end{equation}
We define the corresponding set of all conditional distributions as $\CRtR$, once again considering all possible dimensions, states, and pairs of commuting POVMs. By definition of the $R$-product, it should be clear that 
\begin{equation}\label{eq:CRtR_in_CtR}
    \CRtR \subseteq \CtR
\end{equation}

There are several remarkable properties of the $R$-product measurements model that will be needed in the following. First of all, since
\begin{equation}
    \sum_{a,b} A_a^x\otimes_R B_b^y = P_{AB}\:,
\end{equation}
the normalization of the probabilities is equivalent to
\begin{equation}
    \TrX{}{(\sigma P_{AB})}=1 \:.
\end{equation}
For a positive trace-one $\sigma$, this is the same as saying that $\sigma$ is supported on the range of $P_{AB}$.

The second property is that the distributions obtained from POVMs which effects respectively obey
\begin{subequations}
\begin{equation}
    [A_a^x,J \otimes \id_{d_A}] = 0\:, \qquad [B_b^y,J \otimes \id_{d_B}] = 0\:, 
\end{equation}
and whose state obey
\begin{equation}
    [\sigma, J_{AB}] = 0 \:,
\end{equation}
with $ J_{AB} := \frac{1}{2}\left( I \otimes \id_{d_A} \otimes J\otimes \id_{d_B} + J \otimes \id_{d_A} \otimes I \otimes \id_{d_B} \right)$, are in one-to-one correspondence with the distributions obtained from complex tensor-product commuting measurements over a complex Hilbert space of dimension $d_Ad_B$~\cite[Proposition 5]{HoffreumonWoods2026}.
\end{subequations}

The identification of the complex matrices leading to $\Ct$ within a subset of the reals matrices leading to $\CRtR$ is constructed via the triple of maps 
\begin{subequations}\label{eq:realification}
\begin{equation}
    (M,R_A,R_B): M_{d_Ad_B}(\C) \times M_{d_A}(\C) \times  M_{d_B}(\C) \longrightarrow M_{2d_A2d_B}(\R) \times M_{2d_A}(\R) \times  M_{2d_B}(\R) \:,
\end{equation}
defined by

    \begin{align}
        & R(X) = I \otimes \mathrm{Re}(X) + J \otimes \mathrm{Im}(X) \:;\\
        & M_{AB}(\tau) \cong \frac{1}{4}\left[ ( I\otimes I - J \otimes J) \otimes \mathrm{Re}(\tau) + (I \otimes J + J \otimes I) \otimes \mathrm{Im}(\tau) \right] \:.
    \end{align}
\end{subequations}
(The second equality is true only up to an inversion of the second and third tensor factors when $M_{AB}(\tau)$ is seen as a matrix acting on $\R^2 \otimes \R^2 \otimes \R^{d_A} \otimes \R^{d_B}$, hence the $\cong$ sign.)
These three mappings respectively identify valid complex states, local effects of Alice, and local effects of Bob, with valid real states, local effects of Alice, and local effects of Bob in such a way that all distributions are preserved. 
That is, for any finite-dimensional complex state $\tau$ and any two finite families of local complex POVMs $\{G_a^x\}$ and $\{H_b^y\}$, $M(\tau)$ is a real state, the two image families $\{R_A(G_a^x)\}_x$ and $\{R_B(H_b^y)\}$ are real POVMs, and
\begin{equation}
    \TrX{}{\tau\left(G_a^x\otimes H_b^y\right)} =
    \TrX{}{M(\tau) \left(R_A(G_a^x)\otimes_R R_B(H_b^y)\right)} \:.
\end{equation}

This implies that 
\begin{equation}\label{eq:Ct_in_CRtR}
    \Ct \subseteq \CRtR \:.
\end{equation}
Because $\CRtR \subseteq \CtR \subseteq \Ct$, we obtain a first equality:
\begin{equation}\label{eq:Ct_is_CRtR}
    \Ct = \CRtR \:.
\end{equation}
This might look strange: on the one hand, any distribution in $\CRtR$ can be seen trivially as one in $\Ct$. On the other hand, to see a distribution in $\Ct$ as one in $\CRtR$, the above construction should be used, and so the dimension must have been doubled. However, this is still an equivalence because the dimension is an unbounded but finite parameter: the $R$-product distributions for $2d_A2d_B$ dimensions are a subset of the tensor product distributions, which are a subset of the $R$-product distributions for $4d_A4d_B$ dimensions, which are a subset of the tensor product distributions in $4d_A4d_B$... and so on \textit{ad infinitum}.

Finally, the last property is that $R$-product measurements on states supported on the range of $P_{AB}$ produce exactly the same distributions as the corresponding tensor-product measurements~\cite[Proposition~6]{HoffreumonWoods2026}, i.e.,
\begin{equation}
    \TrX{}{M(\tau)\left(R_A(G_a^x)\otimes_R R_B(H_b^y)\right)} = \TrX{}{M(\tau)\left(R_A(G_a^x)\otimes R_B(H_b^y)\right)} \:,
\end{equation}
which in particular implies that~\cite[Eq.~(15)]{HoffreumonWoods2026}
\begin{equation}
    \TrX{}{\tau\left(G_a^x\otimes H_b^y\right)} = \TrX{}{M(\tau)\left(R_A(G_a^x)\otimes_R R_B(H_b^y)\right)}
\end{equation}
for every $a,b,x,y$. This recovers the well-known result that every distribution obtained with complex tensor-product measurements can be obtained using real tensor-product measurements instead~\cite{Pal_2008,Moroder_2013},
\begin{equation}\label{eq:Ct_in_CtR}
    \Ct \subseteq \CtR \:.
\end{equation}
Comparing it \cref{eq:CtR_in_Ct}, we obtain that the set of all finite, tensor-product distributions do not depend on the field, 
\begin{equation}\label{eq:Ct_is_CtR}
    \Ct = \CtR \:.
\end{equation}
This last property illustrates the usefulness of the $R$-product model: it provides a bridge between the complex and real tensor-product models.

\subsubsection{Proving the equivalence between the real commuting and real tensor product measurements}
We now prove that, also in finite-dimensional RQT, the commuting measurements and tensor-product measurements produce the same distributions. 
\begin{proposition}[Real finite-dimensional Tsirelson equivalence]
\label{prop:real-tsirelson}
For any collection of settings and outcomes sets $X,Y,A,B$, the sets of all possible distributions achievable under commuting, tensor-product, or $R$-product real measurements are equivalent, 
\begin{equation}\label{eq:CtR_is_CcR}
    \CtR = \CcR \:.
\end{equation}
The equality concerns sets $X,Y$ and $A,B$ and permits an increase in the local dimensions of a realising strategy.
\end{proposition}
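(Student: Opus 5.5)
The plan is to close a chain of inclusions using only results already assembled above, so that no real version of Artin--Wedderburn is needed. We already have $\CtR\subseteq\CcR$ from \cref{eq:CtR_in_CcR}. Since real symmetric matrices are in particular Hermitian complex matrices, we also have $\CcR\subseteq\Cc$ by \cref{eq:CcR_in_Cc}. Indeed, a real commuting strategy $(\sigma,\{A_a^x\},\{B_b^y\})$ on $\R^{d}$ can be read verbatim as a complex commuting strategy on $\C^{d}$. The state stays positive with unit trace, the effects stay positive and sum to the identity, commutation is unaffected, and the Born probabilities $\TrX{}{\sigma A_a^xB_b^y}$ are literally the same numbers.

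The next step is to apply \cref{thm:equivalence} (the complex finite-dimensional Tsirelson equivalence). This gives $\Cc\subseteq\Ct$: the complexified real strategy is rewritten, through the sector decomposition and the matched-sector mixture $\widetilde\rho$, as a complex tensor-product strategy of the same total dimension.

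The last link is $\Ct\subseteq\CtR$, which is \cref{eq:Ct_in_CtR}. It comes from the locality-preserving realification \cref{eq:realification}: we set $\sigma:=M(\widetilde\rho)$, $\widetilde A_a^x:=R_A(\widetilde E_a^x)$ and $\widetilde B_b^y:=R_B(\widetilde F_b^y)$. By the cited Propositions~5 and~6 of Ref.~\cite{HoffreumonWoods2026}, these are valid real states and POVMs. The probabilities are unchanged first under $\otimes_R$, and then under the ordinary Kronecker product, because $M(\widetilde\rho)$ is supported on the range of $P_{AB}$. Chaining everything gives
\begin{equation}
\CtR\subseteq\CcR\subseteq\Cc\subseteq\Ct\subseteq\CtR\:,
\end{equation}
so all of these sets coincide. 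Combined with \cref{eq:Ct_is_CRtR}, this also yields equality with $\CRtR$, which is the additional claim about $R$-product measurements in the statement.

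The point needing most care is the bookkeeping of dimensions and of what the set definitions allow. The detour through $\C$ doubles each local dimension: a commuting real model on $\R^{d}$ becomes a real tensor model on $\R^{2n}\otimes\R^{2m}$ with $n,m$ coming from the complex sectors. This is precisely why the statement only asserts equality of the sets while allowing the local dimensions of the realising strategy to grow. It is harmless because $\CtR$ is defined as a union over all finite dimensions, and the evenness convention is met automatically since $2n$ and $2m$ are even.

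A second subtlety, which I would state explicitly, is that the complex Artin--Wedderburn decomposition of the complexified algebra $C^*(\{A_a^x\})$ generally produces genuinely complex blocks $\widetilde E_{a,k}^x$. This is the obstruction mentioned in the overview. We do not try to keep them real; we just let the realification absorb the complex structure, which is legitimate because the target is only the set of behaviours. Finally, I would check that sectors with $q_k=0$, and the off-diagonal blocks of $\rho$, cause no trouble after realification. This holds because $M$ is linear and positivity-preserving, so it acts on $\widetilde\rho$ as a whole.
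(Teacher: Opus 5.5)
Your proof is correct and is essentially the paper's argument, reorganised as a chain of already-established inclusions. You recombine the sectors over $\C$ via \cref{thm:equivalence} and then realify the single resulting complex tensor strategy via \cref{eq:Ct_in_CtR}. The paper instead realifies each sector with $(M_k,R_{A,k},R_{B,k})$ before recombining; by linearity of $M$ and $R$ and the reordering $\R^2\otimes\bigoplus_k\R^{n_k}\cong\bigoplus_k\R^{2n_k}$, the two orders yield the same real strategy $(\widehat\sigma,\widehat A_a^x,\widehat B_b^y)$.

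One small slip: the complex tensor model from \cref{thm:equivalence} lives on $\bigl(\bigoplus_k\C^{n_k}\bigr)\otimes\bigl(\bigoplus_k\C^{m_k}\bigr)$. Its dimension is $\bigl(\sum_k n_k\bigr)\bigl(\sum_k m_k\bigr)>\sum_k n_km_k$ whenever two or more sectors occur, so it is not of the same total dimension as the commuting model; this is immaterial because the correlation sets are unions over all finite dimensions.
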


\begin{proof}
As in the complex case, one direction of the inclusion, \cref{eq:CtR_in_CcR}, is immediate. It remains to prove that $\CtR \supseteq \CcR$. 

Take a conditional distribution in $\mathcal C_{\mathrm{commuting},\mathbb R}^{\mathrm{fin}}$, realized by a real symmetric state $\sigma \in M_{d_Ad_B}(\R)$ and real symmetric commuting POVM families $\{A_a^x\} \subset M_{d_A}(\R)$ and $\{B_b^y\}\subset M_{d_B}(\R)$. Regard these real matrices as complex matrices and apply exactly the structure-theorem and block-decomposition steps in the proof of Theorem~\ref{thm:equivalence}.  

Following that proof, after applying the unitary to the state and every effect, we reuse the same notation for the transformed matrices. This gives,
\begin{equation}
    \mathcal H = \bigoplus_{k=1}^r \left(\mathcal H_{A,k}\otimes\mathcal H_{B,k}\right)\:,
    \qquad
    \dim_{\mathbb C}\mathcal H_{A,k}=n_k\:,
    \quad
    \dim_{\mathbb C}\mathcal H_{B,k}=m_k\:,
\end{equation}
and
\begin{equation}
    A_a^x = \bigoplus_k \left(\Tilde{A}_{a,k}^x\otimes \id_{B,k}\right)\:,
    \qquad
    B_b^y = \bigoplus_k \left(\id_{A,k}\otimes \Tilde{B}_{b,k}^y\right).
\end{equation}

However, the key limitation of applying the Artin-Wedderburn theorem is that the unitaries it supplies need not be real. (The reasons behind this limitation are discussed below.)

Therefore the block matrices $\Tilde{A}_{a,k}^x$, $\Tilde{B}_{b,k}^y$, and the block states introduced below need not themselves have real entries. This is precisely where the realification result from Ref.~\cite{HoffreumonWoods2026} is needed.

Let $P_k$ be the projection onto the $k$-th block and, exactly as in the preceding proof, define
\begin{equation}
    q_k=\TrX{}{\sigma P_k}\:, \qquad \rho_k=\frac{P_k\sigma P_k}{q_k} \quad\text{when }q_k>0 \:.
\end{equation}
Starting from $ p(a,b| x,y) = \TrX{}{\rho A_a^x B_b^y}$, the calculation in the proof of Theorem~\ref{thm:equivalence} gives
\begin{equation}
    p(a,b| x,y) = \sum_{k:q_k>0}q_k \TrX{}{\rho_k\left(\Tilde{E}_{a,k}^x\otimes \Tilde{F}_{b,k}^y\right)}\:,
\end{equation}
where $\rho_k$, $\Tilde{E}_{a,k}^x$ and $\Tilde{F}_{b,k}^y$ may not be real matrices for all $k,a,x,b,y$'s. 

For each $k$ with $q_k>0$, apply the bipartite realification procedure described in \cref{eq:realification} to the state $\rho_k$ and to the complete POVM families $\{\Tilde{E}_{a,k}^x\}_{a,x}$ and $\{\Tilde{F}_{b,k}^y\}_{b,y}$. Denote the corresponding dimension-dependent maps by $M_k,R_{A,k},R_{B,k}$, and set
\begin{equation}
    \sigma_k=M_k(\rho_k), \qquad \Tilde{A}_{a,k}^x=R_{A,k}(\Tilde{E}_{a,k}^x),
    \qquad
    \Tilde{B}_{b,k}^y=R_{B,k}(\Tilde{F}_{b,k}^y).
\end{equation}
Then $\sigma_k$ is a real state ($\sigma_k \geq 0$ and $\TrX{}{\sigma_k} = 1$) on $\mathbb R^{2n_k}\otimes\mathbb R^{2m_k}$, whereas 
\begin{equation}
    \Tilde{A}_{a,k}^x\geq0\: , \qquad \sum_a A_{a,k}^x=\id_{2n_k} \:,
\end{equation}
and
\begin{equation}
    \Tilde{B}_{b,k}^y\geq0\:, \qquad \sum_b \Tilde{B}_{b,k}^y=\id_{2m_k} \: .
\end{equation}
For every sector $k$, settings $x,y$ and outcomes $a,b$, we know that~\cite[Proposition~6]{HoffreumonWoods2026}
\begin{equation}
    \TrX{}{\rho_k \left(\Tilde{E}_{a,k}^x\otimes \Tilde{F}_{b,k}^y\right)}
    =
    \TrX{}{\sigma_k \left(\Tilde{A}_{a,k}^x\otimes \Tilde{B}_{b,k}^y\right)}\:.
\end{equation}

We now combine the realified blocks, mirroring the final step of the proof of \cref{thm:equivalence}. Define the real local Hilbert spaces
\begin{equation}\label{eq:Hhat}
    \widehat{\mathcal H}_A
    =
    \bigoplus_k\mathbb R^{2n_k}\:,
    \qquad
    \widehat{\mathcal H}_B
    =
    \bigoplus_k\mathbb R^{2m_k}\:,
\end{equation}
and the real local POVMs
\begin{equation}
    \widehat{A}_a^x = \bigoplus_k \Tilde{A}_{a,k}^x,
    \qquad
    \widehat{B}_b^y = \bigoplus_k \Tilde{B}_{b,k}^y.
\end{equation}
(If a zero-weight block was omitted above, choose any real symmetric POVM on that block; it has no effect on the behaviour anyway.) The blockwise positivity and normalisation equations show that $\{\widehat{A}_a^x\}_a$ and $\{\widehat{B}_b^y\}_b$ are valid real symmetric POVMs. 

As in the preceding proof, we use projectors to place each block state in the new tensor-product space.  Let $\widehat Q_{A,k}$ be the orthogonal projector from $\widehat{\mathcal H}_A$ onto its summand $\mathbb R^{2n_k}$, and let $\widehat Q_{B,k}$ project $\widehat{\mathcal H}_B$ onto $\mathbb R^{2m_k}$. Then
\begin{equation}
    \widehat Q_k^{\mathrm{match}}
    =
    \widehat Q_{A,k}\otimes\widehat Q_{B,k}
\end{equation}
projects onto the matching realified sector $\mathbb R^{2n_k}\otimes\mathbb R^{2m_k}$ inside $\widehat{\mathcal H}_A\otimes\widehat{\mathcal H}_B$. Let $\overline{\sigma}_k$ be the operator that equals $\sigma_k$ on this sector and is zero on its orthogonal complement.  That is, 
\begin{equation}
    \widehat Q_k^{\mathrm{match}}\overline{\sigma}_k
       \widehat Q_k^{\mathrm{match}}
    =
    \overline{\sigma}_k,
    \qquad
    \left.\overline{\sigma}_k\right|_
       {\mathbb R^{2n_k}\otimes\mathbb R^{2m_k}}
    =
    \sigma_k.
\end{equation}
The projectors $\widehat Q_k^{\mathrm{match}}$ are real, symmetric, and mutually orthogonal. We therefore define
\begin{equation}
    \widehat{\sigma} = \sum_{k:q_k>0}q_k\overline{\sigma}_k.
\end{equation}
Every $\overline{\sigma}_k$ is real symmetric and positive, and the different terms have mutually orthogonal supports.
It follows that
\begin{equation}
    \widehat\sigma\geq0\:,
    \qquad
    \TrX{}{\widehat\sigma}=\sum_kq_k=1\:.
\end{equation}
Hence $\widehat\sigma$ is a valid real state.

These matching projectors are different from those in the preceding proof because realification has doubled each local block dimension: $\widehat Q_k^{\mathrm{match}}$ acts on $\mathbb R^{2n_k}\otimes\mathbb R^{2m_k}$, whereas $Q_k^{\mathrm{match}}$ acts on $\mathcal H_{A,k}\otimes\mathcal H_{B,k}$.  They nevertheless play the same role of selecting the $k$-th matching sector.

Finally, the direct-sum support of $\widehat\sigma$ makes $\TrX{}{ \widehat\sigma \left(\widehat{A}_a^x\otimes\widehat {B}_b^y\right)}$ equal to $\sum_k  q_k\TrX{}{\sigma_k \left(\Tilde{A}_{a,k}^x\otimes \Tilde{B}_{b,k}^y\right)}$. Grouping all key steps of the proof into one equation yields:
\begin{equation}
    \begin{aligned}
        \TrX{}{\sigma\: A_a^xB_b^y} =& 
    \sum_{k:q_k>0}q_k
    \TrX{}{\rho_k \left(\Tilde{E}_{a,k}^x\otimes \Tilde{F}_{b,k}^y\right)}  \\
    =&
    \sum_{k:q_k>0}q_k
    \TrX{}{\sigma_k \left(\Tilde{A}_{a,k}^x\otimes \Tilde{B}_{b,k}^y\right)}\\
    =&\TrX{}{ \widehat\sigma \left(\widehat{A}_a^x\otimes\widehat{B}_b^y\right)}.
    \end{aligned}
\end{equation}
We have therefore established that $\CcR \subseteq\CtR$, combining it with the reverse inclusion leads to the sought relation, \cref{eq:CtR_is_CcR}, which concludes the proof.
\end{proof}

With this proposition, we see that the three real models of separate measurements lead to the same set of correlations,
\begin{equation}
    \boxed{ \CRtR = \CtR = \CcR \:.}
\end{equation}
It follows, therefore, that Tsirelson's problem has a positive answer in finite-dimensional RQT.  Before concluding this section by comparing the real quantum correlations to their complex counterparts, some comments about the proofs are in order. We phrase each of these as a separate remark for ease of reference.

\begin{remark}[Difference with the complex case]
    The main difference with the complex case is the dimension mismatch. When proving that $\Cc \subseteq\Ct$, we constructed tensor product-measurements over the same Hilbert space as the commuting measurements. I.e., we showed that, given a state $\rho \in M_{d_Ad_B}(\C)$, and two collections of pairwise commuting POVMs $\{E^x_a\}_a^x$ and $\{F^y_b\}_b^y$, both in $M_{d_Ad_B}(\C)$ as well, there always exist a state $\widetilde{\rho} \in M_{d_Ad_B}(\C)$ and two collections of commuting POVMs $\{\widetilde{E}^x_a\}_a^x \subset M_{d_A}(\C)$ and $\{\widetilde{F}^y_b\}_b^y \subset M_{d_B}(\C)$, such that 
    \begin{equation}
        \TrX{}{\rho\: E^x_aF^y_b} = \TrX{}{\widetilde{\rho}\left( \widetilde{E}^x_a \otimes \widetilde{F}^y_b \right) } \:.
    \end{equation}
    This identification is almost an isomorphism between the two models, with a minor obstruction arising from the potential presence of sectors that must be interpreted as an additional source of randomness in $\Ct$. 

    Differently, when proving that $\CcR \subseteq \CtR$, we constructed tensor product-measurements over a Hilbert space four times the dimension of the one needed to represent the commuting measurements. I.e., we showed that, given a state $\sigma \in M_{d_Ad_B}(\R)$, and two collections of pairwise commuting POVMs $\{A^x_a\}_a^x$ and $\{B^y_b\}_b^y$, both in $M_{d_Ad_B}(\R)$ as well, there always exist a state $\widehat{\sigma} \in M_{2d_A2d_B}(\R)$ and two collections of commuting POVMs $\{\widehat{A}^x_a\}_a^x \subset M_{2d_A}(\R)$ and $\{\widehat{B}^y_b\}_b^y \subset M_{2d_B}(\R)$, such that 
    \begin{equation}
        \TrX{}{\sigma\: A^x_aB^y_b} = \TrX{}{\widehat{\sigma}\left( \widehat{A}^x_a \otimes \widehat{B}^y_b \right) } \:.
    \end{equation}
    This identification is clearly an embedding, in contrast to the almost-isomorphism in the complex case. 
    Therefore, the equivalence between $\CtR$ and $\CcR$ is more akin to the one between $\Ct$ and $\CRtR$: matching the predicted distributions imposes an ever-increasing dimension and thus an ever-increasing number of predicted distributions.
\end{remark}

\begin{remark}[Why the proof for the complex case could not be directly applied in the real case]
The obstruction to applying the Artin-Wedderburn decomposition in exactly the same way as in (complex) QT comes from the structure of finite-dimensional real $C^*$-algebras. Whereas every finite-dimensional complex $C^*$-algebra decomposes as a direct sum of simple complex factors, each isomorphic to some $M_n(\C)$, a finite-dimensional real $C^*$-algebra may contain simple factors of three different types, $M_n(\R), M_n(\C), M_n(\mathbb H)$, such that the abstract decomposition theorem for real $C^*$-algebras reads 
\begin{equation}
    \mathcal{A}_\R = \left(\bigoplus_k M_{n_k}(\R)\right) \oplus \left(\bigoplus_j M_{n_j}(\C)\right) \oplus \left(\bigoplus_\ell M_{n_\ell}(\mathbb{H})\right) \:.
\end{equation}

This distinction ultimately stems from the fact that $\R$ is not algebraically closed; a real representation may carry an additional complex or quaternionic structure that can only be irreducibly represented over the corresponding number field.

For a commuting pair of real factor $C^*$-subalgebras (i.e. both are isomorphic to $M_d(\R)$), the usual tensor-factor construction can still be carried out after a suitable choice of representation; the total dimension is guaranteed to be at least $d^2$, so the underlying space can be decomposed into $\R^d \otimes \R^d$ and each factor will only act on one side of the tensor product. The situation is more subtle when the algebras are sectorised or contain factors not isomorphic to $M_d(\R)$ but rather to $M_d(\C)$ or $M_d(\mathbb{H})$. In particular, the direct-sum decomposition into sectors need not admit a single tensor-product interpretation on the original real Hilbert space, as the tensor product of the abstract factors may be impossible to faithfully represent on the original space.

This phenomenon is particularly relevant for real superselected subtheories. Superselection constraints can restrict the observable algebra to a real subalgebra whose centre contains, for example, a nontrivial complex structure, namely an element $J$ satisfying $J^2=-\id$. The factor corresponding to this central element (the one selected by the projector associated with the central element $J$) is then automatically a complex matrix algebra rather than a real one. In this case, it might not be possible to represent it faithfully without first increasing the dimension of the overall space. Quaternionic structures can arise analogously. 

Thus, although all effects are represented by real matrices, the real $C^*$-algebra they span may not be a direct sum of purely real factors. To illustrate this, a simple example is provided by
\begin{equation}
    \mathcal A=\operatorname{span}_{\R}\{I,J\} \subset M_2(\R), \qquad \text{for }  J=
    \left(\begin{smallmatrix}
        0 & -1\\
        1 & 0
    \end{smallmatrix}\right) \:.
\end{equation}
Since $J^2=-\id$, this real algebra is actually the algebra of the complex numbers, $\mathcal{A}\cong M_1(\C) = \C$. Beause of that, $\mathcal{A}$ is Abelian, so it is equal to its commutant in $M_2(\R)$, $\mathcal{A}' = \mathcal{A}$. Hence, the commuting pair $(\mathcal A,\mathcal A')$ cannot be realised as two independent tensor factors on the same real Hilbert space $\R^2$: there is no decomposition of $\R^2$ into two real Hilbert spaces $\R^2\simeq\mathcal H_A^\R\otimes\mathcal H_B^\R$ over which $\mathcal{A}$ and $\mathcal{A'}$ can act independently as $\mathcal{A}\simeq \operatorname{span}_\R\{X_A \otimes\id_B\}$ and $\mathcal A'\simeq\operatorname{span}_\R\{\id\otimes Y_B\}$, as the dimension is simply too low for both factors to be faithfully represented.

After complexification, however, one has $\mathcal A_{\C} = \operatorname{span}_{\C}\{\id,J\}\cong \C\oplus\C$, corresponding to the two eigenspaces of $J$ with eigenvalues $\pm i$. In this complex representation, the two sectors are one-dimensional and can be regarded as trivial tensor products,
$\C\oplus\C \cong (\C\otimes 1)\oplus(1\otimes\C)$. Hence, both can be represented over $\C^2$.

We therefore see that the real algebra is impossible to decompose within $M_2(\R)$ as sector-wise tensor products. This is a different situation than with the corresponding complex algebra, which automatically admits such a decomposition within $M_2(\C)$ in accordance with \cref{thm:AW}. This example illustrates why the complex Artin-Wedderburn argument cannot simply be transplanted to arbitrary real $C^*$-algebras while keeping the same real Hilbert-space representation: in order to use it, as we did in our proof, one must increase the total dimension of the underlying space.
\end{remark}

\begin{remark}[The proof of \cref{prop:real-tsirelson} gives a constructive proof of $\CRtR = \CcR$]
The construction in the proof of \cref{prop:real-tsirelson} also gives an $R$-tensor realisation. To see this, order the direct sums of \cref{eq:Hhat} as
\begin{equation}
    \widehat{\mathcal H}_A
    \cong
    \mathbb R^2\otimes
       \left(\bigoplus_k\mathbb R^{n_k}\right),
    \qquad
    \widehat{\mathcal H}_B
    \cong
    \mathbb R^2\otimes
       \left(\bigoplus_k\mathbb R^{m_k}\right).
\end{equation}
With this ordering, the $R$-product projector for $\widehat{\mathcal H}_A$ and $\widehat{\mathcal H}_B$ is block-diagonal with respect to the sectorization. Its restriction to the matching sector (i.e. when it forces the operators to be non-zero only if they act on the same sector $k$ in $A$ and $B$) is exactly the $R$-product projector $P_{AB}$ for $\mathbb R^{2n_k}$ and $\mathbb R^{2m_k}$. Since $\widehat\sigma$ is supported only on the matching sectors, we have
\begin{equation}
    \TrX{}{\widehat\sigma \left(\widehat A_a^x\otimes_R\widehat B_b^y\right)} = \sum_{k:q_k>0}q_k
    \TrX{}{\sigma_k \left(\Tilde{A}_{a,k}^x\otimes_R \Tilde{B}_{b,k}^y\right)} \:.
\end{equation}
And because 
\[
\sum_{k:q_k>0}q_k \TrX{}{\sigma_k \left(\Tilde{A}_{a,k}^x\otimes_R \Tilde{B}_{b,k}^y\right)} =  \sum_{k:q_k>0}q_k\TrX{}{\sigma_k \left(\Tilde{A}_{a,k}^x\otimes \Tilde{B}_{b,k}^y\right)} \:,
\]
it follows that $\TrX{}{\widehat\sigma \left(\widehat A_a^x\otimes_R\widehat B_b^y\right)}$ is an exact finite-dimensional $R$-tensor real-quantum realisation of $\TrX{}{\sigma \: A_a^xB_b^y} $. Interestingly, the projector $P_{AB}$ used to simulate the complex Kronecker product has the same interpretation as the projectors used to match sectors on each side of the partition.
\end{remark}

\subsection{Equivalence between real and complex quantum correlations}
We have shown that, within both complex and RQT, the different mathematical notions of separated measurements yield the same set of correlations. We now complete the picture by observing that the real and complex theories themselves generate the same set of finite-dimensional correlations.
\begin{theorem}\label{thm:full_equivalence}
    For any finite sets of settings and outcomes $X,Y,A,B$, if the Hilbert-space dimension is finite but unbounded, then all five notions of separated quantum measurements considered above give rise to the same set of correlations. In symbols,
    \begin{equation}
    \CRtR = \CtR = \CcR = \Ct = \Cc:.
    \end{equation}
\end{theorem}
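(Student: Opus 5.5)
The plan is to assemble the chain of equalities entirely from results already established, without any new construction. Two equalities are already in hand. Proposition~\ref{prop:real-tsirelson}, together with the remark following it, gives $\CRtR = \CtR = \CcR$. Theorem~\ref{thm:equivalence} gives $\Ct = \Cc$. What remains is a single bridge between the real block and the complex block.

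For that bridge, \cref{eq:Ct_is_CtR} directly gives $\Ct = \CtR$. Its two inclusions come from different places. The inclusion $\CtR\subseteq\Ct$ (\cref{eq:CtR_in_Ct}) follows because real symmetric matrices are in particular Hermitian, and real POVMs and states are complex ones. The reverse inclusion $\Ct\subseteq\CtR$ (\cref{eq:Ct_in_CtR}) follows from the realification maps $(M,R_A,R_B)$ of \cref{eq:realification}, combined with \cite[Proposition~6]{HoffreumonWoods2026}.

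Concatenating these gives
\begin{equation*}
\CRtR=\CtR=\CcR,\qquad \CtR=\Ct,\qquad \Ct=\Cc ,
\end{equation*}
and transitivity yields the full statement.

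As a consistency check, I would also note an independent route: $\CcR\subseteq\Cc=\Ct\subseteq\CtR\subseteq\CcR$, using \cref{eq:CcR_in_Cc}, Theorem~\ref{thm:equivalence}, \cref{eq:Ct_in_CtR} and \cref{eq:CtR_in_CcR}. This closes a cycle of inclusions and so gives the same equality without even invoking Proposition~\ref{prop:real-tsirelson}.

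The only point requiring care is the quantifier structure. Each set is defined as a union over all finite dimensions for fixed finite $X,Y,A,B$. The realification step doubles the local dimensions, and the real commuting-to-tensor step quadruples the total dimension. These inclusions are therefore inclusions of unions over unbounded finite dimensions, not dimension-preserving identifications. This is exactly why the hypothesis ``finite but unbounded'' appears in the statement, and I would state it explicitly so the reader sees that no fixed-dimension claim is being made.
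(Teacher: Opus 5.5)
Your proposal is correct and matches the paper's proof. Like the paper, you simply combine \cref{thm:equivalence}, \cref{prop:real-tsirelson}, and the bridge $\Ct=\CRtR=\CtR$ from \cref{eq:Ct_is_CRtR,eq:Ct_is_CtR}. Your side cycle $\CcR\subseteq\Cc=\Ct\subseteq\CtR\subseteq\CcR$ is also valid and shows that \cref{prop:real-tsirelson} is not actually needed; as written, however, the cycle leaves out $\CRtR$, which you must still attach via \cref{eq:Ct_is_CRtR}.
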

\begin{proof}
    By \cref{thm:equivalence}, the complex commuting and tensor-product models coincide, while \cref{prop:real-tsirelson} establishes the corresponding equivalence for RQT. It therefore remains only to show that the real and complex models generate the same correlations. This was established in Ref.~\cite{HoffreumonWoods2026} by using the $R$-product model as a bridge between $\Ct$ and $\CtR$. There, it was shown that $\Ct = \CRtR = \CtR$. We restated why the equivalence holds in the notation of this note; see \cref{eq:Ct_is_CRtR,eq:Ct_is_CtR} and the discussion around these.
\end{proof}
As a consequence, for any finite number of parties performing measurements with finitely many settings and outcomes on a finite-dimensional shared quantum system, the resulting joint distributions are insensitive both to the choice of mathematical representation of separated measurements and to whether the underlying quantum formalism is formulated over $\R$ or $\C$. Thus, at the level of observable correlations, neither the mathematical representation of separation nor the choice between real and (complex) QT affects the predictions.

\section{RQTs for superselected QTs}
\label{sec:real-superselection-theories}

In this section, the problem of constructing representations of quantum theories with superselection rules within real quantum theory is addressed. We first consider a single Hilbert space, then move to multipartite tensor-product spaces and protocols. We conclude by applying the developed formalism to fermionic information theory. 

The equivalence between the commuting-subsystems and tensor-product-subsystems paradigms for RQT established in the previous section allows us to extend the construction from a single Hilbert space to multipartite tensor-product spaces: each individual (i.e., local) superselected mode is traditionally assumed to be associated with a subset of the observables that commute with the rest. Since we have shown that this picture is equivalent to associating each mode with a tensor factor, we can therefore formulate the construction in the multipartite setting.

It is in this latter picture that the methods we developed in our previous article~\cite{HoffreumonWoods2026} come into play: while each local mode is associated with a tensor factor in quantum theory, it is not guaranteed that its representation over the reals will remain confined to a single tensor factor. In other words, it is not guaranteed that the complex Kronecker product may be replaced by the real Kronecker product when moving the representation from the complex to the real numbers. Our realification procedure ensures exactly that: it is locality-preserving in the sense that tensor factors keep the same meaning between the two pictures\footnote{In technical terms, our realification is a monoidal functor between suitable categories of quantum theories and real quantum theories.}. Hence, the technical contribution of this section is to show that our realification procedure preserves a superselection rule without sacrificing locality.

We will reintroduce some notational conventions already used in the previous section to ease the reading of this part. In the same spirit, the realification conventions are (re)introduced when needed for the construction of the real representation.

\subsection{Superselection over a single Hilbert space}

A superselected quantum theory is simply a quantum theory made invariant under the action of some group $G$. To differentiate between the superselected theory and the finite-dimensional quantum theory it is obtained from, we respectively refer to them as the \textit{invariant} and \textit{unrestricted} (quantum) theories. 
We start by setting up some terminology, which we mainly draw from Ref.~\cite{KMP} and expressed it in the language of operational probabilistic theories (see e.g., Ref.~\cite{Chiribella_2011}).

In quantum theory, to every system $A$ is associated a finite-dimensional complex Hilbert space $\mathcal H_A$. We write $\cB(\mathcal{H}_A)$ for the algebra of all complex linear operators on $\mathcal H_A$, and $\Herm(\mathcal H_A):=\{X\in\cB(\mathcal H_A):X=X^\dagger\}$ for its Hermitian part. 

Groups may act on the operators associated with a system $A$. Let $G$ be a (Lie or finite) group, and let $U_A:G\to\mathrm{U}(\mathcal H_A)$ be a unitary representation of $G$ on $\mathcal H_A$. This induces an adjoint action of $G$ on $\cB(\mathcal H_A)$, defined by
\begin{equation}\label{eq:adjointaction}
	\mathcal U_A(g)(X):=U_A(g)XU_A(g)^\dagger\:.
\end{equation}
An operator $X$ satisfying $\mathcal U_A(g)(X)=X$ for every $g\in G$ is said to be invariant under the action of $G$, or simply invariant. A linear map $\Phi:\cB(\mathcal H_A)\rightarrow\cB(\mathcal H_B)$ is said to be equivariant under the action of $G$ if
\begin{equation}
	\Phi\circ\mathcal U_A(g)=\mathcal U_B(g)\circ\Phi
\end{equation}
for every $g\in G$.

\subsubsection{Invariant and unrestricted QTs}
\label{sec:worlds}

\begin{definition}[Unrestricted and invariant quantum theories]\label{def:worlds}
	For every system $A$, we call the \emph{unrestricted world}, denoted $\mathsf{QT}$, the usual quantum-mechanical quantities associated with it. That is, every density operator, positive-operator-value measure (POVMs), and completely positive trace-preserving maps that can be defined on its associated Hilbert space.
	
	The \emph{invariant world}, denoted $\mathsf{QT}^{\mathrm{inv}}_G$, is the subtheory invariant under the action of $G$ meaning that, for every $g\in G$, its:
	\begin{subequations}
		\begin{align}
			\text{states satisfy}\quad&
			\mathcal U_A(g)(\rho)=\rho \:,
			\label{eq:invworldstate}\\
			\text{effects satisfy}\quad&
			\mathcal U_A(g)(E)=E \:,
			\label{eq:invworldeffect}\\
			\text{channels satisfy}\quad&
			\Phi\circ\mathcal U_A(g)
			=\mathcal U_B(g)\circ\Phi\:.
			\label{eq:invworldchannel}
		\end{align}
	\end{subequations}
	For a quantum instrument $\{\Phi_a\}_a$, condition
	\eqref{eq:invworldchannel} is imposed on every outcome subchannel. For a composite system, a factorisation $U_{AB}(g)=U_A(g)\otimes U_B(g)$ is imposed only when the symmetry is assumed to act on-site relative to that particular bipartition; it is not part of the single-system definition.
\end{definition}

To give more explicit meaning to the definition, $\mathsf{QT}$ and $\mathsf{QT}^{\mathrm{inv}}_G$ denote operational probabilistic theories, namely families of sets indexed by systems and by input--output system pairs. For $\mathsf{QT}$, its state, effect, and channel sets are
\begin{subequations}
	\begin{align}
		\operatorname{St}_{\mathsf{QT}}(A)
		&:=
		\{\rho\in\cB(\mathcal H_A):
		\rho\geq0,\ \TrX{}{\rho}=1\},
		\label{eq:QT-state-set}\\
		\operatorname{Eff}_{\mathsf{QT}}(A)
		&:=
		\{E\in\cB(\mathcal H_A):
		0\leq E\leq\id_{\mathcal H_A}\},
		\label{eq:QT-effect-set}\\
		\operatorname{Chan}_{\mathsf{QT}}(A\!\to\!B)
		&:=
		\left\{\Phi:\cB(\mathcal H_A)\to\cB(\mathcal H_B):
		\Phi\ \text{is completely positive and trace preserving}\right\}.
		\label{eq:QT-channel-set}
	\end{align}    
\end{subequations}
And the pairing between states and effects, $\operatorname{St}_{\mathsf{QT}}(A) \times \operatorname{Eff}_{\mathsf{QT}}(A) \to [0,1]$, is the usual Born rule, i.e. the Frobenius inner product $(\rho,E)\mapsto \TrX{}{\rho E}$. 

In addition, we define its set of \textit{measurements} given a fixed and finite outcome set $\Omega$ as
\begin{equation}
	\operatorname{POVM}_{\mathsf{QT}}(A;\Omega)
	=
	\left\{\{E_a\}_{a\in\Omega}:
	E_a\in\operatorname{Eff}_{\mathsf{QT}}(A),\
	\sum_{a\in\Omega}E_a=\id_{\mathcal H_A}\right\},
	\label{eq:QT-POVM-set}
\end{equation}
We denote and call this set the `set of POVMs' since that is what they are for both theories. 
Also for a (finite) outcome set $\Omega$, we define a $\mathsf{QT}$ instrument from $A$ to $B$ as a family $(\Phi_a)_{a\in\Omega}$ of completely positive, trace-nonincreasing maps whose sum $\sum_{a\in\Omega}\Phi_a$ belongs to $\operatorname{Chan}_{\mathsf{QT}}(A\to B)$.

Similarly, the invariant theory is the operational probabilistic theory obtained from the following subsets of the unrestricted sets:
\begin{subequations}\label{eq:QTinv}
	\begin{align}
		\operatorname{St}_{\mathsf{QT}^{\mathrm{inv}}_G}(A)
		&:=
		\{\rho\in\operatorname{St}_{\mathsf{QT}}(A):
		\mathcal U_A(g)(\rho)=\rho\ \ \forall g\in G\},
		\label{eq:QTinv-state-set}\\
		\operatorname{Eff}_{\mathsf{QT}^{\mathrm{inv}}_G}(A)
		&:=
		\{E\in\operatorname{Eff}_{\mathsf{QT}}(A):
		\mathcal U_A(g)(E)=E\ \ \forall g\in G\},
		\label{eq:QTinv-effect-set}\\
		\operatorname{Chan}_{\mathsf{QT}^{\mathrm{inv}}_G}(A\!\to\!B)
		&:=
		\left\{\Phi\in\operatorname{Chan}_{\mathsf{QT}}(A\to B):
		\Phi\circ\mathcal U_A(g)
		=\mathcal U_B(g)\circ\Phi\ \ \forall g\in G\right\}.
		\label{eq:QTinv-channel-set}
	\end{align}
\end{subequations}
In accordance, its set of POVMs (for a given $\Omega$) is 
\begin{equation}\label{eq:QTinv-POVM-set}
	\operatorname{POVM}_{\mathsf{QT}^{\mathrm{inv}}_G}(A;\Omega) = \left\{(E_a)_{a\in\Omega}\in \operatorname{POVM}_{\mathsf{QT}}(A;\Omega): E_a\in\operatorname{Eff}_{\mathsf{QT}^{\mathrm{inv}}_G}(A)\ \forall a\in\Omega\right\} \:,
\end{equation}
An invariant instrument is defined analogously: every outcome subchannel is equivariant, and their sum is an element of $\operatorname{Chan}_{\mathsf{QT}^{\mathrm{inv}}_G}(A\to B)$. Both theories have the same system labels and Hilbert spaces; the sequential composition is ordinary composition of maps, and the parallel composition is the ordinary Hilbert-space tensor product. The tensor unit, or trivial input system, is the system $0$ with $\mathcal H_0=\C$ and $U_0(g)=\id_1 = 1$ for every $g\in G$. 

\begin{remark}[Relation to KMP's terminology] \label{rem:KMPworld}
	Kitaev, Mayers, and Preskill (KMP) define their invariant world by requiring physical operations to commute with the group action \cite[Sec.~II.B--C]{KMP}. For states, \cref{eq:invworldstate} is equivalent to $[U_A(g),\rho]=0$.  \cref{eq:invworldeffect,eq:invworldchannel} are the effect and channel versions of that requirement.  
	
	Here, the \emph{trivial representation} means precisely this one-dimensional input system $\mathcal H_0=\C$ on which every group element acts as the identity. Treating a state preparation $\rho$ on $A$ as the $(0 \to A)$ channel, $\Phi_\rho:\cB(\mathcal H_0)\to\cB(\mathcal H_A)$ such that $\Phi_\rho(1)=\rho$, the channel equivariance condition becomes that, for every $g\in G$:
	\[ \quad \Phi_\rho\circ\mathcal U_0(g)
	=\mathcal U_A(g)\circ\Phi_\rho
	\quad\iff\quad
	\mathcal U_A(g)(\rho)=\rho 
	\quad\iff\quad
	[U_A(g),\rho] = 0
	\:.
	\]
	This is exactly \eqref{eq:invworldstate}. The following discussion is unchanged if one elects to keep individual charge sectors as separate system types, because it uses only set inclusion.
\end{remark}

\paragraph{Forgetting the superselection restriction.}
The displayed definitions make the relation between the two worlds literal: each invariant state, effect, POVM, channel, and instrument is already an element of the corresponding unrestricted set.  We therefore write $\mathsf{QT}^{\mathrm{inv}}_G\subseteq\mathsf{QT}$ and represent an invariant protocol in unrestricted quantum theory by leaving every Hilbert space, operator, channel, instrument outcome, and classical control label unchanged.

Because nothing is replaced, every Born probability, every unnormalised and normalised post-measurement state, and every joint or conditional probability in a finite adaptive protocol is identical in the two descriptions.
Likewise, if a channel, instrument, or measurement is assigned a particular ordinary tensor-product block decomposition, the same decomposition remains true after the restriction is forgotten. This statement concerns the locality of operations; it makes no additional assertion about the factorisation or separability of the initial state.

Therefore, no reference frame is needed for our purposes. If a reference system $R$ is adjoined anyway, it may remain a spectator.  In particular, KMP \cite[Eq.~(30)]{KMP} show that their reference-frame image of an operator already commuting with the symmetry is simply $\id_R\otimes M$.

\subsubsection{Real representation of QT and unitary invariance}
\label{sec:embedded-real-theory}
For $A$ a $d\times d$ complex matrix, i.e. for $A\in M_d(\C)$, the standard complex to real mapping $\Gamma_d : M_d(\C) \to M_{2d}(\R)$ has been defined in the introduction under the following convention(s) of notation:
\begin{equation}
	\Gmap_d(A):= I_d \otimes \mathrm{Re}(A) + J_d \otimes \mathrm{Im}(A) = 
	\begin{pmatrix}
		\mathrm{Re}(A)&-\mathrm{Im}(A)\\
		\mathrm{Im}(A)& \mathrm{Re}(A)
	\end{pmatrix}\:,
	\label{eq:Gamma}
\end{equation}
with $I = \left(\begin{smallmatrix}
	1 & 0 \\ 0 & 1 
\end{smallmatrix} \right)$, $J = \left(\begin{smallmatrix}
	0 & -1 \\ 1 & 0 
\end{smallmatrix} \right)$, and $I_d := I \otimes \id_d$, $J_d := J \otimes \id_d$.

It should be clear that $\Gmap$ is real-linear and injective. We normally suppress the dimension subscript and write $\Gmap$ when the domain is clear. Direct block multiplication shows that it is a $C^*$-algebra homomorphism that preserves positivity and the trace up to a factor of 2 (see e.g. Ref.~\cite{Hoffreumon2025} for proof):
\begin{subequations}
	\begin{align}
		\Gmap(AB)&=\Gmap(A)\Gmap(B)\:,&
		\Gmap(A^\dagger)&=\Gmap(A)^T,
		\label{eq:starhom}\\
		A\geq0&\Longleftrightarrow\Gmap(A)\geq0
		\quad(\Gmap(A)=\Gmap(A)^T),&
		\Gmap(\id_d)&=\id_{2d}\:,
		\label{eq:positiveunital}\\
		\TrX{}{\Gmap(A)}&=2\mathrm{Re}(\TrX{}{A}) \:,&
		\TrX{}{\Gmap(A)^T\Gmap(B)}
		&=2\mathrm{Re}(\TrX{}{A^\dagger B}).
		\label{eq:traceisometry}
	\end{align}
\end{subequations}
In every Hermitian or Born-pairing application below, the complex trace on the right is real, so the real-part symbol may be omitted. This implies that the Frobenius inner product is preserved between self-adjoint elements, so the embedding is an isometry from the real inner product space of complex Hermitian matrices to the one of real symmetric matrices.

Recall that real quantum theory (RQT) on a finite-dimensional real Hilbert space uses real symmetric positive-semidefinite trace-one matrices as states, real symmetric effects verifying $0\leq F\leq\id$, POVMs whose effects sum to the identity, and real completely positive trace-preserving maps as transformations. Hence the $\Gmap$ embeds $\mathsf{QT}$ into $\mathsf{RQT}$. 

In addition, the realification map $\Gmap$ sends the pure transformations of $\mathsf{QT}$, i.e., the unitary maps, to a subset of the pure transformations of $\mathsf{RQT}$, i.e., the orthogonal maps. Indeed, let $\mathrm U(d)$ denote the complex unitary matrices on $\C^d$ and $SU(d)$ its special subset of determinant-one elements, let $\mathrm{O}(2d)$ and $SO(2d)$ similarly denote the real (special) orthogonal $d\times d$ matrices, then the following holds
\begin{equation}
	\forall U \in U(d)\:: \qquad \Gmap(U) \in SO(d) \text{ and } [J_d,\Gmap(U)] = 0\:.
\end{equation}
This quickly follow from
\begin{equation}
	O(g)^TO(g) = \Gmap(U(g)^\dagger U(g))=\id_{2d}\:,
	\label{eq:single-orthogonal}
\end{equation}
and 
\begin{equation}
	O(g)O(h) =\Gmap(U(g)U(h))=\Gmap(U(gh))=O(gh) \:.
	\label{eq:single-representation}
\end{equation}
To verify the determinant, complexify the real matrix $\Gmap(A)$: the change of variables $(x,y)\mapsto(x+iy,x-iy)$ conjugates it to $\operatorname{diag}(A,\overline A)$.  Hence $\det\Gmap(A)=\det A\det \overline A =|\det_{\C}A|^2$, and therefore $\det O(g)=1$, proving that $O(g)\in\mathrm{SO}(2d)$.

Hence, a special unitary is mapped to a special orthogonal matrix, but so is any unitary.

\paragraph{Matrix/channel convention.}
Whenever a reversible symmetry matrix and its conjugation channel are both used, a plain capital letter denotes the matrix, and the corresponding calligraphic letter denotes the channel.  Thus $U,V,O,J$ are matrices, while $\mathcal U,\mathcal V,\mathcal O,\mathcal J$ act on operators.  For parity we retain the standard notation $\Theta$ (and $\widetilde\Theta$) for the conjugation automorphism implemented by the matrix $P$ (and
$\widetilde P$).

We now impose an arbitrary physical superselection on the unrestricted QT via a group $G$. This construction uses only the commutation condition $[X,J_d]=0$ and the unitary representation $U : G \to U(d)$; its later interpretation as an auxiliary real superselection rule via the orthogonal representation $O : G \to O(2d)$ is not needed here. That is, although the real representation of a superselection via $\Gmap$ is automatically a real superselection on the image $\Gmap(\mathsf{QT})$ of quantum theory within real quantum theory, we will not try to extend it into a real superselection rule on all of RQT.

\begin{definition}[Fixed-point algebras]
	Let $U:G\to\mathrm U(d)$ be a finite-dimensional unitary representation, define
	\begin{equation}
		O(g):=\Gmap(U(g)) \:.
		\label{eq:global-single-O}
	\end{equation}
	Then $O:G\to\mathrm{SO}(2d)$ is its associated orthogonal representation. Let the corresponding conjugation	channels be respectively denoted by
	\begin{align}\label{eq:single-system-conjugation-channels}
		\mathcal U(g)(A)&:=U(g)AU(g)^\dagger,&
		\mathcal O(g)(X)&:=O(g)XO(g)^T\:,
	\end{align}
	in accordance with the matrix/channel convention. Then the \emph{complex and real fixed-point algebras of $G$} are respectively defined as the sets:
	\begin{subequations}
		\begin{align}
			\mathfrak A_{\C}^{G}
			&:=\{X\in M_d(\C):\mathcal U(g)(X)=X\ \forall g\in G\},
			\label{eq:complexfixed}\\
			\mathfrak A_{\R}^{G,J}
			&:=\{Y\in M_{2d}(\R):
			\mathcal O(g)(Y)=Y\ \forall g\in G,\ \mathcal{J}_d(Y) = Y \}\:.	
			\label{eq:realfixed}
		\end{align}
	\end{subequations}
\end{definition}
The second requirement in definition \cref{eq:realfixed}, $[Y,J_d]=0$, selects the real matrices that represent operators of quantum theory.  Perhaps unsurprisingly with respect to the discussion so far, what we call the real fixed-point algebra is not exactly the set of operators invariant under the orthogonal representation of $G$. Rather, it is the realified complex fixed-point algebra.

In the definitions \eqref{eq:complexfixed} and \eqref{eq:realfixed}, the first requirements are respectively equivalent to $[X,U(g)]=0$ and $[Y,O(g)]=0$ for every $g$. We therefore see that the fixed-point algebras are a quick way to refer to the set of all operators invariant with respect to a given action of a group, so imposing a superselection is equivalent to restricting the operators of the theory to the fixed-point algebra of $G$. 

Indeed, because $\cB(\mathcal{H}_A) \cong M_{d_A}(\C)$, we can define the fixed point algebra of $G$ with respect to its action on a system $A$, 
\begin{equation}
	\mathfrak A_{\C}^{G}(A):=\{X\in \cB(\mathcal{H}_A):\mathcal U_A(g)(X)=X\ \forall g\in G\}\:.
\end{equation}
Hence, the invariant QT can be expressed as the QT over the fixed-point algebra, e.g., $\operatorname{St}_{\mathsf{QT}^{\mathrm{inv}}_G}(A) = \operatorname{St}_{\mathsf{QT}}(A)\cap \mathfrak A_{\C}^{G}(A)$. It remains to express the transformations between systems  (channels) using fixed-point algebras. As the equivariant channels are but structure-preserving mappings between sets of invariant operators, one can define the corresponding sets of equivariant channels as
\begin{subequations}
	\begin{align}
		&\mathsf{Hom}_\C^G(A\to B) := \{\Phi\in M_{d_A}(\C) \to M_{d_B}(\C): \mathcal{U}_B(g) \circ \Phi \circ \mathcal{U}_A^\dag(g) = \Phi\ \forall g\in G\}\:, \\
		&\mathsf{Hom}_\R^{G,J}(A\to B) \notag \\
		&:= \{\widetilde{\Phi}\in M_{2d_A}(\R) \to M_{2d_B}(\R): \mathcal{O}_B(g) \circ \widetilde{\Phi} \circ \mathcal{O}_A^T(g) = \widetilde{\Phi}\ \forall g\in G, \mathcal{J}_{d_B} \circ \widetilde{\Phi} \circ \mathcal{J}_{d_A}^T=\widetilde{\Phi}\}.
	\end{align}
\end{subequations}

Under these definitions, we see that the invariant world can be described as:
\begin{subequations}
	\begin{align}
		\operatorname{St}_{\mathsf{QT}^{\mathrm{inv}}_G}(A)
		&=
		\{\rho\in \mathfrak A_{\C}^{G}(A): \: \rho \geq 0, \TrX{}{\rho} = 1\} \:, \\
		\operatorname{Eff}_{\mathsf{QT}^{\mathrm{inv}}_G}(A)
		&=
		\{ E\in \mathfrak A_{\C}^{G}(A):\: 0 \leq E \leq \id_{d_A} \} \:, \\
		\operatorname{Chan}_{\mathsf{QT}^{\mathrm{inv}}_G}(A\!\to\!B)
		&=
		\{\Phi\in\mathsf{Hom}_\C^G(A\to B) : \: \Phi \:\mathrm{CPTP} \} \:.
	\end{align}
\end{subequations}

We are now ready to present the real representation.
\begin{lemma}\label{lemm:fixediso}
	The restriction
	\begin{equation}
		\Gmap:\mathfrak A_{\C}^{G}\longrightarrow
		\mathfrak A_{\R}^{G,J}
		\label{eq:fixediso}
	\end{equation}
	is a bijective real $*$-algebra isomorphism, where the involutions are complex adjoint on the source and real transpose on the target.  
\end{lemma}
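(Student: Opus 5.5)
The plan has three parts: show that $\Gmap$ maps $\mathfrak A_{\C}^{G}$ into $\mathfrak A_{\R}^{G,J}$, show that it is onto, and inherit the algebraic structure from the global properties of $\Gmap$ listed in \cref{eq:starhom,eq:positiveunital}. Those properties already say that $\Gmap_d:M_d(\C)\to M_{2d}(\R)$ is real-linear, injective, multiplicative and unital, and that $\Gmap(A^\dagger)=\Gmap(A)^T$. Any restriction of $\Gmap$ to a real $*$-subalgebra is therefore an injective real $*$-homomorphism onto its image. The substance is identifying that image with $\mathfrak A_{\R}^{G,J}$.

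\emph{Into.} Let $X\in\mathfrak A_{\C}^{G}$. For every $g$, multiplicativity and $\Gmap(U^\dagger)=\Gmap(U)^T$ give
\[
\mathcal O(g)(\Gmap(X))=\Gmap(U(g))\Gmap(X)\Gmap(U(g))^T=\Gmap(U(g)XU(g)^\dagger)=\Gmap(X).
\]
For the $J$-condition, note $J_d=\Gmap(i\id_d)$. Since $i\id_d$ is central in $M_d(\C)$, multiplicativity yields $J_d\Gmap(X)=\Gmap(iX)=\Gmap(X)J_d$. Hence $\mathcal J_d(\Gmap(X))=J_d\Gmap(X)J_d^T=\Gmap(X)$, using $J_d^TJ_d=\id$. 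So $\Gmap(X)\in\mathfrak A_{\R}^{G,J}$. Along the way I will also record that $\mathfrak A_{\C}^{G}$ is a complex $*$-subalgebra: if $U X U^\dagger = X$, then taking adjoints gives the same for $X^\dagger$. I will likewise check that $\mathfrak A_{\R}^{G,J}$ is closed under products and transposes, since $O(g)$ and $J_d$ are orthogonal. This makes the phrase ``real $*$-algebra'' meaningful on both sides.

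\emph{Onto.} This is the step that needs care. First I will show that the image of all of $\Gmap_d$ is exactly the commutant of $J_d$ in $M_{2d}(\R)$. Write $Y\in M_{2d}(\R)$ in $d\times d$ blocks $\left(\begin{smallmatrix}P&Q\\R&S\end{smallmatrix}\right)$ and impose $YJ_d=J_dY$. A direct block computation forces $S=P$ and $R=-Q$, so $Y=\Gmap(P-iQ)$ with the sign convention of \cref{eq:Gamma}, i.e. $\mathrm{Re}=P$ and $\mathrm{Im}=R$. Given $Y\in\mathfrak A_{\R}^{G,J}$, this produces $X$ with $\Gmap(X)=Y$. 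Reversing the computation in the ``into'' step gives $\Gmap(U(g)XU(g)^\dagger)=\Gmap(X)$, and injectivity then yields $U(g)XU(g)^\dagger=X$, so $X\in\mathfrak A_{\C}^{G}$. The one subtlety is the hidden condition $\mathcal J_d(Y)=Y$: it must be read as $[Y,J_d]=0$, which is equivalent because $J_d$ is orthogonal.

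Combining the two parts, $\Gmap$ restricts to a bijection $\mathfrak A_{\C}^{G}\to\mathfrak A_{\R}^{G,J}$ that is real-linear, multiplicative, unital and intertwines $\dagger$ with $^T$, which is the claim of \cref{lemm:fixediso}. I expect the only real obstacle to be the block characterisation of the commutant of $J_d$. That requires fixing the ordering convention between $I_d=I\otimes\id_d$, $J_d=J\otimes\id_d$ and the block matrix in \cref{eq:Gamma}, and I will do that computation explicitly.
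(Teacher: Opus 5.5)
Your proposal is correct and follows essentially the same route as the paper. It transfers $G$-invariance through multiplicativity and $\Gmap(A^\dagger)=\Gmap(A)^T$, which the paper phrases as $[\Gmap(X),O(g)]=\Gmap([X,U(g)])$, identifies the image of $\Gmap$ with the commutant of $J_d$ by the same $d\times d$ block computation, and uses injectivity to pull invariance back for surjectivity. Your sign bookkeeping $Y=\Gmap(P-iQ)$ is consistent with the paper's parametrisation $\left(\begin{smallmatrix}B&-C\\ C&B\end{smallmatrix}\right)=\Gmap(B+iC)$.
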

\begin{proof}
	The realification map is a real $*$-algebra homomorphism that maps $U(d)$ to $SO(2d)$, it remains to show that the two algebras are isomorphic. Thus, proving the condition $\mathcal{U}(g)(X) = X$ equivalent to $\mathcal{O}(g)(\Gmap(X)) = \Gmap(X)$ and $[\Gamma(X),J_d] = 0]$.
	
	For the second condition, commutation of the matrices in the image of $\Gmap$ with $J_d$ directly follows from the definition  \eqref{eq:Gamma}, as mentioned earlier. Conversely, write a real $2d\times2d$ matrix in $d\times d$ blocks.
	Commutation with
	$J_d=\left(\begin{smallmatrix}0&-\id_d\\
		\id_d&0\end{smallmatrix}\right)$ is equivalent to the existence of real
	matrices $B,C$ such that
	\begin{equation}
		X=\begin{pmatrix}B&-C\\ C&B\end{pmatrix}
		=\Gmap(B+iC).
		\label{eq:complex-structure-centralizer}
	\end{equation}
	Hence every $X\in\mathfrak A_{\R}^{G,J}$ has a unique preimage
	$A=B+iC$.  Equation \eqref{eq:commutator} and injectivity then imply
	$[A,U(g)]=0$ for every $g$, proving surjectivity of \eqref{eq:fixediso}.
	
	Now, for every $X$ and $g$,
	\begin{equation}
		[\Gmap(X),O(g)]
		=[\Gmap(X),\Gmap(U(g))]
		=\Gmap([X,U(g)]) \:.
		\label{eq:commutator}
	\end{equation}
	
	Since $\mathcal{U}(g)(X) = X \iff [U(g),X]=0$ the last term is equal to $\Gmap(0 \id_d)$ which is 0 by real-linearity, hence $[\Gmap(X),O(g)] = 0$ and so the first condition follows, $\mathcal{O}(g)(\Gmap(X)) = \Gmap(X)$.
\end{proof}

\subsubsection{Embedding the superselected QT}
We now have established enough conventions and properties to directly establish what is the real representation of a superselected quantum theory.
\begin{theorem}[The embedded real quantum theory of the symmetry]
	\label{thm:algebra}
	The following triple of sets:
	\begin{align}
		\mathsf S_{\R}^{G,J} (A)
		&:=\{\sigma\in\mathfrak A_{\R}^{G,J}(A): \sigma^T=\sigma,\ \sigma\geq0,\ \TrX{}{\sigma}=1\} \:,
		\label{eq:real-fixed-state-set}\\
		\mathsf E_{\R}^{G,J}(A)
		&:=\{F\in\mathfrak A_{\R}^{G,J}(A): F^T=F,\ 0\leq F\leq\id_{2d}\} \:,
		\label{eq:real-fixed-effect-set}\\
		\mathsf C_{\R}^{G,J}(A\to B) 
		&:= \{ \widetilde{\Phi} \in \mathsf{Hom}_\R^{G,J}(A\to B): \: \widetilde{\Phi} \: \mathrm{CPTP} \} \:, \label{eq:real-fixed-channel-set}
	\end{align}
	are isomorphic to $(\operatorname{St}_{\mathsf{QT}^{\mathrm{inv}}_G}(A),\: \operatorname{Eff}_{\mathsf{QT}^{\mathrm{inv}}_G}(B) ,\: \operatorname{Chan}_{\mathsf{QT}^{\mathrm{inv}}_G}(A \to B) )$.
	In particular, the mapping
	\begin{subequations}
		\begin{align}
			(\frac{1}{2}\Gmap, \Gmap, \mathrm{Ad}_\Gmap): & \operatorname{St}_{\mathsf{QT}^{\mathrm{inv}}_G}(A) \times \operatorname{Eff}_{\mathsf{QT}^{\mathrm{inv}}_G}(A) \times \operatorname{Chan}_{\mathsf{QT}^{\mathrm{inv}}_G}(A \to B) \longrightarrow \mathsf S_{\R}^{G,J} \times  \mathsf E_{\R}^{G,J} \times \mathsf{C}_\R^{G,J} :: \notag \\
			\rho&\longmapsto\tfrac12\Gmap(\rho) \:; \label{eq:fixed-state-bijection}\\
			F&\longmapsto\Gmap(F) \:;\label{eq:fixed-effect-bijection} \\
			\Phi &\longmapsto \Gmap \circ \Phi \circ \Gmap^{-1} := \mathrm{Ad}_\Gmap(\Phi)\:. \label{eq:fixed-channel-bijection}
		\end{align}
	\end{subequations}
	yield an isometry, meaning that they preserve every Born pairing:
	\begin{equation}
		\TrX{}{E \Phi(\rho) }  = \TrX{}{\Gmap(E) \mathrm{Ad}_\Gmap(\phi) \circ\tfrac12\Gmap(\rho)}\:.
		\label{eq:fixedborn}
	\end{equation}
	Consequently the $G$-invariant state--effect subtheory of quantum theory is
	operationally isomorphic to the explicitly defined subtheory of RQT
	\cref{eq:real-fixed-state-set,eq:real-fixed-effect-set,eq:real-fixed-channel-set}, which
	carries the analogous orthogonal action $\mathcal O$.
\end{theorem}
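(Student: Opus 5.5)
The proof splits into three parts, one for each component of the triple, each leaning on \cref{lemm:fixediso} and the identities \eqref{eq:starhom}--\eqref{eq:traceisometry}.

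\emph{States and effects.} By \cref{lemm:fixediso}, $\Gmap$ restricts to a bijection $\mathfrak A_{\C}^{G}(A)\to\mathfrak A_{\R}^{G,J}(A)$. The other defining conditions carry over as follows.
\begin{itemize}
\item Hermiticity is equivalent to symmetry, since $\Gmap(X^\dagger)=\Gmap(X)^T$ and $\Gmap$ is injective.
\item Positivity is preserved in both directions by \eqref{eq:positiveunital}.
\item $E\le\id$ if and only if $\Gmap(E)\le\id_{2d}$, because $\Gmap(\id-E)=\id_{2d}-\Gmap(E)$.
\item $\TrX{}{\Gmap(\rho)}=2\TrX{}{\rho}$, so the factor $\tfrac12$ makes $\rho\mapsto\tfrac12\Gmap(\rho)$ trace-preserving.
\end{itemize}
Hence $\tfrac12\Gmap$ is a bijection $\operatorname{St}_{\mathsf{QT}^{\mathrm{inv}}_G}(A)\to\mathsf S_{\R}^{G,J}(A)$, and $\Gmap$ is a bijection $\operatorname{Eff}_{\mathsf{QT}^{\mathrm{inv}}_G}(A)\to\mathsf E_{\R}^{G,J}(A)$. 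POVM normalisation is preserved because $\Gmap$ is unital.

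\emph{Channels.} First I make $\mathrm{Ad}_\Gmap$ precise. $\Gmap^{-1}$ is only defined on the image $\Gmap(M_{d_A}(\C))=\{Y:[Y,J_{d_A}]=0\}$. I would extend $\widetilde\Phi=\mathrm{Ad}_\Gmap(\Phi)$ to all of $M_{2d_A}(\R)$ by precomposing with the twirl $Y\mapsto\tfrac12\bigl(Y+J_{d_A}YJ_{d_A}^T\bigr)$, which projects onto that commutant. The extended map is then $\mathcal J$-covariant by construction.

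Equivariance under $G$ follows from $O(g)=\Gmap(U(g))$ and the homomorphism property:
\[
\mathcal O_B(g)\circ\widetilde\Phi\circ\mathcal O_A(g)^T=\mathrm{Ad}_\Gmap\bigl(\mathcal U_B(g)\circ\Phi\circ\mathcal U_A(g)^\dagger\bigr)=\widetilde\Phi .
\]
Trace preservation follows from $\TrX{}{\Gmap(X)}=2\mathrm{Re}\TrX{}{X}$.

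For complete positivity, take a Kraus decomposition $\Phi(X)=\sum_iK_iXK_i^\dagger$. On the image of $\Gmap$ this gives
\[
\widetilde\Phi(Y)=\sum_i\Gmap(K_i)\,Y\,\Gmap(K_i)^T .
\]
Composed with the twirl, which is itself a real Kraus map, this is a real CP map. Trace preservation on the whole domain needs $\sum_i\Gmap(K_i)^T\Gmap(K_i)=\Gmap(\sum_iK_i^\dagger K_i)=\id$.

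For the converse, a $\widetilde\Phi\in\mathsf C^{G,J}_\R$ restricted to the $J$-commuting operators maps into them, since $\mathcal J$-covariance gives $[\widetilde\Phi(Y),J_{d_B}]=0$. So $\Phi:=\Gmap^{-1}\circ\widetilde\Phi\circ\Gmap$ is well defined, and $G$-equivariance transfers back via \cref{lemm:fixediso}.

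\emph{Born pairing.} By \eqref{eq:traceisometry},
\[
\TrX{}{\Gmap(E)\,\widetilde\Phi(\tfrac12\Gmap(\rho))}=\tfrac12\TrX{}{\Gmap(E)\Gmap(\Phi(\rho))}=\mathrm{Re}\TrX{}{E\Phi(\rho)}=\TrX{}{E\Phi(\rho)},
\]
where the last equality holds because both $E$ and $\Phi(\rho)$ are Hermitian.

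\emph{Main obstacle.} The difficult step is the converse direction of complete positivity: a real CP, $J$-covariant map must come from a complex CP map. I would argue through Choi matrices. The real Choi matrix of $\widetilde\Phi$ commutes with the combined $J$ action, so in a suitable basis it decomposes as $\Gmap$ of a complex matrix; up to normalisation, that complex matrix is the Choi matrix of $\Phi$. Positivity then transfers by \eqref{eq:positiveunital}. The delicate part is the bookkeeping that identifies the $J_{d_A}^T$ acting on the input with complex conjugation. If this becomes unwieldy, I would fall back on checking positivity of $(\mathrm{id}_n\otimes\Phi)$ directly on realified inputs $\Gmap(Z)$, using $\Gmap_{nd}(\mathrm{id}\otimes\Phi)(Z)=(\mathrm{id}\otimes\widetilde\Phi)$ applied to a real-ampliated input.
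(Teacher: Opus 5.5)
Your treatment of states, effects and the Born pairing matches the paper's proof, and your forward direction for channels (twirl extension, realified Kraus form, equivariance, trace preservation) is sound. The genuine gap is the converse for channels, which you flag as the main obstacle. You need a real CPTP map obeying $\mathcal J_{d_B}\circ\widetilde\Phi\circ\mathcal J_{d_A}^T=\widetilde\Phi$ (with $\mathcal J$ the conjugation channel, as in the paper and as you use it) to come from a complex CP map. That claim is false, so no amount of bookkeeping will complete it. Conjugation by $J_d$ acts trivially on its own commutant: $J_dYJ_d^T=Y$ whenever $[Y,J_d]=0$. The covariance therefore only forces $\widetilde\Phi$ to preserve the $J$-commuting and $J$-anticommuting subspaces. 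It does not make $\Phi:=\Gmap^{-1}\circ\widetilde\Phi\circ\Gmap$ complex-linear, which would require $\widetilde\Phi(J_{d_A}Y)=J_{d_B}\widetilde\Phi(Y)$ on the commutant.

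Concretely, let $D:=\mathrm{diag}(1,-1)$ and $\widetilde\Phi(Y):=(D\otimes\id_d)\,Y\,(D\otimes\id_d)$. Since $DJ=-JD$, it satisfies the $\mathcal J$-condition. It is an orthogonal conjugation, hence real CPTP. It commutes with every $\mathcal O(g)$ whenever the $U(g)$ are real matrices (e.g.\ trivial $G$, or the fermionic parity $P_m$ of the last section). Yet $\widetilde\Phi(\Gmap(X))=\Gmap(\overline X)$, so your $\Phi$ is entrywise complex conjugation. It is antilinear, and on Hermitian inputs it coincides with the transpose, which is not completely positive. Hence this $\widetilde\Phi\in\mathsf C_\R^{G,J}$ is not $\mathrm{Ad}_\Gmap$ of any complex channel.

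Both of your routes fail on this example:
\begin{itemize}
\item The covariance only says that the real Choi matrix commutes with $J_{d_A}\otimes J_{d_B}$. That matrix squares to $+\id$, so it supplies no complex structure.
\item Your fallback checks positivity of $\sum_{p,q}e_{pq}\otimes\widetilde\Phi(\Gmap(Z_{pq}))$. This uses only real-linearity of $\Phi$, and for complex conjugation it simply returns $\overline Z\geq0$. It therefore ``succeeds'' without ever showing that $\Phi$ is complex-linear.
\end{itemize}

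There is also a smaller defect: restriction to the $J_{d_A}$-commutant is not injective on $\mathsf C_\R^{G,J}$, because the action on $J$-anticommuting operators is unconstrained. For the identity channel, both the identity map and your twirl lie in $\mathsf C_\R^{G,J}$ and restrict to it. So even on genuine channels you obtain an injection with a left inverse, not a bijection.

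The paper's proof does not address either point; it only asserts that the channel case ``follows by construction''. The fault therefore lies in the definition of $\mathsf{Hom}^{G,J}_\R$, not only in your plan. To repair it, supplement the $\mathcal J$-covariance with complex-linearity on the commutant, $\widetilde\Phi(J_{d_A}Y)=J_{d_B}\widetilde\Phi(Y)$ for all $Y$ with $[Y,J_{d_A}]=0$. Also regard real channels only through their action on that commutant (equivalently, identify channels that agree there). Then the statement becomes true and your fallback is exactly the right argument: $\Phi$ is complex-linear, $\sum_{p,q}e_{pq}\otimes\widetilde\Phi(\Gmap(Z_{pq}))$ is, up to reordering tensor factors, $\Gmap_{nd}\bigl((\mathrm{id}_n\otimes\Phi)(Z)\bigr)$, and positivity reflection gives complete positivity.
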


\begin{proof}
	If $\rho$ is a state of $A$, then $\tfrac12\Gmap(\rho)$ is symmetric,
	positive, invariant under $\mathcal O(g)$ for every $g\in G$, and commutes with $J_{d_A}$; furthermore,
	\[
	\TrX{}{\tfrac12\Gmap(\rho)} =\TrX{}{\rho}=1.
	\]
	Thus the state map lands in $\mathsf S_{\R}^{G,J}$.  Conversely, let
	$\sigma\in\mathsf S_{\R}^{G,J}$.  By \cref{lemm:fixediso},  \cref{eq:fixediso}
	gives a unique $X\in\mathfrak A_{\C}^{G}$ with $\sigma=\Gmap(X)$.
	Because $\sigma^T=\sigma$, adjoint preservation and injectivity give $X^\dagger=X$; positivity reflection gives $X\geq0$; and the trace
	identity gives
	\[
	1=\TrX{}{\sigma}=2\TrX{}{X}\:.
	\]
	Consequently $\rho:=2X$ lies is a state of $A$ and $\sigma=\tfrac12\Gmap(\rho)$.  Injectivity makes this preimage unique.
	
	The effect argument is the same without trace rescaling.  If  $F\in\mathsf E_{\R}^{G,J}$, write uniquely $F=\Gmap(E)$.  Symmetry gives $E^\dagger=E$, while positivity reflection applied to $F$ and $\id_{2{d_B}}-F=\Gmap(\id_{d_B}-E)$ gives $0\leq E\leq\id_{d_B}$.  The converse follows directly from positivity, unitality, and \eqref{eq:commutator}.  This proves both bijections.
	Finally, \eqref{eq:fixedborn} is the Hermitian case of \eqref{eq:traceisometry}; its factor $1/2$ exactly cancels the doubled real trace.
	
	The proof for the channels follows by construction, and the preservation of the Born rule comes from \eqref{eq:traceisometry} again.
\end{proof}
\begin{remark}[Projective and antiunitary symmetries]
	The theorem assumes a genuine unitary representation.  If it were projective instead, i.e. $U(g)U(h)=\omega(g,h)U(gh)$ for a norm-1 scalar $\omega(g,h) \in \C$ associated with every pair $g,h$, then $O(g)O(h)=\Gmap(\omega(g,h)\id_d)O(gh)$. Here, $\Gmap(\omega(g,h)\id_d)$ is the two-dimensional phase rotation repeated $d$ times; it does not correspond to a projective orthogonal representation which would have required a real norm-1 scalar $\tilde{\omega}(g,h)=\pm 1$ repeated $2d$ times. Hence, in order to properly map a projective representation of $G$ over the complex to the reals, the naive approach via $\Gmap$ appears to fail. Nevertheless, the associated central extension of $G$, in this case $U(1)$, is properly mapped over to the real, i.e., to $SO(2)$. 
	
	Similarly, antiunitary symmetries seem to be impossible to map to the reals since $\R$ has no non-trivial notion of antilinearity. This naive analysis is once again limited, as the antiunitary maps are precisely mapped to the part of $SO(n)$ where the unitary maps are not mapped. That is, to the special orthogonal matrices that anticommute with $J_d$ \cite{Hoffreumon2025}.
	
	We leave these two very interesting cases for future work. Here we are solely concerned with superselection rules seen as a unitary representation of a group.
\end{remark}

\subsubsection{Interpreting the embedded superselected QT as a superselected RQT}
In \cref{thm:algebra}, the condition $[X,J_d]=0$ is essential. Ordinary RQT also contains real states and effects that do not preserve this encoded complex structure. The theorem therefore identifies the superselected quantum theory with a specific subtheory of RQT; it does not identify it with every $O(G)$-invariant state or effect allowed by RQT. In other words, the $G$-invariant QT is not mapped to a $G$-invariant RQT, but only to a subtheory of it.

One may nevertheless ask whether it is possible to interpret the real representation of an invariant QT as an invariant RQT. To do so, we first need to define what superselection is in RQT. Naturally, we take the same definition as in the complex case: a superselected RQT is an RQT made invariant under the action of some group $G$. What differs is that the action of $G$ over a real state is defined using its orthogonal representation. Notice that this makes superselection rarer in RQT since not every group possesses an orthogonal representation. To ensure the existence of an orthogonal representation, we shall assume $G$ to be compact.

With this definition, it is quite simple to see how invariant subtheories of QT are invariant subtheories of RQT. If we make explicit all conditions in \cref{eq:real-fixed-state-set}, we get:
\begin{equation}
	\begin{aligned}
		\sigma \in \mathsf S_{\mathbb{R}}^{G,J} (A) \iff&  \sigma\in M_{2d_A}(\mathbb{R}): \: \sigma^T=\sigma,\ \sigma\geq 0,\ \TrX{}{\sigma}=1 \:, \\
		& \forall g \in G: \: [\Gmap(U(g)),\sigma] = 0 \:,\\
		& [J_{d_A},\sigma] = 0 \:.
	\end{aligned}
\end{equation}
Now, the first line of conditions simply asks $\sigma$ to be a valid state in the embedding $2d_A \times 2d_A$ RQT. In our notation, this line reads $\sigma \in \operatorname{St}_{\mathsf{RQT}}(A)$.

The second line, $\forall g \in G: \: [\Gmap(U(g)),\sigma] = 0$, looks as though it imposes a superselection with respect to an orthogonal representation $O: G \to M_{2d_A}(\mathbb{R})$. 
However, the presence of the third line, which reads $[J_{d_A},\sigma] = 0$, does not allow one to reach this conclusion from the outset. Nevertheless, this condition also looks like a superselection rule, and indeed it is. The matrix $J_{d_A}$, which we used to encode the imaginary unit since it is the image of $i\id_{d_A}$, implies the orthogonal representation of a discrete group. Indeed, since $i^4 = 1 \implies J_{d_A}^4 = \id_{2d_A}$, we see that $J_{d_A}$ is the generator of the discrete group $C_4 = \langle a | a^4 = e \rangle$. The implied orthogonal representation is 
\begin{equation}
	O_{C_A}: C_4 \to M_{2d_A}(\mathbb{R}), \quad a \mapsto J_{d_A} \:,
\end{equation}
such that the four elements of the group are given by
\begin{equation}
	O_{C_A}(e) = \id_{2d_A}, \quad O_{C_A}(a) = J_{d_A}, \quad O_{C_A}(a^2) = -\id_{2d_A}, \quad O_{C_A}(a^3) = -J_{d_A}\:,
\end{equation}
and it is straightforward to check that this is an orthogonal representation since $O_{C_A}(g)^T O_{C_A}(g) = \id_{2d_A}$ for every $g \in C_4$.

So we have two groups present, $G$ and $C_4$, and the states are invariant under both. It is evident that the states are actually invariant under $C_4 \times G$, but in order to conclude this, one must first ensure that the elements of both groups commute pairwise. By construction, we know this is the case: $[\Gmap(U(g)), O_{C_4}(h)] = 0$ for all $g\in G$ and for all $h\in C_4$ is trivially implied by $[\Gmap(U(g)), J_{d_A}] = 0$ for all $g$. This relation simply encodes that the matrices $U(g)$ commute with complex numbers in the preimage. From this, we can conclude that the states are invariant under the orthogonal action of the group $C_4 \times G$. The reasoning extends directly to effects and channels.

Consequently, any unitary representation of $G$ corresponds to an orthogonal representation of $C_4 \times G$. What we have established is that a QT in $d \times d$ dimensions is isomorphic to a $C_4$-superselected RQT, and that any superselected subtheory of QT is isomorphic to a superselected subtheory of RQT. In symbols:
\begin{subequations}
	\begin{align}
		&\mathsf{QT} \cong \mathsf{RQT}^{\mathrm{inv}}_{C_4} \:;\\
		&\mathsf{QT}^{\mathrm{inv}}_G \cong \mathsf{RQT}^{\mathrm{inv}}_{C_4\times G} \:.
	\end{align}
\end{subequations}

\subsection{Superselection over multiple Hilbert spaces}

We now turn to the multipartite case. The subsystems are initially specified by a collection of pairwise commuting subalgebras of the algebra of observables. As we have shown in \cref{sec:Tsirelson}, in finite dimensions, we can equivalently represent these subsystems as tensor factors, both over the reals and in the presence of superselection constraints. We therefore assume throughout this section that the Hilbert space decomposes as $\mathcal H=\bigotimes_{i=1}^n\mathcal H_i$, with each label $i$ associated with a subsystem.

A representation of a group on a multipartite Hilbert space need not factorise. Here, however, we want the same global symmetry $G$ to act locally on each subsystem. Thus, for the decomposition $\mathcal H=\bigotimes_{i=1}^n\mathcal H_i$, we assume that each factor is invariant under a representation $U_i:G\to \cB(\mathcal{H}_i)$. Furthermore, we assume that the global representation is given by the corresponding diagonal (or on-site) action
\begin{equation}\label{eq:globalU}
	U(g)=\bigotimes_{i=1}^n U_i(g)\:.
\end{equation}
Using the same group element $g$ in every factor ensures that these local actions combine into a global representation of $G$, so that the superselection rule is also implemented globally. The global representation is therefore generally reducible, but this is irrelevant for our purposes. What matters is that the symmetry is simultaneously \emph{global}, in that one common $g\in G$ acts on all subsystems, and \emph{local}, in that its representing operator factorises across the chosen tensor decomposition.

This factorisation is an additional assumption about the symmetry and the chosen partition, rather than a consequence of the tensor-product decomposition itself. If the symmetry mixes modes or laboratories, the corresponding degrees of freedom must instead be grouped into a larger subsystem, or the symmetry must be regarded as nonlocal relative to the chosen partition. 

\subsubsection{Realification of multiple tensor-product subsystems}

We mentioned in the introduction that the standard complex-to-real mapping cannot be used directly over tensor-product spaces because it does not intertwine the complex tensor-product structure with the corresponding real tensor-product structure. In this sense, it is not locality-preserving. This is precisely where our locality-preserving construction comes into play. See Ref.~\cite[Apps.~A and C]{HoffreumonWoods2026} for a detailed presentation.

Here we use the construction in its simplest form. For the $2\times2$ real matrices $I^{(1)}:=I$ and $J^{(1)}:=J$, define the pair of $(2^n\times2^n)$ real matrices $(I^{(n)},J^{(n)})$ recursively by
\begin{subequations}\label{eq:InJn}
	\begin{align}
		I^{(n)}
		&=\tfrac12\left(I^{(n-1)}\otimes I^{(1)}
		-J^{(n-1)}\otimes J^{(1)}\right)\:;\\
		J^{(n)}
		&=\tfrac12\left(J^{(n-1)}\otimes I^{(1)}
		+I^{(n-1)}\otimes J^{(1)}\right) \:.
	\end{align}
\end{subequations}
After permuting the tensor factors so that each auxiliary two-dimensional real quantum system encoding the complex phase of a complex factor is placed to the left of not only that real factor but of all $n$ such factors, i.e., after applying the permutation
\[
\bigotimes_{i=1}^n M_{2d_i}(\R) \cong \left(\bigotimes_{i=1}^n M_{2}(\R)\right) \otimes \left(\bigotimes_{i=1}^n M_{d_i}(\R)\right) \:,
\]
define
\begin{equation}\label{eq:Gamman}
	\begin{aligned}
		\Gmap^{(n)} : \bigotimes_{i=1}^n M_{d_i}(\C)
		&\longrightarrow
		\left(\bigotimes_{i=1}^n M_{2}(\R)\right) \otimes \left(\bigotimes_{i=1}^n M_{d_i}(\R)\right)\:, \\
		X &\longmapsto
		\Gmap^{(n)}(X)  :=I^{(n)}\otimes\mathrm{Re}(X) +J^{(n)}\otimes\mathrm{Im}(X)\:.
	\end{aligned}
\end{equation}

We previously proved that this map can be used to construct a locality-preserving embedding of quantum-theory protocols, viewed as operational probabilistic theories, into real-quantum-theory protocols~\cite[Theorem~2]{HoffreumonWoods2026}. In particular, we showed that every sequence of complex quantum instruments composed in succession (via $\circ$) or in parallel (via $\otimes$) admits a corresponding sequence of real quantum instruments, with states and effects recovered as special cases. To make this precise, we need to introduce some notation for what we describe as a `protocol' in a theory. (See Ref.~\cite{HoffreumonWoods2026} for a more comprehensive and pedagogical description.)

The protocol involves $n$ subsystems $\{A_i\}_{i=1}^n$ that undergo $r$ rounds of interventions, i.e., evolutions described by instruments, before being finally discarded. The systems are assumed to start in an arbitrary state $\rho^{(0)} \in \cB\left(\bigotimes_i \mathcal{H}_{A_i}^{(0)}\right)$, which represents the state of the overall system before the parties involved in the protocol start acting on it.

The protocol is therefore associated with a Born rule of the form $\TrX{}{\left[\Phi^{(r)} \circ \ldots \circ \Phi^{(1)}\right](\rho^{(0)})}$.
Since we are dealing with channels, the Hilbert-space dimension associated with each subsystem may change between rounds of the protocol. In particular, a local dimension $d_{A_i}$ associated with subsystem $A_i$ may be set to $1$ to represent its discarding, and it can be increased to $d_{A_i}>1$ to represent its (re)preparation. For this reason, $d_{A_i}^{(t)}$ may differ from $d_{A_i}^{(t-1)}$ for every $i = 1,\ldots,n$ and every $t=1,\ldots,r$.

There are still two ingredients missing from the description. The first is the specification of how the subsystems are distributed among groups of parties at each round, which we will call a \emph{partition}; the second is the specification of the set of settings distributed to the parties at the beginning of a round, which we will call a \emph{context}.

At round $t$, a partition $\mathsf{P}^{(t)}= \{P_1^{(t)},\ldots,P_{m_t}^{(t)}\}$ dictates an assignment of the $n$ subsystems to $m_t\leq n$ subsets associated with the parties $P_k^{(t)}$. For example, if at round $t$ the first party has access to subsystems $A_1$ and $A_2$, this is represented by $P_1^{(t)}=\{A_1,A_2\}$. Hence, the partition encodes the locality data of a round, which consequently induces a factorisation of the channel. Given $\mathsf{P}^{(t)}$, the operation at round $t$ must have the form
\begin{equation}
	\Phi^{(t)} \cong \Phi^{(t)}_{P_1^{(t)}} \otimes \Phi^{(t)}_{P_2^{(t)}} \otimes ... \otimes \Phi_{P_{m_t}^{(t)}}^{(t)} \:,
\end{equation}
where
\begin{equation*}
	\begin{aligned}
		&\Phi^{(t)} \in \operatorname{Chan}\left((\{A_1,...A_n\} \to \{A_1,...A_n\}\right)\:; \\
		& \Phi^{(t)}_{P_k} \in \operatorname{Chan}\left(P_k^{(t)} \to P_k^{(t)}\right) \:.
	\end{aligned}
\end{equation*}
The set of partitions encodes the \emph{locality structure} of the protocol, in the sense of which subsystems can be jointly accessed at any given round. One can equivalently call this the circuit topology.

It remains to describe how the context, i.e., how settings and previous outcomes are distributed among parties. In a general adaptive protocol, the settings received by one party at round $t$ may depend on both the settings and outcomes of all parties at every round before $t$. This is used to represent the classical communication that may take place between rounds.

However, to simplify the description, and since the proof is essentially unchanged, we avoid these complications by assuming that only the last round actually receives settings and is expected to produce outcomes. No conditional operation is assumed to take place before the last round. In other words, we assume that only the last round is described by a quantum instrument, while all previous rounds are instead described by channels. Again for simplicity, we assume that the number of parties at the end of the protocol is the same as the number of subsystems. This is without loss of generality, as any given subsystem can always be refactored into smaller subsystems.

At the end of the protocol, the parties are assumed to choose their measurement operations using a setting $x_k$ sampled from a finite set $X_k$ of classical measurement settings and, for each $x_k\in X_k$, they obtain an outcome $a_k$ sampled from a finite outcome set $A_k(x_k)$. In accordance with the previous simplifications, the last round of the protocol can be seen as the application of $n$ POVMs $\{E^i_{a_i|x_i}\}_{a_i\in A_i(x_i), x_i \in X_i}$, such that the joint distribution of $a_1, ... a_n$ conditioned on $x_1, ..., x_n$ is given by
\begin{equation}
	p(\bm a|\bm x) = \TrX{}{\left(\bigotimes_{i=1}^nE^i_{a_i| x_i}\right) \Phi^{(t<r)}(\rho^{(0)})} \:,
\end{equation}
where bold font notation is used to indicate vectors of classical data, $\bm x:=(x_1,\ldots,x_n)$ and $\bm a:=(a_1,\ldots,a_n)$, and where
\begin{equation}
	\Phi^{(t<r)} := \Phi^{(t-1)} \circ \ldots \circ \Phi^{(1)} = \left( \Phi^{(t-1)}_{P_1^{(t)}} \otimes ...\otimes \Phi^{(t-1)}_{P_{m_t}^{(t)}}\right) \circ \ldots \circ  \left( \Phi^{(1)}_{P_1^{(1)}} \otimes ...\otimes \Phi^{(1)}_{P_{m_1}^{(1)}}\right) \:.
\end{equation}

Knowing these settings and outcomes, a \emph{behaviour} is the set of conditional distributions
\begin{equation}
	p(\bm a|\bm x)\geq0,\qquad
	\sum_{\bm a\in\prod_i A_i(x_i)}p(\bm a|\bm x)=1 :,
	\label{eq:behavior}
\end{equation}
for all settings and outcomes obtained from a fixed family of settings $\bm X = (X_1,\ldots,X_n)$ and outcomes $\bm A(\bm X) = (A_1(X_1),\ldots,A_n(X_n))$.

For a given behaviour, an assignment of quantum operations to RQT operations is \emph{locality-preserving and context-independent} when the image assigned to a channel or outcome subchannel depends only on that map and its declared input/output blocks, not on the larger protocol or measurement in which it appears, and parallel composites are assigned from their declared components.

The main result of Ref.~\cite{HoffreumonWoods2026} was to prove that a locality-preserving and context-independent assignment exists for every behaviour. There, we proved that the mapping
\begin{subequations}\label{eq:realification_multipartite}
	\begin{align}
		\mathcal{R} : \mathsf{QT} &\longrightarrow \mathsf{RQT} \notag \\ 
		\rho^{(0)} &\longmapsto \tfrac12 \Gamma^{(n)}(\rho^{(0)}) \:, \\
		\forall i:\: E^i_{a_i|x_i} &\longmapsto \Gamma(E^i_{a_i|x_i}) \:,\\
		\forall t, \forall k\in 1, ... m_t\:, \Phi_{P_k}^{(t)} & \longmapsto \Gamma(\Phi_{P_k}^{(t)} )\:,
	\end{align}
\end{subequations}
provides such an assignment because it is based on real-linear maps that preserve the Born rule.

\subsubsection{Locality-preserving real representations of finite protocols}
\label{sec:locality-preserving-protocols}

It remains to adapt the result to superselected theories. We need the following notation:
\begin{equation}
	\mathcal{U}_{A_1A_2... A_n}(g)(\rho) := \left(U_{A_1}(g) \otimes U_{A_2}(g) \otimes ... \otimes U_{A_n}(g)\right) \rho  \left(U_{A_1}(g) \otimes U_{A_2}(g) \otimes ... \otimes U_{A_n}(g)\right)^\dag \:.
\end{equation}

\begin{definition}[Superselected protocols]\label{def:scenario}
	A protocol in $\mathsf{QT}$ is superselected by a group $G$ if, for any behaviour, the initial state and final effects are invariant under the corresponding action of $G$, and every channel used in its description is equivariant under the corresponding action of $G$.
	
	That is, for every $g\in G$,
	\begin{subequations}
		\begin{align}
			&\mathcal{U}_{A_1A_2\ldots A_n}(\rho^{(0)}) = \rho^{(0)}\:,
			\\
			\forall i:\quad &\mathcal{U}_{A_i}(E^i_{a_i|x_i}) = E^i_{a_i|x_i} \:,
			\\
			\forall t, \forall k\in 1, \ldots,m_t:\quad &\mathcal{U}_{P_k^{(t)}} \circ \Phi_{P_k^{(t)}}^{(t)} = \Phi_{P_k^{(t)}}^{(t)} \circ \mathcal{U}_{P_k^{(t-1)}}\:.
		\end{align}
	\end{subequations}
	
\end{definition}

The probabilities, trusted laboratory labels, and asserted partitions are scenario data. Hilbert-space dimensions, concrete matrix realisations of the operators, and the particular tensor decomposition realising those labels are model data. The result below preserves the scenario data as well as the assigned locality of operations; it does not claim that a product or separable quantum initial state is represented by a product or separable real state.

We now apply the locality-preserving realification to finite multipartite protocols in superselected quantum information theory.

\begin{proposition}[Finite protocols]
	\label{thm:adaptive}
	Every finite-dimensional superselected quantum protocol as in
	\cref{def:scenario} has an RQT realization with:
	\begin{enumerate}[label=(\roman*)]
		\item the same joint and conditional outcome probabilities at every round;
		\item the same prescribed locality structure;
		\item a component-wise and context-independent assignment of quantum operations to RQT operations.
	\end{enumerate}
\end{proposition}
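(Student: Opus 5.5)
The plan is to reduce the statement to the unrestricted result \cite[Theorem~2]{HoffreumonWoods2026}, recalled above as the assignment $\mathcal R$ of \cref{eq:realification_multipartite}. The reduction rests on the observation already made in \cref{sec:worlds}: a superselected protocol is, in particular, a protocol of $\mathsf{QT}$. Every invariant state, invariant effect and equivariant channel of \cref{def:scenario} is an element of the corresponding unrestricted set. Applying $\mathcal R$ to the protocol therefore yields a real protocol: the initial state is $\tfrac12\Gmap^{(n)}(\rho^{(0)})$, the local effects are $\Gmap(E^i_{a_i|x_i})$, and the local channels are the realified $\Phi^{(t)}_{P_k}$.

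First I would establish (i). Since $\Gmap^{(n)}$ is a real-linear map preserving the Born pairing, the cited theorem gives
\[
\TrX{}{\left(\bigotimes_i\Gmap(E^i_{a_i|x_i})\right)\mathcal R(\Phi^{(t<r)})\big(\tfrac12\Gmap^{(n)}(\rho^{(0)})\big)} = p(\bm a|\bm x).
\]
This holds for every $\bm a,\bm x$, and marginals and conditionals are obtained from these joint probabilities by sums and ratios. Intermediate rounds enter only through the composed channel. If one wants probabilities ``at every round'', I would apply the same identity to the truncated protocol ending with the instrument at round $t$.

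Next I would establish (ii) and (iii). They follow from the definition of $\mathcal R$: each local component $\Phi^{(t)}_{P_k}$ and each local effect is mapped using only its own input and output blocks. Parallel composites are sent to the real Kronecker product of their images, up to the fixed reordering of auxiliary rebits built into $\Gmap^{(n)}$ via the recursive pair $(I^{(n)},J^{(n)})$ of \cref{eq:InJn}. So the partition $\mathsf P^{(t)}$ is respected, and the assignment does not depend on the surrounding protocol. None of this uses invariance, which is why the superselected case inherits these properties directly.

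Finally, I would check that the images remain in the superselected real subtheory, so that the realisation is genuinely a representation of $\mathsf{QT}^{\mathrm{inv}}_G$ and not merely of $\mathsf{QT}$. The on-site form \cref{eq:globalU} is used here. For a single party's component, \cref{lemm:fixediso} and \cref{thm:algebra} show that the image commutes with $O_{P_k}(g)=\Gmap(U_{P_k}(g))$ and with $J$, so it is $C_4\times G$ invariant or equivariant locally. For the global state $\tfrac12\Gmap^{(n)}(\rho^{(0)})$ one needs the multipartite analogue: invariance under $\Gmap^{(n)}(\bigotimes_iU_{A_i}(g))$ together with compatibility with $J^{(n)}$. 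I would prove it via the homomorphism property of $\Gmap^{(n)}$ on the relevant subspace, i.e.\ $\Gmap^{(n)}(XY)=\Gmap^{(n)}(X)\Gmap^{(n)}(Y)$, which holds because $(I^{(n)},J^{(n)})$ satisfy the same relations as $(1,i)$ on the range of $I^{(n)}$. Applying this to $[\rho^{(0)},U(g)]=0$ then gives the required commutation.

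I expect this last step to be the main obstacle. The map $\Gmap^{(n)}$ is only a homomorphism onto a corner, since $I^{(n)}$ is a projector rather than the identity. Consequently $\Gmap^{(n)}(\bigotimes_iU_{A_i}(g))$ is orthogonal only on the range of $I^{(n)}$, and it does not coincide with $\bigotimes_i\Gmap(U_{A_i}(g))$. My first attempt would be to show that $\bigotimes_i\Gmap(U_{A_i}(g))$ commutes with $I^{(n)}$ and agrees with $\Gmap^{(n)}(\bigotimes_iU_{A_i}(g))$ on its range. Since the real state is supported there, invariance under the genuine product orthogonal action $\bigotimes_iO_{A_i}(g)$ would then follow. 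Everything else is bookkeeping inherited from \cite{HoffreumonWoods2026}.
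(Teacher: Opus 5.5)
Your proposal is correct and takes essentially the paper's route. You reduce to the unrestricted locality-preserving realification of \cite{HoffreumonWoods2026} via $\mathsf{QT}^{\mathrm{inv}}_G\subseteq\mathsf{QT}$, apply \cref{thm:algebra} locally, and extend it to $\Gmap^{(n)}$ through the algebraic relations of $(I^{(n)},J^{(n)})$. Your extra care at the last step is warranted: for $n\geq2$, $I^{(n)}$ is a rank-$2$ projector with $(J^{(n)})^2=-I^{(n)}$, so these matrices represent $C_4$ only on the range of $I^{(n)}$, which the paper glosses over. Your planned check also goes through: $\bigotimes_i\Gmap(U_{A_i}(g))$ is built from tensor products of $I$'s and $J$'s, so it commutes with $I^{(n)}\otimes\id$, and the same recursive computation as for the $R$-product gives $(I^{(n)}\otimes\id)\bigl(\bigotimes_i\Gmap(U_{A_i}(g))\bigr)=\Gmap^{(n)}\bigl(\bigotimes_iU_{A_i}(g)\bigr)$; since $I^{(n)}J^{(n)}=J^{(n)}$ makes $\Gmap^{(n)}(\rho^{(0)})$ supported on that range, this yields invariance under the genuinely orthogonal product action.
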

\begin{proof}
	This is a direct consequence of \cref{thm:algebra} applied to the result of Ref.~\cite{HoffreumonWoods2026}: since every local operation carries a unitary representation of $G$ (which respects locality when the operation spans multiple subsystems), and since every local operation is mapped to the reals via $\Gamma$ or a map proportional to it, one can locally apply \cref{thm:algebra}, and the result follows.

	The only extra element needed to prove the theorem is a generalisation of \cref{thm:algebra} to the situation in which the multiple-subsystems map $\Gmap^{(n)}$ is used instead of $\Gmap$. The proof is exactly the same once one realises that $\{I^{(n)}_{d_{A_1}d_{A_2}\ldots d_{A_n}},J^{(n)}_{d_{A_1}d_{A_2}\ldots d_{A_n}}\}$ obey the same relations as $\{\id_{2d},J_d\}$, namely, that they provide an orthogonal representation of the group $C_4$.
	
\end{proof}

It follows that not only does every protocol of quantum theory have a real description, but every superselected subtheory of quantum theory also has a real description, even when locality is imposed in the sense of tensor-product factorisation.

\subsection{Example: real fermionic information theory}
\label{sec:real-fermionic-quantum-theory}
To conclude this section, we apply the formalism to fermionic information theory (FIT; a brief review of the elements of the formalism needed for this section is given in \cref{app:complex-fermions}). This example is somewhat peculiar, as fermionic information theory requires substantial modifications compared to more ``usual'' Hilbert-space theories. The main reason for these peculiarities is that the algebra of observables is not taken to be $\cB(\mathcal{H})$, but rather the $\mathbb{Z}_2$-graded algebra generated by the canonical anticommutation relations (CAR).

The $\mathbb{Z}_2$-grading of the algebra leads its associated Hilbert space to decompose into sectors in a particular manner. With such a decomposition, the Hilbert space is usually called a Fock space. In particular, the $\mathbb{Z}_2$ gradation groups the sectors into two classes, called even and odd sectors. The fermionic parity superselection rule is then imposed, forbidding coherent superpositions between these two classes of sectors. Physically, this expresses the fact that fermionic parity is conserved: in particular, fermions can only be created or annihilated in pairs, thereby preventing the conversion of a boson into a fermion.

In this section, we briefly introduce the Fock-space description as a Hilbert-space representation in order to show how the realification map acts on it, and how the parity superselection rule is mapped to a corresponding real superselection rule.

We stress that the word ``real'' in ``real fermionic information theory'' refers only to the RQT representation obtained from the realification construction introduced in the previous sections. In particular, real FIT need not be equivalent to a canonical real representation of the CAR algebra, in the same sense that a real representation of a superselected QT need not, in general, be equivalent to a superselected RQT (although we have shown above that it is). Whether our realification can itself be interpreted as a real representation of the CAR algebra is left for future work.

\subsubsection{Fermionic formalism} 
To make the subsection readable without first consulting the appendix, we briefly reintroduce the concepts needed for the discussion below. Fix $m\in\mathbb N$ to be the number of fermionic modes under consideration, and let $M:=\{1,\ldots,m\}$. The corresponding creation and annihilation operators, denoted $a_j^\dagger$ and $a_j$, satisfy the canonical anticommutation relations (CAR):
\begin{equation}
	\forall i,j\in M:\quad \{a_i,a_j\}=\{a_i^\dagger,a_j^\dagger\}=0 \:, \qquad \{a_i,a_j^\dagger\}=\delta_{ij}\id \:,
\end{equation}
where $\{X,Y\}:=XY+YX$ denotes the anticommutator. The CAR algebra of the full system is the complex algebra generated by these $2m$ operators and the identity:
\begin{equation}
	\mathcal A(M):=\operatorname{alg}_{\C}\{\id,a_j,a_j^\dagger:j\in M\} \:.
\end{equation}

The fermionic Fock space is constructed as the direct sum of antisymmetric tensor powers of the one-particle Hilbert space $\C^m$. Let $\pi\in S_k$ be a permutation of the $k$ tensor factors, and define
\begin{equation}
	\Pi_\pi(v_1\otimes\cdots\otimes v_k) :=v_{\pi^{-1}(1)}\otimes\cdots\otimes v_{\pi^{-1}(k)} 
\end{equation}
as the operator implementing the permutation $\pi$. The antisymmetrizer
\begin{equation}
	A_k:=\frac1{k!}\sum_{\pi\in S_k}\operatorname{sgn}(\pi)\Pi_\pi
\end{equation}
projects $(\C^m)^{\otimes k}$ onto its antisymmetric subspace noted
\begin{equation}
	\operatorname{ran}(A_k)=\wedge^k\C^m \subseteq (\C^m)^{\otimes k} \:.
\end{equation}
For $k\in{1,\ldots,m}$, this is the $k$-particle sector. To complete the picture, the `no-particle' sector, the \textit{vacuum} sector must be added as the one-dimensional space $\wedge^0\C^m\cong\C$. It is spanned by a vector $|\mathrm{vac}\rangle$ satisfying
\begin{equation}
	\forall j\in M:\quad a_j|\mathrm{vac}\rangle=0 \:.
\end{equation}
The Fock space is therefore
\begin{equation}
	\mathcal F_m=\bigoplus_{k=0}^m\wedge^k\C^m\:.
\end{equation}
From the construction, we see that the total dimension of the Fock space, noted $d_m$ must be equal to $2^m$. The subspace $\wedge^k\C^m$ is the eigenspace of the number operator
\begin{equation}
	N:=\sum_{i=1}^m a_i^\dagger a_i
\end{equation}
with eigenvalue $k$. Thus, $m$ counts the total number of fermionic modes, whereas $k$ labels the particle-number sector. We stress this distinction since the fermionic modes themselves do not correspond to tensor-factor subsystems in the locality structure of the CAR algebra.

This distinction is important to define the unrestricted, single-partite fermionic information theory, which is simply single-system QT over Fock spaces. Instead of the systems $A, B, \ldots$, we will use the letter corresponding to the number of modes as a label for various fermionic systems. This gives:
\begin{subequations}\label{eq:FIT}
	\begin{align}
		\operatorname{St}_{\mathsf{FIT}}(m)
		&:=
		\{\rho\in\cB(\mathcal{F}_m):
		\rho\geq0,\ \TrX{}{\rho}=1\}\:,
		\label{eq:FIT-state-set}\\
		\operatorname{Eff}_{\mathsf{FIT}}(m)
		&:=
		\{E\in\cB(\mathcal{F}_m):
		0\leq E\leq\id_{d_m}\}\:,
		\label{eq:FIT-effect-set}\\
		\operatorname{Chan}_{\mathsf{FIT}}(m\!\to\!\ell )
		&:=
		\left\{\Phi:\cB(\mathcal{F}_m)\to\cB(\mathcal{F}_{\ell} ):
		\Phi\ \text{CPTP}\right\}\:.
		\label{eq:FIT-channel-set}
	\end{align}    
\end{subequations}

\paragraph{Parity superselection rule.}
Finally, let $P:=(-1)^N$ be the fermionic parity operator, which has eigenvalue $+1$ on even-particle-number sectors and $-1$ on odd-particle-number sectors. The parity superselection rule forbids coherent superpositions between these two sectors. Equivalently, physical states and effects must be invariant under the parity action:
\begin{equation}
	P\rho P=\rho,\qquad PEP=E,
\end{equation}
or, equivalently, $[\rho,P]=[E,P]=0$. In addition, physical channels are required to be equivariant under this adjoint action. Be aware that the adjoint action of a parity operator is traditionally indicated by $\Theta(X) := PXP$; this is at odds with our notation, which would have marked it as $\mathcal{P}$.

Altogether, this means that the invariant FIT is given by:
\begin{subequations}\label{eq:FIT-inv}
	\begin{align}
		\operatorname{St}_{\mathsf{FIT}^{\mathrm{inv}}_{\mathbb{Z}_2}}(m)
		&:=
		\{\rho\in\cB(\mathcal{F}_m):
		\rho\geq0,\ \TrX{}{\rho}=1,\ [P_m,\rho] = 0\}\:,
		\label{eq:FIT-inv-state-set}\\
		\operatorname{Eff}_{\mathsf{FIT}^{\mathrm{inv}}_{\mathbb{Z}_2}}(m) 
		&:=
		\{E\in\cB(\mathcal{F}_m):
		0\leq E\leq\id_{d_m},\ [P_m,E] = 0\}\:,
		\label{eq:FIT-inv-effect-set}\\
		\operatorname{Chan}_{\mathsf{FIT}^{\mathrm{inv}}_{\mathbb{Z}_2}}(m\!\to\!\ell)
		&:=
		\left\{\Phi:\cB(\mathcal{F}_m)\to\cB(\mathcal{F}_{\ell} ):
		\Phi\ \text{CPTP}, \ \Theta_n \circ \Phi = \Phi \circ \Theta_m \right\}\:.
		\label{eq:FIT-inv-channel-set}
	\end{align}    
\end{subequations}
(In the above, we added an `$m$' subscript to the parity operator $P$ such that the space it is acting on can be distinguished more easily.)   

This is closely analogous to the superselection construction considered for RQT. The parity operator satisfies $P^2=\id$ and therefore provides a unitary representation of the generator of the discrete group $\mathbb Z_2$. In the same way that $J_d^4=\id_{2d}$ provides an orthogonal representation of the generator of $C_4$ in the realification construction, the parity operator implements the $\mathbb Z_2$ symmetry underlying the fermionic superselection rule. Thus, ignoring all the gradation subtleties that induce a Fock-space structure in the Hilbert space and focusing solely on the superselection constraint, fermionic parity can be viewed as restricting the underlying quantum theory to the subtheory invariant under the action of $G=\mathbb Z_2$ under the unitary representation:
\begin{equation}
	\begin{aligned}
		&U: \mathbb{Z}_2 \to U(d_m) \:,\\
		&U(0) = \id_{d_m}\:, \quad U(1) = P_m \:.
	\end{aligned}
\end{equation}

\paragraph{Multipartite fermionic information theory}
To discuss subsystems in fermionic information theory (FIT), one groups the fermionic modes into disjoint \emph{regions}, i.e., disjoint subsets of modes. For $I\subseteq M$, define $\mathcal A(I):=\operatorname{alg}_{\C}\{\id,a_j,a_j^\dagger:j\in I\}$ as the CAR subalgebra associated with region $I$, together with its associated number and parity operators $N_I:=\sum_{j\in I}a_j^\dagger a_j$ and $P_I:=(-1)^{N_I}$.

For two disjoint regions $I,K\subseteq M$, the corresponding CAR subalgebras $\mathcal A(I)$ and $\mathcal A(K)$ satisfy graded commutation relations rather than ordinary commutation relations. In particular, odd elements associated with distinct regions anticommute, while even elements commute. After imposing the parity superselection rule, the physically relevant local observable algebras are the even subalgebras
\begin{equation}
	\mathcal A_{\mathrm{even}}(I):={A\in\mathcal A(I):[A,P_I]=0}\:,
\end{equation}
and similarly for $K$, and these even algebras commute for disjoint regions:
\begin{equation}
	[\mathcal A_{\mathrm{even}}(I),\mathcal A_{\mathrm{even}}(K)]=0\:.
\end{equation}

Thus, parity superselection appears to remove the obstruction to ordinary commutation at the level of physical observables. However, it does not turn the underlying fermionic subsystem structure into an ordinary tensor-product structure. 
In other words, although the superselected algebras of disjoint regions commute, the resulting tensor-product structure fails to properly represent them. This is due to the tensor product not being the natural locality structure of the full fermionic theory, as we detail below.

This distinction is important for the present work. Our realification construction was specifically designed to preserve locality in the strong sense of tensor-product factorisation. Applying it directly to FIT would therefore require imposing a tensor-product notion of locality that is not native to the fermionic theory, one that is already nonlocal at the complex level. We consequently do not pursue a multipartite application of our construction for FIT. The obstruction is discussed at a high level in the remark below; see \cref{subsec:graded-locality-fermionic-quantum-theory} for a more technical account of graded locality and its consequences.

\begin{remark}[Issue with a tensor-product notion of locality in FIT.]
	As we have seen, the graded algebra underlying FIT can ultimately be represented as a direct sum of algebras. This direct sum decomposition is isomorphic to a direct sum of complex-matrix algebras through the Jordan--Wigner transform. Therefore, it can be represented on a tensor-product Hilbert space via a decomposition like that presented in \cref{sec:Tsirelson}. With such a decomposition, one might expect the ``locally acting'' CAR subalgebras to be represented by operators localised on the corresponding tensor factors.
	
	However, the $\mathbb{Z}_2$-grading obstructs such a straightforward notion of tensor-product locality. In particular, odd operators associated with different fermionic subsystems anticommute, whereas operators acting on distinct tensor factors in an ordinary tensor-product Hilbert space commute. Thus, the CAR algebra cannot be faithfully represented with the usual notion of locality in which operators $A,B$ associated with different subsystems take the form $\Tilde{A}\otimes\id$ and $\id\otimes \Tilde{B}$.
	
	The Jordan--Wigner representation resolves this mismatch by introducing a ``Jordan--Wigner string'' into the representation of fermionic operators: a sequence of $\sigma_z$ matrices acting on the preceding tensor factors, together with a local operator acting on the target factor. This string ensures that the resulting operators reproduce the required fermionic anticommutation relations. Consequently, locality in the original graded algebra is not represented as ordinary tensor-product locality in the Jordan--Wigner representation.
	
	For this reason, we will not attempt to apply our multipartite formalism directly to (the Jordan--Wigner representation of) FIT, since our construction was specifically designed to preserve locality in the sense of tensor-product factorisation. Nevertheless, we expect the realification to be able to preserve the structure of the Jordan--Wigner representation, should it be applied to it. Indeed, the Jordan--Wigner strings themselves consist of products of real matrices, and should therefore be preserved under a map of the form $\Gmap\otimes\Gmap\otimes\cdots\otimes\Gmap$.
	
	More generally, the difficulty of using the ordinary tensor product of Hilbert spaces as a notion of parallel composition for a theory whose algebraic composition differs (i.e., the relation $A\otimes B \cong B \otimes A$ is not satisfied by graded operators) is closely related to the issue that motivated our realification map. The parallel composition of QT, namely the \textit{complex} Kronecker product, cannot be directly mapped to the parallel composition of RQT, namely the \textit{real} Kronecker product (i.e., $(cA) \otimes_\C B \cong A \otimes_\C (cB)$ is impossible to represent using $\otimes_\R$ if $c$ is non-real). This motivated Barrios et al.~\cite{Barrios2025} and us~\cite{Hoffreumon2025} to propose the modified $R$-tensor product $\otimes_r$ (or ``flag'' $\otimes_F$ in the nomenclature of Ref.~\cite{Barrios2025}) as an alternative composition rule. Similarly, the graded composition of fermionic systems motivates replacing the ordinary complex tensor product by an appropriate graded tensor product $\widehat{\otimes}$; see, e.g., Ref.~\cite{Vidal_2021}.
\end{remark}

\subsubsection{Realification of unrestricted fermionic information theory}

The realification of fermionic information theory (FIT) follows directly from \cref{thm:algebra}. What remains is to explicitly construct the real fermionic information theory obtained as the image of FIT under the mappings prescribed by the theorem. One reason we chose this example is that the underlying Hilbert space of FIT has enough additional structure that applying the theorem is more than a one-line computation: before applying the theorem, we must identify the underlying Hilbert space and symmetry group, and hence construct the real counterparts of the Fock space and CAR algebra.

Thus, the main task in this section is to construct these realified objects using the map $\Gmap$. The rest is a direct application of the definitions.

We first need to discuss vectors. This requires the vector-wise version of the standard complex-to-real mapping:
\begin{equation}\label{eq:real-fermion-vector-realification}
	\begin{aligned}
		\gamma_d : \C^d &\longrightarrow \R^{2d} \:, \\
		x &\longmapsto \gamma_d(x) = \ket{0}\otimes\ket{\mathrm{Re}(x)} +\ket{1}\otimes\ket{\mathrm{Im}(x)} = 
		\begin{pmatrix}
			\mathrm{Re}(x)\\
			\mathrm{Im}(x)
		\end{pmatrix}\:.
	\end{aligned}
\end{equation}
Once again, the dimension subscript will be omitted when it is clear from context. This map satisfies the following properties for every $\psi,\phi\in\C^d$ and $X\in M_d(\C)$:
\begin{equation}\label{eq:real-fermion-vector-intertwining}
	\begin{aligned}
		&\Gmap(X)\gamma(\psi)=\gamma(X\psi)\:,\\
		&\gamma(\phi)^T\gamma(\psi)=\mathrm{Re}(\phi^\dagger\psi)\:.
	\end{aligned}
\end{equation}
The first identity follows by direct block multiplication using \cref{eq:Gamma}, while the second shows that $\gamma$ is an isometry when the complex Hilbert space is regarded as a real inner-product space.

Since $d_m:=\dim_{\C}(\mathcal F_m)=2^m$ and $\mathcal F_m\cong\C^{d_m}$, its realified image is
\begin{equation}\label{eq:real-fermion-Fock-space}
	\widetilde{\mathcal F}_m:=\gamma(\mathcal F_m) \cong\R^2\otimes\R^{d_m}:.
\end{equation}
This space has real dimension $2d_m=2^{m+1}$.

We define the real annihilation and creation matrices by
\begin{equation}
	\widetilde a_j:=\Gmap(a_j):,\qquad
	\widetilde{a_j^\dagger}:=\Gmap(a_j^\dagger):.
\end{equation}
Because $\Gmap$ is a star homomorphism, cf.~\cref{eq:starhom}, the real creation operator is indeed the transpose of the real annihilation operator:
\begin{equation}
	\widetilde{a_j^\dagger}=\widetilde a_j^T:.
\end{equation}
Moreover, the CAR are faithfully reproduced over the reals:
\begin{equation}\label{eq:real-fermion-CAR}
	\{\widetilde a_i,\widetilde a_j\}=0:,\qquad
	\{\widetilde a_i^T,\widetilde a_j^T\}=0:,\qquad
	\{\widetilde a_i,\widetilde a_j^T\}
	=\delta_{ij}\id_{2d_m}:.
\end{equation}
In the occupation-number basis, the matrices $a_j$ are real (see \cref{eq:fermion-JW-formulas}), and therefore
\begin{equation}\label{eq:real-fermion-generators-block-form}
	\widetilde a_j=I\otimes a_j,\qquad
	\widetilde{a}_j^T=I\otimes a_j^\dagger,\qquad
	[\widetilde a_j,J_{d_m}]
	=[\widetilde{a}_j^T,J_{d_m}]=0:.
\end{equation}
To construct the realified Fock space from its occupation-number eigenstates, we need to mirror the exterior-power construction. We therefore realify the $k$-fold complex tensor space before applying the antisymmetrizer, rather than forming an exterior power directly over $\R$. More precisely, on the space $\gamma\left((\C^m)^{\otimes k}\right)\cong\R^2\otimes\R^{m^k}$ and for $\pi\in S_k$, define
\begin{equation}\label{eq:real-fermion-antisymmetrizer}
	\widetilde{\Pi}_{\pi}:=\Gmap(\Pi_\pi)\:,\qquad
	\widetilde A_k:=\Gmap(A_k)
	=\frac1{k!}\sum_{\pi\in S_k}
	\operatorname{sgn}(\pi)\widetilde{\Pi}_{\pi}:.
\end{equation}
The matrix $\Pi_\pi$ is an element of the permutation group, so it must be real in a given basis and thus $A_k$ is also real in the same basis (see e.g., the tensor basis used in \cref{eq:fermion-antisymmetrizer-matrix}). Consequently,
\begin{equation}\label{eq:real-fermion-antisymmetrizer-properties}
	\widetilde{\Pi}_{\pi}=I\otimes\Pi_\pi,\qquad
	\widetilde A_k=I\otimes A_k\:.
\end{equation}
Moreover, the realified $A_k$ is still an orthogonal projection, $\widetilde A_k^2=\widetilde A_k=\widetilde A_k^T$.

The realified $k$-particle sector must then be built from 
\begin{equation}\label{eq:real-fermion-exterior-sector}
	\widetilde{\wedge^k\C^m} :=\operatorname{ran}(\widetilde A_k) =\gamma(\wedge^k\C^m):.
\end{equation}

We intentionally retain $\C^m$ in the notation $\widetilde{\wedge^k\C^m}$. This is because the realified exterior powers of $\C^m$ are not isomorphic to the exterior powers $\wedge^k\R^{2m}$ obtained by antisymmetrising $\R^{2m}$. Indeed, $\dim_{\C}(\wedge^k\C^m)=\binom{m}{k}$, thus $\dim_{\R}\widetilde{\wedge^k\C^m} =2\binom{m}{k}$. Contrastingly, $\dim_{\R}(\wedge^k\R^{2m})=\binom{2m}{k}$, which implies that the dimensions can only match for $k=1$. Therefore,
\begin{equation}\label{eq:real-fermion-not-real-exterior-power}
	\widetilde{\wedge^k\C^m} \not\cong\wedge^k\R^{2m} \qquad\text{for }k\neq1,
\end{equation}
with the trivial exception $k=1$ understood separately.

This dimensional mismatch can be understood more explicitly by examining the vectors spanning the $k$-particle sectors. Ultimately, it stems from the fact that the image of the complex tensor product $\otimes_\C$ under $\gamma$ is not the ordinary real tensor product $\otimes_\R$, but the flag/$r$-tensor product $\otimes_r$ \cite{Barrios2025,Hoffreumon2025}. Since the exterior product $\wedge$ is constructed from the tensor product $\otimes$, it is therefore unsurprising that the image of $\wedge_\C$ under $\gamma$ does not coincide with $\wedge_\R$.

Let us analyse this difference using the basis vectors of the $k$-particle sectors. Starting with the vacuum sector, $k=0$, we have 
\[
\dim_{\R}(\widetilde{\wedge^0\C^m}) =2\binom{m}{0}=2\:, \qquad
\dim_{\R}(\wedge^0\R^{2m})=\binom{2m}{0}=1\:.
\]
The complex vacuum is therefore realified into a two-dimensional real plane. For $\ket{\mathrm{vac}} \in \mathcal{F}_m \cong \C^{d_m}$, define
\begin{equation}\label{eq:real-fermion-vacuum-plane}
	\begin{gathered}
		|\widetilde{\mathrm{vac}};0\rangle :=\gamma(|\mathrm{vac}\rangle)\:,\qquad
		|\widetilde{\mathrm{vac}};1\rangle :=J_{d_m}|\widetilde{\mathrm{vac}};0\rangle\:, \\
		\widetilde{\mathcal F}_{m,0} :=\operatorname{span}_{\R}
		{|\widetilde{\mathrm{vac}};0\rangle,
			|\widetilde{\mathrm{vac}};1\rangle}
		\cong\R^2:.
	\end{gathered}
\end{equation}
Where the realified $k=0$ sector $\widetilde{\mathcal F}_{m,0}$ is isomorphic to $\widetilde{\wedge^0\C^m}\cong$. Both vacuum vectors are annihilated by every $\widetilde a_j$. This follows directly from the first identity in \eqref{eq:real-fermion-vector-intertwining}, together with \eqref{eq:real-fermion-generators-block-form}.

This is no surprise; recall that a pure state is a complex ray, meaning that it is quotiented by the equivalence relation $\ket{\mathrm{vac}} \sim e^{i\theta} \ket{\mathrm{vac}}$. Over the reals, one needs two dimensions to represent this equivalence relation as the $J_{d_m}$-invariant real plane. This is once again due to the fact that we are working with a real representation of the complex formalism rather than its real counterpart, because our end goal is to prove that any superselected quantum theory can be embedded within a superselected real quantum theory in a way that preserves locality in order to show that any distribution predicted by the former can be reproduced with the latter. At the level of the formalism, superselected quantum theories are obviously inequivalent to superselected real quantum theories.

Hence, these two real vectors represent the two quadratures of a single complex ray, not two distinct physical vacuum states for the embedded theory. Since the density matrix representation is the standard tool to hide away this equivalence relation, we can use it over the reals to hide this cumbersome doubling of the vacuum: the RQT density matrix is no longer a rank-one projector but the rank-2 projector
\begin{equation}\label{eq:real-fermion-vacuum-state}
	\widetilde\rho_{\mathrm{vac}} :=\frac12\Gmap(\ketbra{\mathrm{vac}}{\mathrm{vac}} = =\ket{\gamma(\mathrm{vac})}\otimes_r \bra{\gamma(\mathrm{vac})} 
	=\frac12\sum_{\alpha=0}^1 |\widetilde{\mathrm{vac}};\alpha\rangle \langle\widetilde{\mathrm{vac}};\alpha|.
\end{equation}
The representation condition $[J_{d_m},\widetilde\rho_{\mathrm{vac}}]=0$ is precisely what ensures that $\widetilde\rho_{\mathrm{vac}}$ retains the complex phase structure. By contrast, the naive rank-one projector $ |\gamma(\mathrm{vac})\rangle\langle\gamma(\mathrm{vac})|$ does not commute with $J_{d_m}$ and therefore does not belong to the realified image of the complex state space.

The same doubling occurs in every $k$-particle sector. Let $K\subseteq M$ be a $k$-element subset of modes, say $K=\{j_1,\ldots,j_k\},$ for $1\leq j_1<\cdots<j_k\leq m$, and let $\binom{M}{k}$ denote the set of all $k$-element subsets of $M$. Then
\begin{equation}\label{eq:real-fermion-multi-particle-space}
	\begin{gathered}
		|\widetilde{\bm K};0\rangle :=\widetilde a_{j_1}^\dagger\cdots \widetilde a_{j_k}^\dagger|\widetilde{\mathrm{vac}};0\rangle\:,
		\qquad
		|\widetilde{\bm K};1\rangle :=J_{\binom{m}{k}}|\widetilde{\bm K};0\rangle\:,\\
		\widetilde{\wedge^k\C^m} \cong\widetilde{\mathcal F}_{m,k} :=\operatorname{span}_{\R}\left\{|\widetilde{\bm K};0\rangle, |\widetilde{\bm K};1\rangle: \ K\in\textstyle\binom{M}{k}\right\} \subseteq \R^{2d_m}\:.
	\end{gathered}
\end{equation}
Thus, the mode label $j$ has exactly the same meaning as in the original quantum theory. The additional binary label $\alpha\in{0,1}$ records the two real quadratures associated with complex amplitudes; it does not represent an additional fermionic mode.

It follows by linearity of the realification map that the entire Fock space is
\begin{equation}
	\widetilde{\mathcal F}_m =\bigoplus_{k=0}^m\widetilde{\mathcal F}_{m,k} \cong\bigoplus_{k=0}^m \widetilde{\wedge^k\C^m}\:.
\end{equation}
The different sectors are the eigenspaces of the realified number operator
\begin{equation}
	\widetilde{N} :=\sum_{j\in M}\widetilde a_j^T\widetilde a_j:.
\end{equation}

The realified operator space is subsequently given by
\begin{equation}
	\widetilde{\cB(\mathcal F_m)} :=\Gmap(\cB(\mathcal F_m)) = \{X\in\cB(\widetilde{\mathcal F}_m): [J_{\widetilde{\mathcal F}_m},X]=0\} \cong \{X\in M_{2d_m}(\R): [J_{2d_m},X]=0\} \:.
\end{equation}
For a channel $\Phi : \cB(\mathcal{F}_m) \to \cB(\mathcal{F}_{\ell})$, define its realified counterpart by
\begin{equation}\label{eq:real-fermion-embedded-channel}
	\widetilde{\Phi} : \widetilde{\cB(\mathcal{F}_m)} \to \widetilde{\cB(F_{\ell})} \:,
	\qquad\widetilde\Phi(\Gmap(X)) :=\Gmap(\Phi(X))\:.
\end{equation}
It obeys $\widetilde{\Phi} \circ \mathcal{J}_m = \mathcal{J}_{\ell} \circ \widetilde{\Phi}$ because left- and right-multiplication by $J$ in the real FIT corresponds to multiplication by $i$ before realification.

We can therefore characterise the realified unrestricted fermionic information theory as
\begin{subequations}\label{eq:RFIT}
	\begin{align}
		\operatorname{S}_{\mathsf{FIT},\R}^J(m)
		&:=
		\{\widetilde{\rho}\in\cB(\widetilde{\mathcal F}_m):
		\widetilde{\rho}\geq0,\
		\TrX{}{\widetilde{\rho}}=1,\
		[J_{d_m},\widetilde{\rho}]=0\}\:,
		\label{eq:RFIT-state-set}\\
		\operatorname{E}_{\mathsf{FIT},\R}^J(m)
		&:=
		\{\widetilde E\in\cB(\widetilde{\mathcal F}_m):
		0\leq\widetilde E\leq\id_{d_m},
		[J_{d_m},\widetilde E]=0\}\:,
		\label{eq:RFIT-effect-set}\\
		\operatorname{C}_{\mathsf{FIT},\R}^J(m\!\to\!\ell)
		&:=
		\left\{
		\Phi:\cB(\widetilde{\mathcal F}_m)\to
		\cB(\widetilde{\mathcal F}_{\ell}):
		\Phi\ \text{CPTP},
		\mathcal J_{d_{\ell}}\circ\Phi
		=\Phi\circ\mathcal J_{d_m}
		\right\}\:.
		\label{eq:RFIT-channel-set}
	\end{align}
\end{subequations}

Notice that we have not attempted to show that the realified FIT can itself be understood as a $C_4$-superselected real FIT. Although we know that the corresponding construction holds in general, our reluctance to pursue it here has two main reasons. First, defining what real FIT is requires making choices about what should constitute the real CAR algebras. Second, the dimensional mismatch between $\widetilde{\wedge^k\C^m}$ and $\wedge^k\R^{2m}$ would require reformulating the construction using a factor-wise realification such as $\Gmap^{(k)}$. Addressing these issues in detail would unnecessarily lengthen the article with formally heavier applications of the general machinery to this particular example.

\subsubsection{Realification of the parity superselection rule}
It remains to implement the realified parity superselection rule. Define the parity operator by
\begin{equation}
	\widetilde{P}_m := \Gmap(P_m) \:.
\end{equation}
The induced orthogonal representation of $\mathbb Z_2$ is
\begin{equation}\label{eq:real-fermion-Z2-representation}
	\begin{gathered}
		O_m:\mathbb{Z}_2 \to \mathrm{SO}(2d_m)\:,\\
		O_m(0):=\id_{2d_m},\qquad O_m(1):=\widetilde{P}_m \:.
	\end{gathered}
\end{equation}
Its conjugation-channel representation is
\begin{equation}\label{eq:real-fermion-Z2-channel-representation}
	\mathcal O_m(\epsilon)(X) =O_m(\epsilon)XO_m(\epsilon)^T \qquad(\epsilon\in\{0,1\})\:.
\end{equation}
Since $\mathbb{Z}_2$ is a discrete group with a single generator, invariance under this group,
\begin{equation}
	O_m(\epsilon)(X) = X \quad \forall \epsilon \in \mathbb{Z}_2\:,
\end{equation}
can be simplified into commutation with the generator,
\begin{equation}
	O_m(\epsilon)(X) = X \iff [P_m,X] = 0 \:.
\end{equation}
Also, recall that the conjugation by $P_m$ was denoted $\Theta_m$, accordingly we define
\begin{equation}
	\widetilde{\Theta}_m(X) := \widetilde{P}_m X \widetilde{P}_m \:. 
\end{equation}
Using this later simplification, we can therefore define a characterisation of the parity-superselected real fermionic information theory as
\begin{subequations}\label{eq:RFIT-physcial}
	\begin{align}
		\operatorname{S}_{\mathsf{FIT},\R}^{\mathbb{Z}_2, J}(m)
		&:=
		\{\widetilde{\rho}\in\cB(\widetilde{\mathcal F}_m):
		\widetilde{\rho}\geq0,\
		\TrX{}{\widetilde{\rho}}=1,\
		[J_{d_m},\widetilde{\rho}]=0,\ [\widetilde{P}_m,\widetilde{\rho}]=0\}\:,
		\label{eq:RFIT-physical-state-set}\\
		\operatorname{E}_{\mathsf{FIT},\R}^{\mathbb{Z}_2, J}(m)
		&:=
		\{\widetilde E\in\cB(\widetilde{\mathcal F}_m):
		0\leq\widetilde E\leq\id_{d_m},
		[J_{d_m},\widetilde E]=0, \ [\widetilde{P}_m,\widetilde{E}]=0\}\:,
		\label{eq:RFIT-physical-effect-set}\\
		\operatorname{C}_{\mathsf{FIT},\R}^{\mathbb{Z}_2, J}(m\!\to\!\ell)
		&:=
		\left\{
		\Phi:\cB(\widetilde{\mathcal F}_m)\to
		\cB(\widetilde{\mathcal F}_{\ell}):
		\Phi\ \text{CPTP},
		\mathcal J_{d_{\ell}}\circ\Phi
		=\Phi\circ\mathcal J_{d_m}, \ \widetilde{\Theta}_{\ell} \circ \widetilde{\Phi} = \widetilde{\Phi} \circ \widetilde{\Theta}_m
		\right\}.
		\label{eq:RFIT-physical-channel-set}
	\end{align}
\end{subequations}
The $J_{2d_m}$-commutation condition selects precisely the real matrices in the image of the quantum operator algebra; the $\widetilde{P}_m$-commutation condition is the parity-invariance superselection rule. The condition
involving $J_{2d_m}$ is essential. If it is omitted, one obtains all parity-invariant RQT states and effects, a strictly larger theory than the image of fermionic quantum theory.

Finally, this whole construction is a direct application of \cref{thm:algebra}. For ease of reference, we give the action of the theorem as a proposition rather than a corollary:
\begin{proposition}[Operational equivalence of fermionic information theory and its realification.]
	Defined as an operational probabilistic theory, Fermionic Information Theory has a real image that preserves all probabilities and locality assignments. 
	
	That is, the sets \cref{eq:FIT-inv} are isometrically mapped to the sets \cref{eq:RFIT-physcial} via the triple of mappings $(\frac12 \Gamma, \Gamma, \mathrm{Ad}_\Gamma)$.
\end{proposition}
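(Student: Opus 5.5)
The plan is to reduce the statement to \cref{thm:algebra}. For each $m$, take the system $A$ with $\mathcal H_A=\mathcal F_m\cong\C^{d_m}$ and $G=\mathbb Z_2$, represented unitarily by $U(0)=\id_{d_m}$, $U(1)=P_m$. Under this identification the invariant sets \cref{eq:FIT-inv} are exactly $\operatorname{St}_{\mathsf{QT}^{\mathrm{inv}}_{\mathbb Z_2}}$, $\operatorname{Eff}_{\mathsf{QT}^{\mathrm{inv}}_{\mathbb Z_2}}$ and $\operatorname{Chan}_{\mathsf{QT}^{\mathrm{inv}}_{\mathbb Z_2}}(m\to\ell)$. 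Invariance under $\mathbb Z_2$ is equivalent to commutation with the generator: $\mathcal U(1)(X)=X\iff[P_m,X]=0$ for operators, and $\Theta_\ell\circ\Phi=\Phi\circ\Theta_m$ for channels. The associated orthogonal representation is $O(g)=\Gmap(U(g))$, which is precisely $O_m$ in \cref{eq:real-fermion-Z2-representation}, with $O_m(1)=\widetilde P_m$.

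Next I would check that the real sets \cref{eq:RFIT-physcial} coincide with $\mathsf S_\R^{G,J}$, $\mathsf E_\R^{G,J}$ and $\mathsf C_\R^{G,J}$ from \cref{eq:real-fixed-state-set,eq:real-fixed-effect-set,eq:real-fixed-channel-set}. For states and effects, membership in $\mathfrak A_\R^{G,J}$ means $[\widetilde P_m,Y]=0$ together with $[J_{d_m},Y]=0$. The remaining conditions (symmetry, positivity, trace, or $0\le F\le\id$) are written identically in both places. I would read $\id_{d_m}$ in \cref{eq:RFIT-physical-effect-set} as the identity on $\widetilde{\mathcal F}_m$, that is $\id_{2d_m}$. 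For channels, $\mathcal J_{d_\ell}\circ\Phi=\Phi\circ\mathcal J_{d_m}$ and $\widetilde\Theta_\ell\circ\Phi=\Phi\circ\widetilde\Theta_m$ are exactly the equivariance conditions defining $\mathsf{Hom}_\R^{G,J}$. Since $\mathcal J$ and $\widetilde\Theta$ are invertible, this is the "conjugated" form $\mathcal O_B\circ\widetilde\Phi\circ\mathcal O_A^T=\widetilde\Phi$.

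\cref{thm:algebra} then says that $(\tfrac12\Gmap,\Gmap,\mathrm{Ad}_\Gmap)$ is a bijection between these sets that preserves the Born pairing \cref{eq:fixedborn}. That is the asserted isometric mapping and the preservation of all probabilities.

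The step I expect to need the most care is the channel component, because \cref{thm:algebra} itself treats it only briefly ("by construction"). Here I would argue directly. First, $\mathrm{Ad}_\Gmap(\Phi)$ is well defined on $\widetilde{\cB(\mathcal F_m)}=\Gmap(\cB(\mathcal F_m))$ by \cref{lemm:fixediso}. Second, it intertwines $\widetilde\Theta$ because $\Gmap$ is multiplicative and $\widetilde P=\Gmap(P)$, and it intertwines $\mathcal J$ because $J_d=\Gmap(i\id_d)$ and $\Phi$ is complex-linear. Third, complete positivity transfers because $\Gmap\otimes\id$ is a positivity-reflecting $*$-homomorphism, and trace preservation follows from \cref{eq:traceisometry}. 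Conversely, any real channel commuting with $\mathcal J$ restricts to a complex-linear map on the image of $\Gmap$ and pulls back to an equivariant CPTP $\Phi$.

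On locality, the proposition only claims preservation of locality assignments in the single-system sense. The multipartite claim is deliberately not pursued, per the preceding remark. Locality here therefore reduces to the context-independence of the assignment: the image of each operation depends only on that operation and its declared systems $m\to\ell$, which is immediate from the definition of $\mathrm{Ad}_\Gmap$.
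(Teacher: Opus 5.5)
Your reduction of the state and effect part to \cref{thm:algebra} is the same as the paper's, and that part is fine: you take $\mathcal F_m\cong\C^{d_m}$, $G=\mathbb Z_2$, $U(1)=P_m$ and $O_m=\Gmap\circ U$. There are, however, two genuine problems.

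\emph{Locality.} You read ``locality assignments'' as single-system context-independence, and you set the multi-region question aside by citing the preceding remark. That is not what the proposition claims. The remark only declines to apply the \emph{tensor-product} realification $\Gmap^{(n)}$ to FIT. Locality in FIT is the graded commutation structure \cref{eq:fermion-graded-commutation} of the regional algebras $\mathcal A(I)\subseteq\cB(\mathcal F_m)$ for disjoint mode sets, all living inside the single Fock space. This is exactly what the paper's proof addresses. Because $\Gmap$ is a real-linear, multiplicative, injective $*$-map, one has $\Gmap(X)\Gmap(Y)-(-1)^{pq}\Gmap(Y)\Gmap(X)=\Gmap\bigl(XY-(-1)^{pq}YX\bigr)$. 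Hence the real regional algebras $\Gmap(\mathcal A(I))$ obey the same CAR and the same graded commutation: odd elements of disjoint regions anticommute and even ones commute. Injectivity ensures that no additional relations appear. Grades are preserved because conjugation by $\widetilde P_I:=\Gmap(P_I)$ intertwines with $\Theta_I$ through $\Gmap$. In particular, disjoint local even effects $E,F$ map to commuting real effects with $\Gmap(EF)=\Gmap(E)\Gmap(F)$. So the joint local statistics $\TrX{}{\rho EF}=\TrX{}{\tfrac12\Gmap(\rho)\Gmap(E)\Gmap(F)}$ are reproduced. Your context-independence observation is true, but it does not establish this.

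\emph{Channel converse.} Your claim that any real channel commuting with $\mathcal J$ restricts to a complex-linear map on $\Gmap(\cB(\mathcal F_m))$ is false. On that image, the conjugation $\mathcal J_{d_m}$ acts as the identity. Commuting with it therefore only says that $\widetilde\Phi$ preserves the $J$-commutant. Complex-linearity of the pull-back is a different condition, $\widetilde\Phi(J_{d_m}Y)=J_{d_\ell}\widetilde\Phi(Y)$, which involves left multiplication rather than conjugation.

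For a counterexample, take $Z=\mathrm{diag}(1,-1)$ and $\mathcal Z(Y)=(Z\otimes\id_{d_m})Y(Z\otimes\id_{d_m})$.
\begin{itemize}
\item $\mathcal Z$ is an orthogonal conjugation, hence real CPTP.
\item Since $ZJ=-JZ$, the two signs cancel and $\mathcal Z$ commutes with $\mathcal J_{d_m}$.
\item Since $\widetilde P_m=I\otimes P_m$, $\mathcal Z$ commutes with $\widetilde\Theta_m$.
\end{itemize}
So $\mathcal Z$ lies in \cref{eq:RFIT-physical-channel-set}. Yet $\mathcal Z(\Gmap(X))=\Gmap(\overline X)$, which is antilinear. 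On Hermitian inputs it is the transpose, which is not completely positive, so $\mathcal Z$ has no CPTP preimage.

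Thus, for the channel sets as written, only the forward direction (injective and Born-preserving) holds. Surjectivity requires two changes. First, replace the $\mathcal J$-condition by the left-multiplication one, for example by requiring Kraus operators of the form $\Gmap(K_i)$. Second, restrict maps to the $J$-commutant, which is where $\mathrm{Ad}_\Gmap(\Phi)$ is actually defined. With these changes your argument goes through. Complete positivity transfers back via the identity $\Gmap_d(X)\otimes B=\Gmap_{dk}(X\otimes B)$ for real $B$, together with positivity reflection. The paper's ``by construction'' does not address this point either, but the step you wrote would fail.
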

\begin{proof}
	Follows by construction and \cref{thm:algebra}. The preservation of locality, which in this case corresponds to the (graded) commutation of subalgebras, is a direct consequence of using a $*$-algebra homomorphism for the mapping: every complex (anti)commutator yielding a real operator will have a counterpart in the reals, and $0$ is, in particular, a real operator.
\end{proof}

\section{Conclusion}
We have established a finite-dimensional Tsirelson equivalence for RQT: commuting local measurements and ordinary tensor-product local measurements generate the same correlations. These correlation sets also coincide with their QT counterparts. Thus, in this setting, neither the choice of real or complex scalars nor the choice between these two formulations of measurement locality changes the observable statistics.

We have also shown that finite-dimensional superselected QTs admit RQT representations preserving their symmetry restrictions and all probabilities in finite adaptive protocols, together with the declared tensor-product locality of operations. The resulting RQT realizations are therefore experimentally indistinguishable from the original theories within these protocols. Parity-superselected fermionic QT provides a concrete example. These results extend the operational equivalence of QT and RQT to physically relevant symmetry-restricted theories.

\section*{Acknowledgements}
Proof ideas came from the authors, proof formulation was aided by ChatGTP 5.6 Sol. Writing the manuscript was aided by ChatGTP 5.6 Sol. Authors take full responsibility for proof correctness.

\bibliographystyle{plainnat}
\bibliography{finite_dimensional_tsirelson}

\appendix
\section{Background on fermionic quantum theory}
\label{app:complex-fermions}

We review the conventional finite-dimensional formulation of fermionic
quantum theory, from its global Fock-space and parity-superselection
structure to the graded locality of regional CAR algebras, following
\cite{WWW,BravyiKitaevFermions,DMPT,SzalayFermions}.

\subsection{Global formulation of fermionic quantum theory}
\label{subsec:complex-fermionic-example}
Fix $m$ fermionic modes, with vacuum vector $|\mathrm{vac}\rangle$ and
creation operators $a_1^\dagger,\ldots,a_m^\dagger$.  Here
$a_j^\dagger$ creates one fermion in mode $j$, and $a_j$ is its adjoint
annihilation operator.  They obey the \emph{canonical anticommutation
	relations}, abbreviated \emph{CAR}:
\begin{equation}
	\{a_i,a_j\}=0,\qquad
	\{a_i^\dagger,a_j^\dagger\}=0,\qquad
	\{a_i,a_j^\dagger\}=\delta_{ij}\id,
	\qquad
	a_j|\mathrm{vac}\rangle=0,
	\label{eq:fermion-CAR}
\end{equation}
where $\{X,Y\}:=XY+YX$ is the anticommutator.  Thus two creation operators
anticommute, $a_i^\dagger a_j^\dagger=-a_j^\dagger a_i^\dagger$, and in
particular $(a_j^\dagger)^2=0$, so a single mode cannot be occupied twice.
Accordingly, a \emph{CAR algebra} below means an operator algebra generated by
creation and annihilation operators satisfying these canonical
anticommutation relations.
The vacuum sector is one-dimensional:
\begin{equation}
	\wedge^0\C^m:=\C|\mathrm{vac}\rangle\simeq\C.
	\label{eq:fermion-vacuum-sector}
\end{equation}
This is the $k=0$ sector, i.e. the sector with no particles.  The
one-particle Hilbert space is
\begin{equation}
	\wedge^1\C^m=\C^m
	=\operatorname{span}\{e_1,\ldots,e_m\},
	\qquad
	e_j:=a_j^\dagger|\mathrm{vac}\rangle.
	\label{eq:fermion-one-particle-space}
\end{equation}
Thus $\C^m$ itself describes one fermionic particle, or equivalently one
fermionic excitation above the chosen vacuum, distributed among the $m$
modes.  The basis vector $e_j$ means that mode $j$ is occupied and all other
modes are unoccupied.

For $k\ge1$, the $k$-particle sector is the exterior power
$\wedge^k\C^m$.  It is the antisymmetric subspace generated by occupation of
$k$ distinct modes.  More explicitly, begin with the ordinary tensor-product
space $(\C^m)^{\otimes k}$ for $k$ distinguishable tensor slots.  For a
permutation $\pi\in S_k$, let $P_\pi$ permute those slots:
\begin{equation}
	\Pi_\pi(v_1\otimes\cdots\otimes v_k)
	:=v_{\pi(1)}\otimes\cdots\otimes v_{\pi(k)}.
	\label{eq:fermion-slot-permutation}
\end{equation}
Write $\operatorname{sgn}(\pi)$ for the sign of the permutation:
$\operatorname{sgn}(\pi)=+1$ if $\pi$ is even and
$\operatorname{sgn}(\pi)=-1$ if $\pi$ is odd.  Equivalently, if $\pi$ can be
written as $r$ pairwise swaps, then $\operatorname{sgn}(\pi)=(-1)^r$; this
number is independent of which decomposition into swaps is chosen.
The antisymmetrizer is the projection
\begin{equation}
	A_k:=\frac1{k!}\sum_{\pi\in S_k}\operatorname{sgn}(\pi)\Pi_\pi,
	\qquad A_k^2=A_k=A_k^\dagger,
	\label{eq:fermion-antisymmetrizer}
\end{equation}
and
\begin{equation}
	\wedge^k\C^m:=\operatorname{im}A_k\subseteq(\C^m)^{\otimes k}.
	\label{eq:fermion-exterior-as-image}
\end{equation}
Here $\operatorname{im}A_k$ means the image, or range, of the linear map
$A_k:(\C^m)^{\otimes k}\to(\C^m)^{\otimes k}$:
\begin{equation}
	\operatorname{im}A_k
	:=\{A_kx:\ x\in(\C^m)^{\otimes k}\}.
	\label{eq:fermion-image-definition}
\end{equation}
Thus $\wedge^k\C^m$ consists precisely of those tensors obtained by applying
the antisymmetrizing projection $A_k$ to arbitrary $k$-slot tensors.
This gives a concrete matrix representation: in the tensor-product basis
$e_{i_1}\otimes\cdots\otimes e_{i_k}$, each $P_\pi$ is an
$m^k\times m^k$ permutation matrix, and $A_k$ is an $m^k\times m^k$
Hermitian projection matrix.  Equivalently, its matrix entries are
\begin{equation}
	(A_k)_{\bm i,\bm j}
	=\frac1{k!}\sum_{\pi\in S_k}\operatorname{sgn}(\pi)
	\prod_{r=1}^k\delta_{i_r,j_{\pi(r)}}.
	\label{eq:fermion-antisymmetrizer-matrix}
\end{equation}
In the ordered mode basis, $\wedge^k\C^m$ has basis
\begin{equation}
	e_{j_1}\wedge\cdots\wedge e_{j_k},
	\qquad 1\le j_1<\cdots<j_k\le m,
	\label{eq:fermion-wedge-basis}
\end{equation}
with antisymmetry
$e_p\wedge e_q=-e_q\wedge e_p$ and $e_p\wedge e_p=0$.  With the normalized
convention used here,
\begin{equation}
	e_{j_1}\wedge\cdots\wedge e_{j_k}
	:=\frac1{\sqrt{k!}}\sum_{\pi\in S_k}\operatorname{sgn}(\pi)
	e_{j_{\pi(1)}}\otimes\cdots\otimes e_{j_{\pi(k)}}.
	\label{eq:fermion-normalized-wedge}
\end{equation}
This basis vector is the occupation-number state
\begin{equation}
	e_{j_1}\wedge\cdots\wedge e_{j_k}
	\longleftrightarrow
	a_{j_1}^\dagger\cdots a_{j_k}^\dagger|\mathrm{vac}\rangle
	=|n_1,\ldots,n_m\rangle,
	\label{eq:fermion-occupation-state}
\end{equation}
where $n_j=1$ exactly for $j\in\{j_1,\ldots,j_k\}$ and $n_j=0$ otherwise.
Thus, in a spatial site basis, this may be pictured as $k$ fermionic
excitations occupying $m$ possible locations, with at most one excitation in
each mode because $n_j\in\{0,1\}$.  More generally, however, a \emph{mode}
means a chosen one-particle degree of freedom: it may be a location, but it
may also be an orbital, spin-orbital, momentum mode, or other one-particle
basis state.  For example, two spin states at the same spatial site are two
distinct fermionic modes.

The full fermionic Fock space allows a superposition or mixture of different
particle numbers:
\begin{equation}
	\mathcal F_m=\bigoplus_{k=0}^m\wedge^k\C^m
	=\mathcal F_m^{\bar 0}\oplus\mathcal F_m^{\bar 1}.
	\label{eq:fermion-fock-decomposition}
\end{equation}
The upper limit is $m$ because, by $e_j\wedge e_j=0$, no mode can be occupied
twice.  The $k$th summand $\wedge^k\C^m$ has definite particle number $k$;
the parity superselection rule below is an additional rule about which
coherences between different $k$-sectors are physically allowed.
The even- and odd-particle-number subspaces are
explicitly
\begin{equation}
	\begin{aligned}
		\mathcal F_m^{\bar 0}
		&:=\bigoplus_{\substack{0\le k\le m\\ k\ {\rm even}}}\wedge^k\C^m,
		&
		\mathcal F_m^{\bar 1}
		&:=\bigoplus_{\substack{0\le k\le m\\ k\ {\rm odd}}}\wedge^k\C^m.
	\end{aligned}
	\label{eq:fermion-parity-sectors}
\end{equation}
Let
$N=\sum_{j=1}^m a_j^\dagger a_j$ be the number operator and let
\begin{equation}
	P:=(-1)^N
	\label{eq:fermion-parity-operator}
\end{equation}
be total fermion parity.  The parity superselection rule says that physical
states and effects are even, i.e. they commute with $P$; physical channels
are equivariant for parity conjugation and therefore preserve the even operator
subalgebra.  An equivariant channel need not preserve the even and odd Hilbert
subspaces separately (an odd unitary, for example, can exchange them).  This
is exactly the invariant world for the group
$G=\mathbb Z_2=\{0,1\}$ represented by
\begin{equation}
	U(0)=\id_{\mathcal F_m},\qquad U(1)=P.
	\label{eq:fermion-Z2-representation}
\end{equation}
Its unitary conjugation-channel representation is
\begin{equation}
	\mathcal U(\epsilon)(X)
	:=U(\epsilon)XU(\epsilon)^\dagger
	\qquad(\epsilon\in\mathbb Z_2).
	\label{eq:fermion-Z2-channel-representation}
\end{equation}
Indeed, \eqref{eq:invworldstate}--\eqref{eq:invworldchannel} become
\begin{equation}
	\mathcal U(1)(\rho)=\rho,\qquad
	\mathcal U(1)(E)=E,\qquad
	\Phi\circ\mathcal U(1)=\mathcal U(1)\circ\Phi.
	\label{eq:fermion-invariant-conditions}
\end{equation}
Equivalently, no physical state or measurement contains coherence between
$\mathcal F_m^{\bar 0}$ and $\mathcal F_m^{\bar 1}$, and dynamics is
equivariant for the parity action.  This is the standard parity-superselected
formulation of finite fermionic quantum theory: Wick, Wightman, and Wigner
introduced superselection rules as restrictions on possible measurements and
proved such restrictions for spinor fields \cite{WWW}; Bravyi and Kitaev use
local fermionic modes as elementary computational systems
\cite{BravyiKitaevFermions}; and D'Ariano, Manessi, Perinotti, and Tosini
identify the operational fermionic model's nonlocal-tomographic features as
direct consequences of parity superselection \cite{DMPT}.

\subsection{Graded locality in fermionic quantum theory}
\label{subsec:graded-locality-fermionic-quantum-theory}

For many mode regions, the phrase ``fermionic locality'' needs a little care.  We now spell
out the graded CAR locality structure and the relation to qubits; see
\cite[Secs.~2.2--3.2, Eq.~(88), and Sec.~7.1]{SzalayFermions}.

Here \emph{graded} is a technical term.  A $\mathbb Z_2$-grading assigns to
each homogeneous operator one of two grades,
$\bar0$ (even) or $\bar1$ (odd), and the grade of a product is the sum of the
grades modulo two.  In the present fermionic algebra, a monomial containing an
even number of creation and annihilation operators has grade $\bar0$, while
one containing an odd number has grade $\bar1$.  Intuitively, the grading is
bookkeeping for fermionic signs: passing one odd operator through another odd
operator contributes a minus sign, whereas passing an even operator through
anything contributes no sign.  The phrase \emph{graded locality} is used
because the commutation rule for operators in disjoint regions depends on
these grades, rather than all disjoint operators simply commuting.  The
following definitions make this statement precise.

\emph{Regional CAR algebras.}
Let $M:=\{1,\ldots,m\}$ be the set of mode labels and let $I\subseteq M$ be a
region, meaning a chosen subset of modes.  Its regional CAR algebra is the
unital $*$-subalgebra
\begin{equation}
	\mathcal A(I)
	:=\operatorname{Alg}^{*}\!\left(
	\id,\{a_j,a_j^\dagger:j\in I\}\right)
	\subseteq\mathcal B(\mathcal F_m).
	\label{eq:fermion-regional-CAR-algebra}
\end{equation}
Here $\mathcal B(\mathcal F_m)$ is the algebra of all linear operators on
$\mathcal F_m$, and $\operatorname{Alg}^{*}(S)$ means the smallest complex
linear subspace containing the set $S$ that is closed under operator
multiplication and adjoint.  Equivalently, $\mathcal A(I)$ consists of linear
combinations of products, or \emph{monomials}, in the $a_j$ and $a_j^\dagger$
with $j\in I$, together with the identity.  Thus localization refers to the
mode labels appearing in the CAR generators, not initially to a tensor factor
of a qubit Hilbert space.

Put
\begin{equation}
	N_I:=\sum_{j\in I}a_j^\dagger a_j,\qquad
	P_I:=(-1)^{N_I},\qquad
	\Theta_I(X):=P_I X P_I
	\quad (X\in\mathcal A(I)).
	\label{eq:fermion-local-parity-grading}
\end{equation}
$N_I$ counts the occupied modes in $I$, $P_I$ is the parity of that number,
and $\Theta_I$ is the corresponding parity automorphism of
$\mathcal A(I)$.  For $p\in\{0,1\}$, define the homogeneous parity subspace
\begin{equation}
	\mathcal A(I)^{\bar p}
	:=\{X\in\mathcal A(I):\Theta_I(X)=(-1)^pX\}.
	\label{eq:fermion-homogeneous-subspaces}
\end{equation}
An element of $\mathcal A(I)^{\bar0}$ is called \emph{even}, and an element of
$\mathcal A(I)^{\bar1}$ is called \emph{odd}.  A monomial is even or odd
according as it contains an even or odd total number of creation and
annihilation operators.  Every $X\in\mathcal A(I)$ has the unique
decomposition
\begin{equation}
	X=X_{\bar0}+X_{\bar1},\qquad
	X_{\bar p}:=\frac12\bigl(X+(-1)^p\Theta_I(X)\bigr)
	\in\mathcal A(I)^{\bar p}.
	\label{eq:fermion-parity-decomposition}
\end{equation}
Moreover, multiplication respects these two grades:
\begin{equation}
	\mathcal A(I)^{\bar p}\mathcal A(I)^{\bar q}
	\subseteq\mathcal A(I)^{\overline{p+q}},
	\qquad p+q\ \text{computed modulo }2.
	\label{eq:fermion-grading-product-rule}
\end{equation}
Equations \eqref{eq:fermion-parity-decomposition} and
\eqref{eq:fermion-grading-product-rule} are the precise statement that
$\mathcal A(I)$ is a $\mathbb Z_2$-graded algebra.

\emph{Graded locality.}
If $I,J\subseteq M$ are disjoint and
$X\in\mathcal A(I)^{\bar p}$ and
$Y\in\mathcal A(J)^{\bar q}$ are homogeneous, the CAR
\eqref{eq:fermion-CAR} give
\begin{equation}
	XY=(-1)^{pq}YX,\qquad p,q\in\{0,1\}.
	\label{eq:fermion-graded-commutation}
\end{equation}
The factor $(-1)^{pq}$ is called the \emph{Koszul sign} associated with
exchanging homogeneous elements of parity degrees $p$ and $q$.  It equals
$+1$ if at least one element is even and equals $-1$ precisely when both
elements are odd.  Thus interchanging two odd fermionic objects introduces a
minus sign, whereas every other interchange introduces no sign.
Indeed, for monomials, moving every CAR generator in $X$ past every generator
in $Y$ produces one minus sign for each crossing; the total sign is
$(-1)^{(\deg X)(\deg Y)}=(-1)^{pq}$.  Linearity then proves
\eqref{eq:fermion-graded-commutation} for arbitrary homogeneous elements.
Consequently, two odd operators in disjoint regions anticommute, whereas two
disjoint operators commute whenever at least one of them is even.  For
example, $a_i a_j=-a_j a_i$ for $i\ne j$, but the even number operator
$a_i^\dagger a_i$ commutes with both $a_j$ and $a_j^\dagger$.

Equation \eqref{eq:fermion-graded-commutation} is encoded algebraically by the
\emph{graded tensor product}, denoted $\widehat\otimes$.  On homogeneous
elements its multiplication and adjoint are defined by
\begin{equation}
	\begin{aligned}
		(X_{\bar p}\widehat\otimes Y_{\bar q})
		(X'_{\bar p'}\widehat\otimes Y'_{\bar q'})
		&:=(-1)^{qp'}
		(X_{\bar p}X'_{\bar p'})
		\widehat\otimes
		(Y_{\bar q}Y'_{\bar q'}),\\
		(X_{\bar p}\widehat\otimes Y_{\bar q})^\dagger
		&:=(-1)^{pq}
		X_{\bar p}^\dagger\widehat\otimes Y_{\bar q}^\dagger,
	\end{aligned}
	\label{eq:fermion-graded-tensor-product}
\end{equation}
and these rules are extended linearly.  The sign in the first line records
the passage of $Y_{\bar q}$ past $X'_{\bar p'}$.  With these definitions the
map
\begin{equation}
	\mathcal A(I)\widehat\otimes\mathcal A(J)
	\longrightarrow\mathcal A(I\cup J),
	\qquad X\widehat\otimes Y\longmapsto XY,
	\label{eq:fermion-graded-product-map}
\end{equation}
is a $*$-algebra isomorphism for disjoint finite mode sets $I$ and $J$.
This is the precise sense in which disjoint fermionic regions compose by a
graded, rather than an ordinary, tensor product.

\emph{Why the global parity representation factorizes.}
The graded composition law for the full CAR algebra does not prevent the
fermion-parity symmetry itself from factorizing.  Let
\begin{equation}
	M=I_1\mathbin{\dot\cup}\cdots\mathbin{\dot\cup}I_r,
	\qquad
	\mathfrak h_{I_\ell}:=\operatorname{span}\{e_j:j\in I_\ell\},
	\label{eq:fermion-mode-region-partition}
\end{equation}
be a partition of the modes.  Here $\mathbin{\dot\cup}$ denotes
\emph{disjoint union}.  The regional number and
parity operators were
defined in \eqref{eq:fermion-local-parity-grading} as
$N_{I_\ell}=\sum_{j\in I_\ell}a_j^\dagger a_j$ and
$P_{I_\ell}=(-1)^{N_{I_\ell}}$.  Since the regional number operators commute
and sum to the total number operator,
\begin{equation}
	[N_{I_\ell},N_{I_s}]=0,
	\qquad
	N=\sum_{\ell=1}^rN_{I_\ell},
	\label{eq:fermion-number-region-sum}
\end{equation}
one finds
\begin{equation}
	P=(-1)^N=e^{i\pi N}
	=\prod_{\ell=1}^r e^{i\pi N_{I_\ell}}
	=\prod_{\ell=1}^rP_{I_\ell}.
	\label{eq:fermion-parity-region-product}
\end{equation}

Equivalently, the one-particle decomposition
$\C^m=\bigoplus_{\ell=1}^r\mathfrak h_{I_\ell}$ induces the graded Fock-space
factorization
\begin{equation}
	\mathcal F(\C^m)
	\simeq
	\widehat\bigotimes_{\ell=1}^r\mathcal F(\mathfrak h_{I_\ell}),
	\qquad
	P\simeq\widehat\bigotimes_{\ell=1}^rP_{I_\ell}.
	\label{eq:fermion-graded-parity-factorization}
\end{equation}
Every $P_{I_\ell}$ is even: it is a function of the even operator
$N_{I_\ell}$ and satisfies
$\Theta_{I_\ell}(P_{I_\ell})=P_{I_\ell}$ by the definition of
$\Theta_{I_\ell}$ in \eqref{eq:fermion-local-parity-grading}.  Therefore no Koszul minus sign
appears when the factors in \eqref{eq:fermion-graded-parity-factorization}
are exchanged.  Fix the block order $I_1,\ldots,I_r$ and, inside each block,
the order inherited from $1<\cdots<m$.  If
$\mathbf n_{I_\ell}:=(n_j)_{j\in I_\ell}$ in that inherited order, define
the ordered Jordan--Wigner identification by
\begin{equation}
	W_{\mathsf P}:\mathcal F(\C^m)\longrightarrow
	\bigotimes_{\ell=1}^r\mathcal F(\mathfrak h_{I_\ell}),
	\qquad
	W_{\mathsf P}|n_1,\ldots,n_m\rangle
	:=\bigotimes_{\ell=1}^r|\mathbf n_{I_\ell}\rangle.
	\label{eq:fermion-block-JW-identification}
\end{equation}
It is unitary because it maps one orthonormal occupation-number basis
bijectively onto another.  Under this identification the graded Fock product
is represented on an ordinary tensor product, and one obtains the ordinary
matrix (Kronecker-product) factorization
\begin{equation}
	W_{\mathsf P}PW_{\mathsf P}^\dagger
	=\bigotimes_{\ell=1}^rP_{I_\ell}.
	\label{eq:fermion-ordinary-parity-factorization}
\end{equation}
In particular, for one mode in each block, the right-hand side is
$Z^{\otimes m}$, where $Z:=\operatorname{diag}(1,-1)$.

Thus the global $\mathbb Z_2$ representation is an on-site, factorized
representation.  For $\epsilon\in\mathbb Z_2=\{0,1\}$, define
\begin{equation}
	U_{I_\ell}(\epsilon):=P_{I_\ell}^{\epsilon};
	\qquad
	U(\epsilon):=P^\epsilon
	\simeq\widehat\bigotimes_{\ell=1}^rU_{I_\ell}(\epsilon),
	\label{eq:fermion-Z2-factorized-representation}
\end{equation}
and, after the ordered identification, the last graded tensor product is the
ordinary Kronecker product.  The symmetry is called \emph{global} because the
same group element $\epsilon$ acts on every region, not because
$U(\epsilon)$ is an entangling operator.  The same calculation gives the
factorization of number rotations,
\begin{equation}
	e^{-i\theta N}
	=\prod_{\ell=1}^re^{-i\theta N_{I_\ell}}
	\simeq\widehat\bigotimes_{\ell=1}^r e^{-i\theta N_{I_\ell}}.
	\label{eq:fermion-U1-factorized-representation}
\end{equation}

This statement must not be confused with ordinary locality of all fermionic
operators.  Odd regional operators still satisfy
\eqref{eq:fermion-graded-commutation} and acquire Jordan--Wigner strings; it
is the \emph{even parity representation} that factorizes without signs.  The
multipartite theorem \cref{thm:adaptive} applies directly when the chosen
superselection rule requires states and effects to be invariant under parity,
number rotations, or another block-preserving on-site symmetry; no action on
classical measurement labels is introduced.

\emph{What the Jordan--Wigner identification does.}
Choose the ordering $1<\cdots<m$ and define the unitary identification
\begin{equation}
	W:\mathcal F_m\longrightarrow(\C^2)^{\otimes m},
	\qquad
	W|n_1,\ldots,n_m\rangle
	:=|n_1\rangle\otimes\cdots\otimes|n_m\rangle.
	\label{eq:fermion-JW-unitary}
\end{equation}
In the one-qubit basis $\{|0\rangle,|1\rangle\}$, put
\begin{equation}
	Z:=|0\rangle\!\langle0|-|1\rangle\!\langle1|
	=\begin{pmatrix}1&0\\0&-1\end{pmatrix},\quad
	\sigma^-:=|0\rangle\!\langle1|
	=\begin{pmatrix}0&1\\0&0\end{pmatrix},\quad
	\sigma^+:=|1\rangle\!\langle0|
	=\begin{pmatrix}0&0\\1&0\end{pmatrix}.
	\label{eq:fermion-JW-one-qubit-matrices}
\end{equation}
Thus $\sigma^-$ changes an occupied qubit to an unoccupied qubit,
$\sigma^+$ does the reverse, and $Z$ supplies the fermionic sign.  The
Jordan--Wigner formulas are
\begin{equation}
	\begin{aligned}
		W a_j W^\dagger
		&=Z^{\otimes(j-1)}\otimes\sigma^-
		\otimes\id^{\otimes(m-j)},\\
		W a_j^\dagger W^\dagger
		&=Z^{\otimes(j-1)}\otimes\sigma^+
		\otimes\id^{\otimes(m-j)}.
	\end{aligned}
	\label{eq:fermion-JW-formulas}
\end{equation}
The preceding string of $Z$ operators is what makes generators belonging to
different modes anticommute.

For the complete mode set, the CAR representation is irreducible, and hence
\begin{equation}
	W\mathcal A(M)W^\dagger
	=\mathcal B\bigl((\C^2)^{\otimes m}\bigr)
	\simeq M_{2^m}(\C).
	\label{eq:fermion-full-algebra-qubit-isomorphism}
\end{equation}
This is exactly what ``the full fermionic operator algebra is qubit-like''
means: as an abstract finite-dimensional $*$-algebra, it is the same full
matrix algebra as that of $m$ qubits.  It does \emph{not} mean that regional
algebras have the ordinary qubit embedding.  Indeed,
\begin{equation}
	W\mathcal A(I)W^\dagger
	=\operatorname{Alg}^{*}\!\left(
	\id,\{
	Z^{\otimes(j-1)}\otimes\sigma^\pm
	\otimes\id^{\otimes(m-j)}:j\in I\}\right).
	\label{eq:fermion-JW-regional-algebra}
\end{equation}
For example, the odd operator of mode $2$ is represented as
$W(a_2+a_2^\dagger)W^\dagger=Z\otimes
(\sigma^-+\sigma^+)\otimes\id^{\otimes(m-2)}$; it has a $Z$ string on the
earlier mode and is therefore not an operator on the second qubit tensor
factor alone.  Changing the mode ordering moves these strings but does not
turn the regional CAR algebras for all possible mode subsets simultaneously
into the usual qubit-local tensor factors.

\emph{Effect of parity superselection on locality.}
A physical observable localized in $I$ is required to be locally even, hence
belongs to $\mathcal A(I)^{\bar0}$; analogously, a physical local channel
$\Phi_I$ is parity equivariant,
$\Phi_I\circ\Theta_I=\Theta_I\circ\Phi_I$.  It follows immediately from
\eqref{eq:fermion-graded-commutation} that physical observables in disjoint
regions commute.  For a partition $\mathsf P=\{I_1,\ldots,I_r\}$ of $M$, the
algebra generated by locally measurable observables is therefore
\begin{equation}
	\mathcal A_{\mathsf P}^{\mathrm{loc}}
	:=\operatorname{Alg}^{*}\!\left(
	\bigcup_{\ell=1}^r\mathcal A(I_\ell)^{\bar0}\right)
	\simeq
	\bigotimes_{\ell=1}^r\mathcal A(I_\ell)^{\bar0},
	\label{eq:fermion-locally-even-algebra}
\end{equation}
where the tensor product on the right is now the ordinary tensor product of
mutually commuting even algebras.

\emph{Local POVMs explicitly.}
Let $x$ be a classical measurement-setting label for region $I$, and let
$\Omega_x$ be its finite set of outcome labels.  A parity-respecting local
POVM for setting $x$ is a family of effects
$\{E_{a|x}:a\in\Omega_x\}$ satisfying
\begin{equation}
	E_{a|x}\in\mathcal A(I)^{\bar0},
	\qquad E_{a|x}\geq0,
	\qquad \sum_{a\in\Omega_x}E_{a|x}=\id.
	\label{eq:fermion-local-even-POVM}
\end{equation}
The first condition is the parity-superselection requirement.  In terms of
the regional parity operator, it says equivalently that every individual
outcome effect is invariant under the parity automorphism:
\begin{equation}
	\Theta_I(E_{a|x})=P_I E_{a|x}P_I=E_{a|x}
	\quad\Longleftrightarrow\quad
	[E_{a|x},P_I]=0.
	\label{eq:fermion-POVM-parity-invariance}
\end{equation}
Thus the superselection rule is imposed outcome by outcome, rather than only
on the sum of the POVM effects (whose sum is already the identity).

Now let $J$ be disjoint from $I$, and let
$\{F_{b|y}:b\in\Omega_y\}\subset\mathcal A(J)^{\bar0}$ be another local
parity-respecting POVM.  Applying
\eqref{eq:fermion-graded-commutation} with parity degrees $p=q=0$ gives
\begin{equation}
	E_{a|x}F_{b|y}
	=(-1)^{0\cdot0}F_{b|y}E_{a|x}
	=F_{b|y}E_{a|x},
	\qquad [E_{a|x},F_{b|y}]=0.
	\label{eq:fermion-disjoint-even-effects-commute}
\end{equation}
Their joint local measurement is therefore the POVM with effects
\begin{equation}
	G_{a,b|x,y}:=E_{a|x}F_{b|y}.
	\label{eq:fermion-product-POVM-effects}
\end{equation}
Indeed, the two positive factors commute, so functional calculus gives
$G_{a,b|x,y}=E_{a|x}^{1/2}F_{b|y}E_{a|x}^{1/2}\geq0$, and normalization follows
from
\begin{equation}
	\sum_{a\in\Omega_x}\sum_{b\in\Omega_y}G_{a,b|x,y}
	=\left(\sum_aE_{a|x}\right)
	\left(\sum_bF_{b|y}\right)
	=\id.
	\label{eq:fermion-product-POVM-normalization}
\end{equation}
For a global state $\rho$, the corresponding joint outcome statistics are
therefore
\begin{equation}
	p(a,b|x,y)=\TrX{}{\rho E_{a|x}F_{b|y}}.
	\label{eq:fermion-local-POVM-joint-statistics}
\end{equation}

The product in \eqref{eq:fermion-product-POVM-effects} is the intrinsic CAR
description.  Under the ordinary tensor-product isomorphism in
\eqref{eq:fermion-locally-even-algebra}, the multiplication map satisfies
\begin{equation}
	\mu_{I,J}:\mathcal A(I)^{\bar0}\otimes\mathcal A(J)^{\bar0}
	\longrightarrow
	\operatorname{Alg}^{*}\!\bigl(
	\mathcal A(I)^{\bar0}\cup\mathcal A(J)^{\bar0}\bigr),
	\qquad
	\mu_{I,J}(E_{a|x}\otimes F_{b|y})=E_{a|x}F_{b|y}.
	\label{eq:fermion-even-product-isomorphism}
\end{equation}
Here $\otimes$ is the ordinary tensor product and, after bases are chosen,
its matrix realization is the Kronecker product.  This conclusion concerns
the locally even observable algebras only: it does not turn odd CAR operators
into ordinary local tensor-factor operators, because those operators still
obey the graded commutation law and carry Jordan--Wigner strings.

There remains an important distinction between \emph{locally even} and
\emph{globally even}.  The global physical algebra
$\mathcal A(M)^{\bar0}$ also contains products
$XY$ with $X\in\mathcal A(I)^{\bar1}$ and
$Y\in\mathcal A(J)^{\bar1}$: each factor is odd, but their product is even.
Thus, for a partition with at least two nonempty parts, generally
\begin{equation}
	\mathcal A_{\mathsf P}^{\mathrm{loc}}
	\subsetneq\mathcal A(M)^{\bar0}.
	\label{eq:fermion-local-global-even-inclusion}
\end{equation}
Consequently, statistics of products of locally even measurements need not
determine all globally even correlations; this is the origin of the
nonlocal-tomographic behavior discussed in \cite{DMPT,SzalayFermions}.

We can now state the two differences from ordinary qubits precisely:
although \eqref{eq:fermion-full-algebra-qubit-isomorphism} identifies the
full matrix algebras, (i) the regional inclusions
$I\mapsto\mathcal A(I)$ obey graded commutation and acquire
Jordan--Wigner strings rather than the usual qubit tensor-factor inclusion,
and (ii) parity superselection restricts physical states and observables to
the even part and channels to the equivariant part.  The present
$\mathbb Z_2$ example captures (ii).  Forgetting the symmetry restriction
leaves every regional algebra and every grading unchanged, because it changes
no operator.

Ordinary tensor-factor locality should, however, be distinguished from CAR
locality.  For a fermionic application, the regional net
$I\mapsto\mathcal A(I)$ (or, equivalently, the graded tensor product
\eqref{eq:fermion-graded-tensor-product}) is additional trusted locality data.

\end{document}